\numberwithin{equation}{section}	 
\newcommand{\bk}{ { \bm k 	}}
\renewcommand{\P}{ {\bf P}  }
\newcommand{\Q}{ {\bf Q}  }
\newcommand{\M}{\mathcal M }
\newcommand{\calN}{\mathcal N}
\renewcommand{\d}{\mathrm{d}}							
\newcommand{\N}{\mathbb{N}} 		
\newcommand{\C}{\mathbb{C}}			 
\newcommand{\R}{ \mathbb R  } 			
\newcommand{\RE}{	\mathbb R }
\newcommand{\E}{\mathbb E}
\newcommand{\B}{  \mathbb B}
\newcommand{\I}{\mathcal I}
\newcommand{\V}{\mathbb V}
\newcommand{\F}{\mathscr{F}}
\newcommand{\h}{\t{h}}
\newcommand{\n}{|\hspace{-0.5mm}|\hspace{-0.5mm}|}
\newcommand{\Vu}{  \n V \n_\mu  	}
\renewcommand{\H}{\mathscr H }
\newcommand{\1}{\mathds{1}}
\renewcommand{\le}{\leqslant}
\renewcommand{\leq}{\leqslant}
\renewcommand{\ge}{\geqslant}
\renewcommand{\geq}{\geqslant}     
\newcommand{\vp}{\varphi}
\newcommand{\ve}{\varepsilon}
\renewcommand{\t}[1]{\textnormal{#1}}
\renewcommand{\[}{   [  }
\newcommand{\eff}{\t{eff}}
\newtheorem{condition}{Condition}
\newtheorem{theorem}{Theorem}[section]
\newtheorem{proposition}{Proposition}[section]
\newtheorem{corollary}{Corollary}[section]
\newtheorem{lemma}{Lemma}[section]
\newtheorem{definition}[theorem]{Definition}
\theoremstyle{remark}
\newtheorem{remark}{Remark}[section]
\theoremstyle{definition}
\let\underbrace\LaTeXunderbrace
\begin{document}

	\title[Tracer particle coupled to a scalar field]{Radiative corrections to the dynamics of a tracer particle coupled to a Bose scalar field}
	
	\author{Esteban C\'ardenas} 
	\address[Esteban C\'ardenas]{Department of Mathematics,
		University of Texas at Austin,
		2515 Speedway,
		Austin TX, 78712, USA}
	\email{eacardenas@utexas.edu}
	
	\author{David Mitrouskas}
	\address[David Mitrouskas]
	{Institute of Science and Technology Austria (ISTA), Am Campus 1, 3400 Klosterneuburg, Austria}
	 \email{mitrouskas@ist.ac.at}

	\frenchspacing
	
	\begin{abstract} We consider a tracer particle coupled to a Bose scalar field and study the regime where the field's propagation speed approaches infinity. For initial states devoid of field excitations, we introduce an effective approximation of the time-evolved wave function and prove its validity in Hilbert space norm. In this approximation, the field remains in the vacuum state while the tracer particle propagates with a modified dispersion relation. Physically, the new dispersion relation can be understood as  the effect of radiative corrections due to interactions with virtual bosons. Mathematically, it is defined as the solution of a self-consistent equation, whose form depends on the relevant time scale. 
\end{abstract}
	
	\maketitle

	{\hypersetup{linkcolor=black}
		\tableofcontents}

	\section{Introduction}
		
The study of particles interacting with quantum fields represents a universal challenge across various physics disciplines, ranging from condensed matter systems to fundamental theories like QED. Understanding the behavior of such systems is often approached through the lens of effective properties of independent particles, including effective dispersion relations, mediated interactions and finite lifetimes.

\vspace{2mm}

Such emergent properties can be categorized into two classes based on their association with real or virtual excitations of the field.  Real excitations, exemplified by phenomena like the polaron effect, involve the tangible presence of field excitations surrounding the particle. In the case of the polaron, the particle carries a cloud of real excitations, such as phonons in a crystal lattice, resulting in a reduction of its mobility. Conversely, virtual excitations, as observed in phenomena like the Lamb shift within QED, involve transient fluctuations in the field without the presence of physical excitations. In the Lamb shift, the creation and absorption of virtual photons induces subtle but measurable shifts in the energy levels of an atom. In the physics literature, effects related to virtual excitations are often referred to as radiative corrections.

\vspace{2mm}

In this work, we consider a Nelson type quantum field theory in three dimensions, where a tracer particle (impurity, electron, etc.) is coupled to a Bose scalar field with large propagation speed. The purpose of our analysis is to understand the impact of radiative corrections on the propagation of the particle. To this end, we study the dynamics of initial states with zero field excitations and employ an effective approximation where the field remains in the initial vacuum state and the particle evolves independently but with a modified dispersion relation. The precise form of this effective dispersion relation depends on the relevant time scale of interest. Our main result provides a rigorous norm estimate for the difference between the original dynamics and the effective evolution. We note that our approximation maintains unitarity, thereby precluding any observed decay effects within the considered time scale.

\vspace{2mm}

Let us now turn to the the mathematical model that we consider. 
The Hilbert space of  the system  consists of 
	the tensor product 
	\begin{equation}
		\textstyle 
		\mathscr{H} : =  L_X^2 \otimes \mathscr{F}
		\qquad 
		\t{with}
		\qquad 
		\mathscr{F} 
		: = 
		\C \oplus \bigoplus\limits_{ n=1}^{\infty} \bigotimes\limits_{\rm sym}^n  L^2(\RE^3) \ . 
	\end{equation}
	Here, $L_X^2 \equiv L^2(\RE^3,dX)$ is the 
	state space for the tracer particle, 
	and $ \mathscr F$ denotes the bosonic Fock space. As usual, 	 we equip $\mathscr{F}$
	   with  standard 
	creation and annihilation operators 
	that satisfy the Canonical Commutation Relations (CCR) in momentum space 
	\begin{equation}
		[	 b_k  , b_\ell^*	] = \delta( k - \ell  ) 
		\qquad
		\t{and}
		\qquad 
		[b_k , b_\ell] = [b_k^* , b_\ell^*]=0  
	\end{equation}
for any $ k , \ell \in \RE^3$, and we denote by $\Omega = (1, {\bf 0}) \in \mathscr{F}$
	the vacuum vector. 
	On the Hilbert space $\H$
	we consider  the dynamics that is generated 
	 by the following Hamiltonian  
	\begin{equation} \label{eq:Ham}
H 
  \  : =  \ 
 \,   - \Delta_X
 \, +  \, 
\mu   \, 
\int_{\RE^{3}} \omega (k) b_k^* b_k \d k 
 \ +  \  \mu^{1/2}  \,  
\int_{\RE^{3}}
	V(k) 
\big( 
e^{i k \cdot X}  
 \, b_k + 
e^{ - i k \cdot X}  
\, b_{k}^* 
\big)   \d k  
	\end{equation}
and its dependence with the large parameter 
\begin{align}
\mu \gg 1.
\end{align}
 Here, the first two terms
correspond to the kinetic energy of the
tracer particle and the field energy, respectively, 
and  the  third term corresponds to their interaction. 
We refer to  $ \omega(k)$ as  the  dispersion relation of the field, 
	and $V(k)$  as the  interaction potential, also called the form factor.
	Both are       real-valued and  rotationally symmetric. 

\vspace{2mm}
 	
	Let us note that the described model is translation invariant, which will be important for our analysis. 
	Throughout this article, we use the following notation
	to denote the operators of momentum
	\begin{equation}
		p : = - i \nabla_X
		\qquad 
		\t{and}
		\qquad 
		P : = 
		\int_{\RE^3 } k \, b_k^* b_k \d k \   \ . 
	\end{equation}
	In particular, translation invariance implies that the total momentum is a conserved quantity, i.e.
	$[H , p +  P ]=0$.  
 Unless confusion  arises, we also use the letter 
		$ p \in \RE^3$ to denote the 
		  Fourier variable  in momentum representation
		  	  $\hat \vp(p) \equiv  (2\pi)^{-3/2  }  \int_{\RE^3}  e^{-i p \cdot X   } \vp(X)  \d X$.
	Finally,  we   denote by 
	$
	\calN  : = \int_{\RE^3 } 
	b_k^* b_k \d k   
	$
	the number operator associated to the boson field.

	\vspace{3mm}
	
We study the  dynamics of the wave function $\Psi (t) = e^{-itH}\Psi_0$ 
for  initial states
	of the form $\Psi_0 = \vp \otimes \Omega$, where $\vp	\in L_X^2$ is suitably localized in momentum space. 
	For concreteness, we assume   that the dispersion is given by 
	\begin{equation}\label{def:dispersion}
		\omega(k) =    \sqrt{  k^2 + m^2}    \ , 
	\end{equation}
where  $m\ge 0$ represents a possible mass term.
	We say that the field is massless if $m=0$, and  say it is massive otherwise.
	As for the interaction, the main example  of physical interest that we keep in mind  corresponds to the Nelson model with ultraviolet cut-off. More precisely,
	\begin{equation}\label{eq:Nelson:form:factor}
		V(k)  =  
		\frac{ \1 (|k|\leq \Lambda )}{	 \sqrt{\omega(k)}	}
	\end{equation}
	for some  positive parameter $\Lambda$.

\vspace{2mm}

To our knowledge, the described scaling for $H$ with $\mu\gg1$ was introduced by Davies \cite{Davies}, who interprets $\mu^{-1}H$ as a model for a heavy tracer particle weakly coupled to the Bose field. Hence, this choice of scaling is sometimes referred to as a \textit{weak-coupling limit}. In \cite{Hiroshima,Hiroshima2}, Hiroshima studies the same scaling for the Nelson model, but with the cutoff $\Lambda(\mu) \to \infty$ removed as $\mu\to \infty$. For a comparison of these works with our results, see Section \ref{subsection:comparison}. In \cite[Section 6.3]{Stefan}, Teufel explains that $H$ can be viewed also as the canonical quantization of a classical system for a particle coupled to a scalar field with propagation speed tending towards infinity. Thus, we think of the boson field as the fast subsystem relative to the tracer particle. We are interested in understanding the effective dynamics that such system may generate.

	\subsection{Description of main results}
	Let us informally describe our main results. 
	These are classified
	according to the   absence
	or existence
	of  a mass term. 
 In order to keep  the following exposition simple,    
	we assume here that the form factor is given by the  Nelson model \eqref{eq:Nelson:form:factor}. 
	
\vspace{2.5mm}

	\textit{Massless fields}.  
In the present case, 
we prove in Theorem \ref{thm 1}  
	  that 
	there is an effective $\mu$-dependent
	Hamiltonian $\h_{g}(p)$ on $L_X^2$
	such that 
	\begin{equation}
		\Psi(t)    \, \approx  \, 
		e^{ - it \h_{g}(p)} \vp \otimes \Omega   
		  \qquad
		\t{as}
		\qquad 
		\mu \rightarrow \infty 
	\end{equation}
for all times $|t| \ll \mu^{1/2} ( \log \mu )^{-1}$. The approximation holds in the norm of $\mathscr{H}$ with explicit error control. 
The effective Hamiltonian  can be written as  a perturbation of the free kinetic energy $ \h_g(p)  = p^2 - g(p)$, 
	and is defined in terms of a function $g(p)$ that solves the following nonlinear equation
	\begin{equation}
		g  (p) 
		=    
		\int_{\RE^3} \frac{\mu \,  |V( k ) |^2\, \d k  	}{ 
			\mu\, \omega (k) + (p-k)^2 - p^2 + g(p)  	 } \  .
	\end{equation}
As we will see,   $g(p)$ 
is an  $O(1)$ correction, caused by the emission and absorption of virtual bosons.

	\vspace{2.5mm}
	
	\textit{Massive fields}.  
	In this situation, the   previous  approximation can 
	be extended to longer time scales by 
	introducing  higher-order radiative corrections. For concreteness, assume $m\ge 1$ for the moment.
We prove in Theorem \ref{thm 2}  that   for all $ n \in \N$ odd there is an effective Hamiltonian $\h_{g_n}(p)$
	such that  
	\begin{equation}
		\Psi(t)   
		  \, \approx   \, 
		e^{ - it \h_{g_n }	(p)	} \vp \otimes \Omega   
	  \qquad
\t{as}
\qquad 
\mu \rightarrow \infty 
	\end{equation}
 for all times   $|t|  \ll \mu^{(n+1)/2}  (\log (\mu))^{ - (n+2)/2}  $.
	The approximation also holds in the norm of $\mathscr{H}$ 
	with an explicit error.  
	The Hamiltonian is defined analogously $\h_{g_n} (p) = p^2 - g_n(p)$, 
	but the  nonlinear expression that defines  $g_n(p)$
	is more involved since 
	 it  contains higher-order   corrections. 
	See Definition \ref{definition gn}. 
		Let us stress that the addition of higher-order corrections
	does not reduce  the distance between
	the original and the effective dynamics, 
	which is always at least of order $O(\mu^{-1/2})$.
	Rather,  	it has the effect of extending    the 
	\textit{time scale}
	for which the approximation is valid.

	\subsection{Discussion}
 Let us briefly comment on the novelties of the present paper.

First, 
		to the best of our knowledge, 
	the derivation of  the   generator
	$g(p)$ 
	as an effective dynamics of  Nelson type models  \eqref{eq:Ham} is new.
The latter is determined through a \textit{nonlinear equation}, 
 and  is reminiscent of   Dyson-type equations
in  quantum field theory that  appear due to one-loop 
	propagator renormalization (resummation of self-energy diagrams); see e.g \cite{FetterWalecka}. 
From this point of view, our approach
	 is  different than previous works  
	that  have analyzed similar problems in a linear fashion.
	
	Second, 	our proof is based on a systematic expansion of the dynamics and suitable error estimates. The expansion utilizes an 	 \textit{integration-by-parts formula}, which
	  exploits  the  phase cancellations due
	 to  fast oscillatory phases in the limit $\mu \rightarrow \infty$. The effective generator $g(p)$ arises from those terms in the expansion that lack such phase cancellations. Combined with the nonlinear aspect of our approach, 
we are able to  gain  control over longer time scales  (and in a fully quantitative setting) in comparison with previous    results.

We  extend the present  discussion 
and compare our results with
related works in Subsection \ref{subsection:comparison}. 

\vspace{2mm}

\textbf{Organization of the article}.\
In Section \ref{section main results} we 
state our main results, Theorems \ref{thm 1} and \ref{thm 2}. 
 In Section \ref{section prelim}
we introduce 
a
suitable 
   integration-by-parts formula 
that allows us to compare the microscopic dynamics with the effective evolution. 
In Section \ref{section massless} we prove Theorem \ref{thm 1} for massless fields, and in Section \ref{section g} we analyze the effective generator in the massless case.
In Section \ref{section massive} we prove Theorem \ref{thm 2} for massive fields, and in
 Section \ref{section resolvent estim} we establish some necessary resolvent estimates.

	\section{Main results}
	\label{section main results}
	In this section, we give the statements of our main results. 
	These correspond to Theorem \ref{thm 1} and  \ref{thm 2}. 
	First, let us give the  precise conditions  that we consider
	on the initial datum $\vp$ and the 
	form factor  $V$.
	
	\begin{condition}
		\label{condition 1}
		The initial datum  is a tensor product 
		$\Psi_0 = \vp\otimes \Omega$ with $\|\varphi\|_{L^2} = 1 $ and $\Omega $ the Fock space vacuum.
		We assume  that there
		exists $ 0 < \ve <  \frac{1}{2}$
		such that  
		\begin{equation}
			\vp  =  \1 (| p | \leq \ve \mu		) \varphi \ . 
		\end{equation}
	\end{condition}
	\begin{condition}
		\label{condition 2}
		The form  factor  $  V: \RE^3 \rightarrow \RE$
		is rotationally symmetric,  non-zero, 
		and we assume that 
		both
		$\|	   V\|_{L^2}$
		and 		$\| |k|^{- \frac{1}{2}}   V		\|_{L^2}$
		are finite. 
	\end{condition}

	\textit{Convention}.
	We say $C >0$ is a \textit{constant} if it
	is a positive number, depending only on $\ve $, $\|	V\|_{L^2}$
	and   $ \| |k|^{-\frac{1}{2}}	V\|_{L^2}$.
	Throughout proofs, its value may change from line to line.

	\vspace{3mm}
	
	Under these conditions, 
	the Hamiltonian $H$ is self-adjoint in its natural domain.
	Indeed, let us  denote here and in the rest of the article
	\begin{equation}
		T  = \mu  \int_{\RE^{3}}	\omega (k) b_k^*b_k \d k 
		\qquad 
		\t{ and }
		\qquad 
		\V 
		\label{V term}
		 	=  
		\mu^{1/2} \int_{\RE^3}    
		V(k) 
\big( 
		e^{i k \cdot X}  
		b_k + 
		e^{ - i k \cdot X}  
		b_{k}^* 
\big) 
 \d k    \ . 
	\end{equation}
Then, it is 
	straightforward to verify that $\V$ is 
	relatively bounded with respect to $T$, with relative bound less than one.
	Thus, the Kato-Rellich theorem implies that   $H$ is self-adjoint 
	on  $D(p^2)\cap D(T)$. 
	Consequently, the  original dynamics
	\begin{equation}
		\Psi(t) = 
		e^{ - i tH } \Psi_0
		\qquad 
		\t{with}
		\qquad 
		\Psi_0 =	\vp \otimes \Omega   
	\end{equation}
	is well-defined for all $t \in \RE$. 
	
	\vspace{2mm}
	
	On the other hand,  given a  bounded  measurable  function $  g:\RE^3 \rightarrow\RE $ 
	let us   define the effective Hamiltonian as 
	\begin{equation}
		\h_g (p)  : = p^2 - g(p)  
	\end{equation}
	with $p=-i\nabla_X$. 	  We refer to $\h_g(p)$ as the \textit{effective dispersion relation} or the \textit{effective generator}.
	Thanks to the boundedness of $g$, 
	the effective generator is again self-adjoint on $D(p^2)$. Thus,  the effective dynamics 
	\begin{equation}
		\Psi_{\eff}(t) : = 
		e^{	- i t \h_g (p) }\vp \otimes \Omega 
	\end{equation}
	is well-defined for all $ t\in \RE$. 
	As we have already explained in the introductory section, the function $g(p)$
	is determined by a nonlinear self-consistent equation. It depends on the choice of $\omega$ and $V$ and, in particular, also
	on the time scale that one is interested in. 
	The available time scales depend on the absence or presence of 
	a mass term  in the dispersion relation of the field.

	\begin{remark}
Let us explain the heuristics behind Condition \ref{condition 1}: For  an excitation of momentum $ k \in \RE^3 $  to be   
emitted  from  an electron  with momentum $ p\in \RE^3$, 
conservation of kinetic  energy reads
\begin{equation}
\label{energy:conservation}
 \mu 	\omega(k) =  p^2 - (p-k)^2  \   .
\end{equation} 
The collection $(p,k ) \in \RE^6$ that satisfy  \eqref{energy:conservation}
are sometimes  referred to as   resonant sets. 
In particular, under Condition \ref{condition 1}
  these sets are excluded.
Thus,  we can interpret the assumption $|p| \leq \ve \mu $
as an energetic constraint. Namely, that the electron lacks sufficient energy to create real excitations. On the level of the effective dynamics, this is manifest in the fact that $\calN \Psi_{\eff}(t)=0$. Our analysis will show that real excitations are in fact suppressed in the original time evolution for large $\mu$, in the sense that $\Psi(t) \approx \Psi_\eff(t)$. 
On the other hand, 
thanks to the Heisenberg uncertainty principle, energy conservation can be
violated by the fluctuations of the field.
These are called
   {virtual} excitations,
   and they  give rise  to non-trivial   modifications of  the dispersion of the electron.
   These effects  are 
    described by the function $g(p)$. 
	\end{remark}

	\subsection{Massless fields}
	Throughout this subsection, we assume that the boson field  is massless. Namely,  that 
	\begin{equation}
		\label{not gap}
		\omega(k) =  
		 |k|
	\end{equation}
	for all $ k \in \RE^3$.  
	The main result in the present  case  is the estimate contained  in  Theorem \ref{thm 1}, stated below.
	
	\vspace{2mm}
The precise definition of the effective Hamiltonian is as follows.

	\begin{definition}
		\label{def g1} We define  
		the function 
		$g : \RE^3 
		\rightarrow ( 0 , \infty)  $ 
		as the solution of the equation
		\begin{equation}
			\label{def g}
			g  (p) 
			= 
		 	\int_{\RE^3} \frac{  \mu  \,  | V(k)|^2  \d k 	}{ 	
				\mu \,  \omega(k) + (p-k)^2 - p^2 + g(p)  
					 }  
		\end{equation}
		for $| p |	< \frac{1}{2}  \mu  $  and $ g \equiv 0$ otherwise.
	\end{definition}

	Let us    argue  that 
		the solution to \eqref{def g} 
	exists,  is unique and of class $C^1$ on $ \{   |p| < \frac{1}{2}\mu			\}$. 
	
\begin{proof}[Sketch of proof]
	Fix $0< \ve_0<1/2$. Then, for   $| p| \leq \ve_0 \mu $   we  define the following auxiliary function  
	 \begin{equation}
	 	\textstyle 
	 	\label{F1}
	 	F( p, x) : =   
	 	\mu 
	 	\int_{\RE^3 }  
	 	| V (k) |^2 
	 	\big(
	 	\mu \, 	\omega(k)   - 2 p\cdot k + k^2  + x 
	 	\big)^{-1 } \d k  \ , 
	 	\qquad x>0 \ . 
	 \end{equation}
	 First, observe    that
	 $	 \mu \omega(k)   -  2 p\cdot k + k^2  + x   \geq (  1 - 2 \ve_0) \mu |k |  $.
	 Hence, $F (p,x) \leq    C $
	 for a constant $C>0$
	 and  so $F$ is well-defined.
	 Next,
	 note        $   F (p,0)  >0  $, 		$x\mapsto F(p,x)$  is $C^1$, 
	 and $\lim_{x \rightarrow \infty} F(p,x )= 0$. 
	 Thus, the intermediate value theorem implies 
	 there exists  $x \in (0, \infty )$  such that   $
	 x - F(p,x) = 0 $.
	 Further, note    
	 $  \partial_x F <0$. 
	 Thus,   $ x\mapsto F (p ,x )  $ is strictly decreasing and the fixed point is unique.
	 We then set $g(p) =   x . $ 
	 Finally,  since  $x \mapsto F(p,x) $ is $C^1$ on  $ |p|< \ve_0\mu$ 
	 the implicit function theorem implies that $p\mapsto g(p)$ is $C^1$ on $\{ |p |  < \ve_0 \mu		\}$.
	 The claim now  follows from taking $\ve_0$ arbitrarily close to $1/2$. 
\end{proof}

	\vspace{1mm}

To state our main results, we need to introduce the following 
 $\mu$-dependent norm
	\begin{equation} 
\label{nu norm} 
		\n	   V		 \n_{\mu} 
		: =    
		 \bigg\|       \frac{V}{\omega_\mu }  \bigg\|_{L^2}  
		\qquad
		\t{with}
		\qquad
		\omega_\mu  (k)  : =  \omega(k) + \mu^{-1 }   \ . 
	\end{equation}
As we shall see below, the  $\mu$-norm \eqref{nu norm} appears naturally 
thanks to   the presence of the generator $g(p)$ in certain denominators. It can be understood  as an infrared regularization  
of the norm
$\|	 \omega^{-1} V	\|_{L^2}$. In particular,
 while for the massless Nelson model \eqref{eq:Nelson:form:factor} the latter norm is infinite, the $\mu$-norm is finite and grows logarithmically with $\mu$:
	\begin{align}
	\label{mu norm nelson}
	\n		V \n_\mu^2 =    \int_{	 | k | \leq \Lambda	 } 		 \frac{\d k }{ |k| (  	 |  k| +  \mu^{-1} )^2 	}  
	= 
	4 \pi \bigg(
	\log \big( \mu \Lambda +1 \big) -  \frac{ \mu \Lambda  }{  \mu  \Lambda + 1 }   \bigg)  .
\end{align}

\vspace{2mm}
		
	We are now ready to state our first main result.

	\begin{theorem} 
		\label{thm 1}
		Let $\omega(k) = |k|$ 
		and assume  
		that $\Psi_0$ and $V $
		satisfy Condition \ref{condition 1} and \ref{condition 2}, respectively. 
		 Then, there exists a constant $C>0$ such that 
		\begin{align}\label{thm:1:bound}
			\|	
			\Psi(t)	-
			\Psi_{\eff} (t)  
			\|_{\mathscr{H}}
			& 	\leq 
			C   \bigg(	 
			\frac{	\n    	  V	\n_\mu    }{\mu^{1/2}		}	 + 
			|t|  \, 
			\frac{	\n    	V	\n_\mu^2   }{\mu^{1/2} }	 
			\bigg) 
		\end{align}
		for all $t \in \RE$ and all   $\mu > 0 $ large enough. 
	\end{theorem}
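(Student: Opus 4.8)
The plan is to prove Theorem~\ref{thm 1} by a Duhamel expansion followed by \emph{two} applications of the integration-by-parts formula of Section~\ref{section prelim}; the nonlinear equation \eqref{def g} enters precisely through the requirement that the leading self-energy contribution cancel exactly in the first of these. Because $H$ is translation invariant I would first decompose it over the total momentum $q=p+P$, writing $H=\int^{\oplus}H_q\,\d q$ with
\begin{equation*}
H_q \;=\; (q-P)^2 \;+\; \mu\!\int_{\R^3}\!\omega(k)\,b_k^*b_k\,\d k \;+\; \mu^{1/2}\phi(V),\qquad \phi(V):=\int_{\R^3}\!V(k)\big(b_k+b_k^*\big)\,\d k,
\end{equation*}
acting on $\F$; under this decomposition $\h_g(p)$ becomes the scalar $q^2-g(q)$ and $\Psi_0=\vp\otimes\Omega$ has fibers $\hat\vp(q)\,\Omega$, so that, by Condition~\ref{condition 1},
\begin{equation*}
\|\Psi(t)-\Psi_{\eff}(t)\|_{\H}^2 \;=\; \int_{|q|\le\ve\mu} |\hat\vp(q)|^2\,\big\| e^{-itK_q}\Omega-\Omega\big\|_{\F}^2\,\d q,\qquad K_q:=H_q-q^2+g(q),
\end{equation*}
and it suffices to estimate $\|e^{-itK_q}\Omega-\Omega\|_{\F}$ uniformly for $|q|\le\ve\mu$. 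Splitting $K_q=A_q+\mu^{1/2}\phi(V)$ with $A_q:=(q-P)^2-q^2+g(q)+\mu\int\omega\,b_k^*b_k\,\d k$, one checks that $A_q$ acts on the one- and two-boson sectors as multiplication by $D(q,k):=\mu\omega(k)+(q-k)^2-q^2+g(q)$ and by $D_2(q,k,\ell):=\mu\omega(k)+\mu\omega(\ell)+(q-k-\ell)^2-q^2+g(q)$, respectively. The hypothesis $\ve<\tfrac12$, together with the uniform lower bound $g(q)\ge c_0>0$ on $\{|q|\le\ve\mu\}$ from Section~\ref{section g}, then yields $D(q,k)\ge c\,\mu\,\omega_\mu(k)$ and $D_2(q,k,\ell)\ge c\,\mu\,(\omega_\mu(k)+\omega_\mu(\ell))$ uniformly in $q$, with $\omega_\mu$ as in \eqref{nu norm}; this is what converts the infrared-sensitive integrals below into the finite quantity $\n V\n_\mu$, even though $\|\omega^{-1}V\|_{L^2}$ diverges for the form factor \eqref{eq:Nelson:form:factor}.

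\emph{First integration by parts.} Writing $a^{(*)}(f):=\int f(k)\,b_k^{(*)}\,\d k$, we have $K_q\Omega=g(q)\Omega+\mu^{1/2}a^*(V)\Omega$ and $\mu^{1/2}a^*(V)\Omega=A_q\chi$ for the one-boson cloud $\chi:=\mu^{1/2}a^*\!\big(V/D(q,\cdot)\big)\Omega$. Hence $K_q\chi=\mu^{1/2}a^*(V)\Omega+\mu^{1/2}a(V)\chi+\mu^{1/2}a^*(V)\chi$, and the crucial point is that the annihilation term reproduces \emph{exactly} the self-energy,
\begin{equation*}
\mu^{1/2}a(V)\chi \;=\; \mu\int_{\R^3}\frac{|V(k)|^2}{D(q,k)}\,\d k\;\Omega \;=\; g(q)\,\Omega,
\end{equation*}
by \eqref{def g}. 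Thus $K_q\Omega=K_q\chi-\xi$ with the two-boson remainder $\xi:=\mu\,a^*(V)\,a^*\!\big(V/D(q,\cdot)\big)\Omega$, and inserting this into $e^{-itK_q}\Omega-\Omega=-i\int_0^t e^{-i(t-s)K_q}K_q\Omega\,\d s$ and integrating by parts in $s$ (via $e^{-i(t-s)K_q}K_q\chi=-i\,\partial_s e^{-i(t-s)K_q}\chi$) gives $e^{-itK_q}\Omega-\Omega=(e^{-itK_q}\chi-\chi)+i\int_0^t e^{-i(t-s)K_q}\xi\,\d s$.

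\emph{Second integration by parts and estimates.} Since $A_q$ acts by $D_2$ on two-boson states, $\xi=A_q\chi_2$ with $\chi_2$ the two-boson vector of coefficient $\mu\,V(k)V(\ell)\big(D(q,k)D_2(q,k,\ell)\big)^{-1}$; repeating the manoeuvre yields $i\int_0^t e^{-i(t-s)K_q}\xi\,\d s=(\chi_2-e^{-itK_q}\chi_2)-i\int_0^t e^{-i(t-s)K_q}\mu^{1/2}\phi(V)\chi_2\,\d s$, so that
\begin{equation*}
e^{-itK_q}\Omega-\Omega \;=\; \big(e^{-itK_q}\chi-\chi\big)+\big(\chi_2-e^{-itK_q}\chi_2\big)-i\int_0^t e^{-i(t-s)K_q}\mu^{1/2}\phi(V)\chi_2\,\d s .
\end{equation*}
It then remains to bound the three pieces, using unitarity of $e^{-i(t-s)K_q}$. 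The lower bounds for $D$ and $D_2$ together with Condition~\ref{condition 2} give $\|\chi\|_{\F}\le C\mu^{-1/2}\n V\n_\mu$ and $\|\chi_2\|_{\F}\le C\mu^{-1}\n V\n_\mu^2$, while the relative bound of $\phi(V)$ with respect to $(\calN+1)^{1/2}$, applied on the two-boson sector, gives $\|\mu^{1/2}\phi(V)\chi_2\|_{\F}\le C\mu^{1/2}\|\chi_2\|_{\F}\le C\mu^{-1/2}\n V\n_\mu^2$. Collecting these (and absorbing $\mu^{-1}\n V\n_\mu^2\le\mu^{-1/2}\n V\n_\mu$ for large $\mu$) gives $\|e^{-itK_q}\Omega-\Omega\|_{\F}\le C(\mu^{-1/2}\n V\n_\mu+|t|\,\mu^{-1/2}\n V\n_\mu^2)$ uniformly on $\{|q|\le\ve\mu\}$, and \eqref{thm:1:bound} follows from the fiber identity and $\|\hat\vp\|_{L^2}=1$.

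\emph{Main obstacle.} The scheme itself is short; its two delicate ingredients are, first, the exact self-energy cancellation $\mu^{1/2}a(V)\chi=g(q)\Omega$ — it is this, with $g(q)$ itself sitting inside $D(q,k)$, that pins down the self-consistent equation \eqref{def g} — and second, the harder point, verifying that \emph{two} integrations by parts already suffice, i.e.\ that the last unresolved oscillatory integral, bounded only crudely by $|t|\,\|\mu^{1/2}\phi(V)\chi_2\|_{\F}$, is already of the claimed size $O(|t|\,\mu^{-1/2}\n V\n_\mu^2)$. This rests entirely on the uniform denominator bounds $D\gtrsim\mu\omega_\mu$, $D_2\gtrsim\mu\omega_\mu$, hence on the positivity of $g$ away from the resonant set (Section~\ref{section g}). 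For massless fields $\n V\n_\mu^2\sim\log\mu$ by \eqref{mu norm nelson}, which is why the estimate is informative only for $|t|\ll\mu^{1/2}(\log\mu)^{-1}$; pushing to longer times would require extracting the resonant part of $\mu^{1/2}\phi(V)\chi_2$ as well, i.e.\ the higher-order radiative correction treated in the massive case.
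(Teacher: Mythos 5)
Your proposal is correct and, at its core, it is the paper's own argument: a Duhamel expansion, two successive integrations by parts in time, exact cancellation of the vacuum-projected self-energy through the nonlinear equation \eqref{def g}, the denominator lower bounds $D\ge c\,\mu\,\omega_\mu$ coming from $|p|\le\ve\mu<\mu/2$ and $g\ge c_0>0$ (Lemma \ref{lemma g}), number estimates for the one- and two-boson clouds, and the crude bound $|t|\,\mu^{1/2}\|V\|_{L^2}$ on the last unresolved oscillatory integral. The only genuine difference is organizational: you pass to the direct-integral (Lee--Low--Pines) decomposition over the total momentum $q=p+P$ and run the whole scheme fiberwise on Fock space, with explicit cloud vectors $\chi$, $\chi_2$, whereas the paper works on $L^2_X\otimes\F$ with the operators $\I(t)$, $\B(t)$ and the resolvent $\R=\Q_\Omega(\h(p+P)-p^2-T)^{-1}\Q_\Omega$, commuting the factors $e^{-ikX}b_k^*$ through $\R$ (Lemma \ref{lemma P}, Propositions \ref{prop B} and \ref{prop Q}, Remark \ref{remark obs}). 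Your identity $\mu^{1/2}a(V)\chi=g(q)\Omega$ is exactly Lemma \ref{lemma P} in fiber form, and your three error terms correspond term by term to the paper's boundary and projection terms. What the fiberwise packaging buys is that the resolvents become explicit scalar multiplication operators $D(q,k)^{-1}$, $D_2(q,k,\ell)^{-1}$, so the domain discussion for the unbounded $\R$ essentially disappears; what it loses is that the operator-level formula of Proposition \ref{prop integration by parts} is exactly what the paper iterates $n$ times for the massive case (Theorem \ref{thm 2}), so the paper's formulation generalizes more directly. Two small points to tidy up: the two-boson coefficient of $\chi_2$ should be symmetrized in $(k,\ell)$ (harmless, since the $\sqrt{n!}$ number estimate is applied to the unsymmetrized kernel, exactly as in \eqref{V++}), and the absorption $\mu^{-1}\n V\n_\mu^2\le C\mu^{-1/2}\n V\n_\mu$ deserves the one-line justification $\omega_\mu(k)\ge 2\mu^{-1/2}|k|^{1/2}$, so that $\n V\n_\mu\le C\mu^{1/2}\||k|^{-1/2}V\|_{L^2}$ by Condition \ref{condition 2} (the paper performs the same absorption when passing from Proposition \ref{prop Q} to \eqref{thm:1:bound}).
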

		
\begin{remark} A few remarks are in order.

\begin{enumerate}[leftmargin=*]
		\item The relevant parameters in \eqref{thm:1:bound} are $\mu$ and $t$. Since $\Psi(t)$ and $\Psi_\eff(t)$ are both normalized, we are interested in the parameter regime, for which the upper bound is small compared to one. 
		
		\vspace{1mm}
	
\item For the  	the massless Nelson model, 
	 the logarithmic growth   \eqref{mu norm nelson} and 
		Theorem \ref{thm 1} imply that we have  an approximation 
		 for $\mu \gg 1$ and 
$$	 |t| \ll \frac{\mu^{1/2}}{\log\mu}	 \ . $$
		We show in Appendix \ref{section mild} how to extend
		this time scale by a factor $(\log \mu )^{\frac{1}{2}}$. 

\vspace{1mm}

\item  It is   possible
		to introduce a scale of less singular of form factors:  $V_a(k) = |k|^{-a } \1 (|k |\leq \Lambda)$
		where $a \in [0,1/2)$.
		For these models, the norm 
		$\n	V_a\n_\mu $
		is uniformly bounded in $ \mu>0$, hence Theorem \ref{thm 1} provides an approximation for $|t|\ll \mu^{1/2} $.
		 In Appendix \ref{section mild} we show how 
		to extend this approximation to times $|t| \ll \mu^{ 1 - a}$. 
\end{enumerate}
\end{remark}

\subsubsection{Polynomial generators}
	In Section  \ref{section g} 
we analyze 
the  generator $g(p)$ in more detail.  
As a first step, we show that for all $ 0 \leq \ve < 1/2$
there are    constants $C_1$ and $C_2$   such that
\begin{equation} 
	C_1 
	\leq 
	g(p)
	\leq 
	C_2  \qquad \forall |p| \leq \ve \mu \ . 
\end{equation}		
We then refine the analysis, and    show   that 
$g(p)$ 
can be  approximately solved   in terms of $|p|$ 
and the value $g_0 := g(0)$.
Namely,  we consider the function  
\begin{equation}\label{eq:approx:sol:g} 
	g_{\eff}(p)
	:= 
	\frac{ \mu }{2 |p|}
	\int_{\RE^{3}} \frac{|V(k)|^2}{|k|}
	\tanh^{-1} \bigg(	 \frac{2 |p| |k |}{	 \mu |k|+ k^2  + g_0		}		\bigg) \  \d k      
\end{equation}		 
and  prove in Proposition \ref{prop expansion g} 
that for every $ 0 \le  \ve  < \frac{1}{2}$
there is a constant $C>0$ such that 
	\begin{equation}
|
		g(p)	
		-
		g_\eff(p) 
		| 
\, \leq \, 
		 \frac{C  \Vu^2 	}{\mu}    
		 \qquad 		\forall |p| \leq \ve \mu \ . 
 \end{equation}
Thus, we can replace $g(p)$ with  $g_\eff (p)$   in Theorem \ref{thm 1}  and  keep  the same error estimate.
	See Corollary \ref{cor:explicit:g} below.		 

\vspace{2mm}

While  
$g_\eff(p) $  is explicit,   it can still be regarded a complicated function  of $p$. 
 To obtain a   simpler  form for the effective dispersion, we  use the series expansion of the hyperbolic tangent function. This motivates the definition of the following polynomial generators
 of degree $N \in 2 \mathbb N _0$ 
\begin{align}\label{eq:hgn}
	\h_g^{(N )}  : = p^2 - \sum_{ j \in  2\N_0 : j\le N} \alpha_j(\mu) |p|^j  
\end{align}
where $\alpha_j (\mu )$
are positive coefficients, 
defined explicitly  in \eqref{series g}. 
For completeness, we also set  $\h_g^{(\infty)} : = p^2 - g_{\eff}(p)$. For example, the generator $\h_g^{(2)}= (1-\alpha_2) p^2 - \alpha_0$, describes a free particle with enhanced mass. For $ N \ge 4$, the generators encompass non-trivial corrections to the parabolic shape of the free dispersion relation. 

\vspace{2mm}

The next corollary shows that for suitable initial conditions, the effective generators $\h_g^{( N )}(p)$ can still be used to approximate the wave function $\Psi(t) = e^{-itH}\varphi \otimes \Omega$.

\begin{corollary}\label{cor:explicit:g} 
	Assume Conditions \ref{condition 1} and \ref{condition 2} and additionally  
	$\varphi \in \mathds 1(|p|\le P_0) L_X^2 $.
	Then 	there is a constant $C>0  $ such that 	
		\begin{align}\label{eq:explicit:gen:approximation}
		& 	\|\Psi(t) - e^{-i t \h_g^{( N )} }\varphi \otimes \Omega \|_{\mathscr{H}} 
		\le 
		C 
		\bigg( 
		\frac{\n V \n_\mu}{\mu^{1/2}} + |t|
		\frac{ \Vu^2 }{\mu^{1/2}}
		+ |t| \bigg( \frac{P_0}{\mu} \bigg)^{N+2}\,	
		\bigg) 	 & & \t{ for }  N   \in 2 \N_0 	\ , 	\\
		& 	\|\Psi(t) - e^{-i t \h_g^{( \infty )} }\varphi \otimes \Omega \|_{\mathscr{H}} 
		\leq   C 
		\bigg( 
	\frac{\n V \n_\mu}{\mu^{1/2}} + |t|
	\frac{ \Vu^2 }{\mu^{1/2} }
	\bigg) 
		& & \t{ for }    N  = \infty \label{eq:explicit:gen:approximation:infinite} \ 
	\end{align}
for all $t\in \RE$, $P_0 >  0 $ and all $\mu >0 $ large enough.
\end{corollary}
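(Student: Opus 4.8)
The plan is to deduce the corollary from Theorem~\ref{thm 1} and Proposition~\ref{prop expansion g} by a triangle inequality, exploiting that none of the effective generators $\h_g(p)$, $\h_g^{(N)}(p)$, $\h_g^{(\infty)}(p)$ acts on the field, so all the effective evolutions keep the boson in $\Omega$. Since $\varphi = \1(|p|\le P_0)\varphi$ with $P_0$ fixed, Condition~\ref{condition 1} holds with $\ve=1/4$ as soon as $\mu\ge 4P_0$, so Theorem~\ref{thm 1} already gives $\|\Psi(t)-\Psi_\eff(t)\|_{\mathscr H}\le C\big(\n V\n_\mu\mu^{-1/2}+|t|\,\n V\n_\mu^2\mu^{-1/2}\big)$, where $\Psi_\eff(t)=e^{-it\h_g(p)}\varphi\otimes\Omega$. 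It then remains to compare $\Psi_\eff(t)$ with $e^{-it\h_g^{(N)}}\varphi\otimes\Omega$ (resp. $e^{-it\h_g^{(\infty)}}\varphi\otimes\Omega$), which reduces to an $L^2_X$ estimate on the particle factor.

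For that comparison, note that $\h_g(p)$ and $\h_g^{(N)}(p)$ are functions of $p$ alone, so $\1(|p|\le P_0)L^2_X$ is invariant under both evolutions, and on it $\h_g(p)-\h_g^{(N)}(p)$ is multiplication by the bounded function $g(p)-\sum_{j\in 2\N_0:\,j\le N}\alpha_j(\mu)|p|^j$. A one-line Duhamel estimate then gives
\[
\big\|e^{-it\h_g(p)}\varphi-e^{-it\h_g^{(N)}}\varphi\big\|_{L^2_X}\ \le\ |t|\,\sup_{|p|\le P_0}\Big|g(p)-\sum_{j\in 2\N_0:\,j\le N}\alpha_j(\mu)|p|^j\Big|,
\]
and likewise with $\h_g^{(\infty)}(p)=p^2-g_\eff(p)$ and $\sup_{|p|\le P_0}|g(p)-g_\eff(p)|$. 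By Proposition~\ref{prop expansion g}, $\sup_{|p|\le\ve\mu}|g(p)-g_\eff(p)|\le C\Vu^2\mu^{-1}\le C\Vu^2\mu^{-1/2}$, which together with Theorem~\ref{thm 1} already yields \eqref{eq:explicit:gen:approximation:infinite}, and reduces the case $N\in 2\N_0$ to showing $\sup_{|p|\le P_0}\big|g_\eff(p)-\sum_{j\le N}\alpha_j(\mu)|p|^j\big|\le C(P_0/\mu)^{N+2}$.

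The latter is where the bookkeeping sits. For $|p|<\mu/2$, and in particular for $|p|\le P_0$ with $\mu\ge 2P_0$, the argument $\tfrac{2|p||k|}{\mu|k|+k^2+g_0}$ of $\tanh^{-1}$ in \eqref{eq:approx:sol:g} is bounded by $2|p|/\mu<1$; inserting $\tanh^{-1}(z)=\sum_{n\ge0}z^{2n+1}/(2n+1)$ into \eqref{eq:approx:sol:g}, dividing by $2|p|$ and interchanging sum and integral (justified by absolute convergence, using $g_0=\mu\int|V(k)|^2(\mu|k|+k^2+g_0)^{-1}\d k$ as a bound) gives $g_\eff(p)=\sum_{n\ge0}\alpha_{2n}(\mu)|p|^{2n}$ with exactly the coefficients of \eqref{series g}; writing $N=2M$, the target is the tail $\sum_{n>M}\alpha_{2n}(\mu)|p|^{2n}$. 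Using $(\mu|k|+k^2+g_0)^{2n+1}\ge(\mu|k|)^{2n}(\mu|k|+k^2+g_0)$ and the identity $g_0=\mu\int|V(k)|^2(\mu|k|+k^2+g_0)^{-1}\d k$ — which is \eqref{def g} at $p=0$ — one obtains the pointwise bound $\alpha_{2n}(\mu)|p|^{2n}\le g_0\,(2|p|/\mu)^{2n}$, and summing the geometric series over $n>M$ for $|p|\le P_0\le\mu/4$ produces $\le C(P_0/\mu)^{N+2}$. Feeding this and the Proposition~\ref{prop expansion g} bound into the Duhamel estimate and adding the Theorem~\ref{thm 1} output gives \eqref{eq:explicit:gen:approximation}. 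No genuine obstacle is expected; the only mildly delicate point is making the tail estimate uniform in $|p|\le P_0$, which is exactly what the pointwise geometric bound above provides.
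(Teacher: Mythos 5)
Your proposal is correct and takes essentially the same route as the paper's proof: triangle inequality via Theorem \ref{thm 1}, a Duhamel bound for the difference of the two effective evolutions, Proposition \ref{prop expansion g} combined with the $\tanh^{-1}$ power series to identify the coefficients $\alpha_j(\mu)$ of \eqref{series g}, and a geometric-series bound on the tail $\sum_{j\ge N+2}\alpha_j(\mu)P_0^j$. The only differences are cosmetic: the paper bounds $\alpha_j(\mu)\le \frac{(2/\mu)^j}{j+1}\||k|^{-1/2}V\|_{L^2}^2$ rather than via $g_0$, and handles large $P_0$ by reducing to $P_0\le \ve\mu$ through Condition \ref{condition 1} instead of your assumption $\mu\ge 4P_0$.
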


\begin{remark}\label{rem:non-conv}
Using the truncated version of the effective dynamics leads to the additional error term   $|t| (P_0/\mu )^{ N +2}$. 
We now   discuss its consequences 
	 for the Nelson model  \eqref{eq:Nelson:form:factor}.

	\begin{enumerate}[leftmargin=*]
		\item Clearly, 
		the new error term imposes additional 
		constraints 
		on the validity of the approximation, 
		when compared to  Theorem \ref{thm 1} or
		 \eqref{eq:explicit:gen:approximation:infinite}.	Indeed, while the original error is uniform in the momentum scale   $P_0$, 
	the new error is not. 
			Observe for instance that at scales    $P_0\sim \mu$   
		the right hand side of  \eqref{eq:explicit:gen:approximation} is small only  for $|t|\ll 1$, as opposed to $|t| \ll \mu^{1/2} (\log \mu)^{-1}$.
		
		\vspace{1mm}
		
		\item 
			In Section \ref{section:range} 
			we compare in detail  the quality of the different approximations, 
		and 
		argue that the new error in \eqref{eq:explicit:gen:approximation} is  in fact \textit{optimal}. 
	For illustration, 
	let  $N =2$ and 
	consider initial data satisfying 
	\begin{align} 
\textstyle	\varphi \in \mathds 1(\tfrac12 P_0 \le |p| \le P_0)L_X^2 \quad \t{with} \quad  		P_0 = \mu^{23/24} \ . 
	\end{align}
	Then,  the additional error is small only for $|t| \ll \mu^{1/6}$. 
	In	Theorem \ref{thm:non-convergence} we strengthen this observation by showing that one  cannot consider longer time scales
		and still expect the approximation to be valid. 
		Namely,  we prove  that there is a constant $\tau>0$ such that convergence fails for $t = \tau \mu^{1/6}$:
		\begin{align}
			\liminf_{\mu \to \infty} \|\Psi(t) - e^{-i t \h_g^{(2)}(p) }\varphi \otimes \Omega \|_{\mathscr{H}} > 0 \ . 
		\end{align}
		On the other hand, Corollary \ref{cor:explicit:g} implies that the approximation remains effective on this time scale for $N \ge 4$. This highlights the relevance of the higher-order corrections in $\h_g^{( N )}(p)$. 
	\end{enumerate}

\end{remark}

	\subsection{Massive fields}
	Let us now state our main result regarding massive boson fields with dispersion relation
	\begin{equation}
		\label{gap}
		\omega(k) =  \sqrt{k^2 + m^2 }
	\end{equation}
	for some mass $m>0$.
	In contrast to the massless case, here one is able to systematically extend the time scale of validity of the effective approximation.
	However, one needs to modify the generator $g(p)$
	to include higher-order terms.

	For the precise definition, we need the following notation
	for a collection of sequences $\sigma$ of length $  j   \in \mathbb N $: 
	\begin{align}
		\Sigma_0  ( j  ) 
		& 	: = 
		\textstyle 
		\{	   \sigma \in \{ 	 +1  , -1 \}^j  \ :  \  	\sum_{ i  = 1}^{\ell }	\sigma (i) \geq 1 \ \forall \ell \leq j    -1  \ \  \t{and} 	  \ \   \sum_{ i=1 }^j \sigma (i) = 0  			\}   \ . 
	\end{align}
See Figure 1 for a  visual  representation  of an element $\sigma \in \Sigma_0(8)$. 
	We will also
	denote
	$b_k^{\#_{ + 1 } } \equiv b_k^*$ and $b_k^{\#_{-1 }	} \equiv  b_k$.

	\begin{definition}
		\label{definition gn}
		Let $n \in \mathbb N $.
		Then, we define  the function 
		$g_n : \RE^3 \rightarrow ( 0 , \infty )  $
		as the solution of the equation 
		\begin{align} 
			\nonumber 
			\label{def gn}
			g_n (p)
			= 
			\sum_{	\substack{	 j =2 	 \\  j \t{ even }	 }	}^{n+1 } 
			\sum_{\sigma \in \Sigma_0(j)} & 
			\mu^{j/2}
			\int_{\RE^{3  j }} 
			\d k_1 \cdots \d k_{j}
			V(k_1) \cdots V(k_{j} )
			\, 
			\langle \Omega , b_{k_{j}}^{\#_{\sigma(j )}} \ldots  b_{k_1}^{\#_{\sigma(1)}} \Omega\rangle  \\
			\textstyle 
			& 	     \times   
			\prod_{ \ell =1}^{ j  -1  } 
			\bigg( 
			 \mu 
			\sum_{ i =1}^\ell \sigma(i) \omega(k_i) 
			+ 
						\Big( 	 p - \sum_{i=1}^\ell \sigma(i) k_i		\Big)^2
						- 
			p^2 + g_n(p)  
			\bigg)^{-1} \ 
		\end{align}
		for $|p|  <  \tfrac12 \mu$ and
			$g_n \equiv  0 $ otherwise. 
	\end{definition}

	\begin{remark}
		The generator $g_n(p)$  
		can be regarded as the  solution of  the fixed point equation 
		\begin{equation}
\label{gn:fixed}
			g_n(p)
			 = 
			   F_n( p , g_n(p)) \ , \qquad |p| <  \frac{1}{2} \mu 
		\end{equation}
		where   $F_n(p,x)$ is
		determined from  the right hand side of \eqref{def gn} 
		(see e.g. \eqref{Fn}). 
	In particular,  for each $\sigma \in \Sigma_0(j)$, 
	 the bosonic expectation values
		can be evaluated   in 
		terms of $\delta$-distributions via Wick's theorem. 
		After carrying out  the contractions, 
	the    summability condition of  $\Sigma_0(j)$  
	implies that every  denominator in  \eqref{def gn}  
		is   manifestly positive and decreases  with $x$.
		Thus,  the proof  we sketched  
	for  the $n=1$ case below Definition \ref{def g1}
		can be adapted to   $n\geq 2$
		to show that the solution to \eqref{gn:fixed} exists,  is unique and of class $C^1$. 
		We leave the details to the reader. 
	\end{remark}

	\begin{figure}[t]
		\label{fig2}
		\includegraphics[width=10cm]{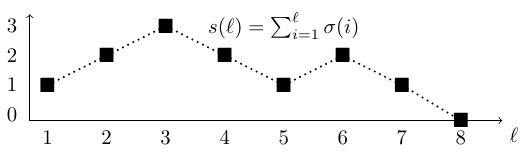}
		\centering
		\caption{The  partial sums of 
			$\sigma = (1,1,1,-1, - 1, 1,-1, - 1)\in \Sigma_0(8)$}
	\end{figure}

	We  now consider the effective dynamics
	\begin{equation}
		\Psi_{\eff}^{(n)} (t) : = e^{	 - i t \h_{g_n}(p)	}\vp \otimes \Omega 
	\end{equation}
	for all $t \in \RE$ with $\h_{g_n}(p) = p^2 -g_n(p)$.
	
	\vspace{2mm}
	
	In the next result, we keep track of  the role of the mass term $m$. 
	We recall \eqref{nu norm} for the definition of $\Vu$.

	\begin{theorem} 
		\label{thm 2}
		Let $\omega(k) = \sqrt{k^2+m^2}$  
		and assume  
		that $\Psi_0$ and $V $
		satisfy Condition \ref{condition 1} and \ref{condition 2}, respectively.  
		Then,  there  exists a   constant $ C > 0  $ 
	such that 	for every $n\in \mathbb N$ odd 
			\begin{align*}\label{eq:them24:bound}
			\| 	 \Psi(t)  - \Psi^{(n)}_{	\eff}(t)   \|_{\mathscr H}  
			& 	\leq  
\, C^n \sqrt{n! }  \, 
			\bigg(
			\frac{	\Vu  }{\mu^{	\frac{1}{2}	}}
			+ \frac{\Vu^n}{\mu^{	\frac{1}{2}}  (\mu m )^{\frac{n-1}{2} } } 
			+ \frac{|t|}{\mu^{	\frac{1}{2}}}
						\frac{ \Vu^{n+1} }{    (	  \mu m	)^{  \frac{n-1}{2}}			 }
		\min\bigg\{ 
			1, 	 	\frac{	\Vu 	}{\mu^{	\frac{1}{2} } m }  \bigg\} 	\bigg)  
		\end{align*} 
	for all $t \in \RE  $, $m \ge \mu^{-1}$ and all $\mu>0$ large enough.
	\end{theorem}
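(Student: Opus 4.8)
The plan is to expand the microscopic dynamics around the effective one by iterating Duhamel's formula to order $n+1$ and to show that the self-consistent generator $g_n(p)$ collects precisely those terms of the expansion that carry no rapidly oscillating phase. I would write $H = \widetilde H_0 + \mathcal W$ with $\widetilde H_0 := \h_{g_n}(p) + T$ and $\mathcal W := \V + g_n(p)$, so that $e^{-it\widetilde H_0}(\varphi\otimes\Omega) = \Psi^{(n)}_{\eff}(t)$ because $T\Omega = 0$, and then pass to the interaction picture $\widetilde\Psi(t) := e^{it\widetilde H_0}\Psi(t)$. There, conjugation by $\widetilde H_0$ attaches the factor $e^{is\mu\sigma\omega(k)}$ to each vertex $b_k^{\#_\sigma}$ and, via the $e^{\pm ik\cdot X}$, shifts the particle momentum, so that the free particle phases between vertices combine into the following statement: along a time-ordered string of $j$ vertices, the phase accumulated over the $\ell$-th subinterval is governed by the mismatch $\mu\sum_{i\le\ell}\sigma(i)\omega(k_i) + (p - \sum_{i\le\ell}\sigma(i)k_i)^2 - p^2 + g_n(p)$ between the intermediate energy and the external effective energy $\h_{g_n}(p)$, the scalar $g_n(p)$ itself staying time-independent since it commutes with $\widetilde H_0$. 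Iterating Duhamel for $\mathcal W$ to order $n+1$ produces a time-ordered sum over vertex sequences --- each slot being a $\V$ or the scalar $g_n(p)$ --- plus a Duhamel remainder of order $n+2$ still containing the full propagator $e^{-isH}$; the restriction to odd $n$ merely makes $n+1$ the largest even order retained by $g_n$ in Definition \ref{definition gn}, so the truncation closes cleanly.

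The structural heart of the argument is this: after Wick-contracting the bosonic operators against the vacuum, the surviving sequences are the Dyck-type configurations, and the partial-sum condition defining $\Sigma_0(j)$ guarantees that each intermediate mismatch reduces, after the contractions, to $\mu\sum\omega(k) + (p - K_\ell)^2 - p^2 + g_n(p)$, where the sum and the total momentum $K_\ell$ range over the bosons present at step $\ell$. Since $\omega(q) - 2\ve|q| \ge (1-2\ve)m$ for every $q$, Condition \ref{condition 1} ($|p|\le\ve\mu$) together with $g_n > 0$ force this quantity to be $\ge (1-2\ve)\mu m$ --- the content of the resolvent estimates of Section \ref{section resolvent estim}, which also handle the occurrence of $g_n(p)$ inside these denominators by means of the a priori bounds on $g_n$ obtained as for $g$ in Section \ref{section g}. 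Because no mismatch ever vanishes, each oscillatory time integral can be removed via the integration-by-parts formula of Section \ref{section prelim}; every application yields a boundary term at $s\in\{0,t\}$, a derivative term that either lowers the order or brings down one further $\V$, and an inverse mismatch $\lesssim (\mu m)^{-1}$. I would then isolate, among the fully-integrated-by-parts terms, those whose vertices close to a vacuum diagram with all time-boundaries at $0$: summed over even $j\le n+1$ and over $\Sigma_0(j)$ these reproduce exactly the right-hand side of \eqref{def gn}, and hence --- by the fixed-point identity $g_n = F_n(p,g_n)$ --- they cancel, up to order $n+2$, the contributions of the counterterm $g_n(p)$ contained in $\mathcal W$. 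This cancellation is the genuinely \emph{nonlinear} step, since it uses $g_n(p)$ inside its own defining denominators.

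What survives is the error: boundary terms with some $s_i = t$; ``open'' partially contracted terms in nonzero-boson sectors (these are what forces the expansion to stop after one order for a massless field --- yielding Theorem \ref{thm 1} --- but which the gap $\mu m$ now lets us push further down); and the order-$(n+2)$ Duhamel remainder. Each is bounded by collecting $\mu^{1/2}V(k)$ per vertex, an inverse mismatch $\lesssim(\mu m)^{-1}$ per integration by parts, and $k$-integrals controlled by powers of $\Vu = \|V/\omega_\mu\|_{L^2}$ through Condition \ref{condition 2}; here $m\ge\mu^{-1}$ is used both to make $\mu m\ge 1$ a genuine smallness and to ensure $\omega_\mu\le 2\omega$, so that $\Vu$ really is an infrared-regularized self-energy norm. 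One unpaired factor $\mu^{1/2}$ is left over from the one-boson state $b_k^*\Omega$ created at the first Duhamel step. Summing over the $\lesssim n!$ time-orderings and sign patterns and reducing $n!$ to $\sqrt{n!}$ via the Fock-space identity $\|(b^*(f))^n\Omega\| = \sqrt{n!}\,\|f\|_{L^2}^n$ together with Cauchy--Schwarz, one arrives at the prefactor $C^n\sqrt{n!}$ and the three displayed terms: the $t$-independent $\Vu\mu^{-1/2}$ (the first boundary term, exactly as in Theorem \ref{thm 1}), the $t$-independent $\Vu^n\mu^{-1/2}(\mu m)^{-(n-1)/2}$ (the top-order boundary terms), and the $t$-linear $\mu^{-1/2}\Vu^{n+1}(\mu m)^{-(n-1)/2}\min\{1,\Vu/(\mu^{1/2}m)\}$ (the remainder and the off-diagonal build-up, the minimum recording that one stops integrating by parts once the per-step gain $\Vu/(\mu^{1/2}m)$ is no longer $\le 1$; for $n=1$ this collapses to the bound of Theorem \ref{thm 1}).

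The main obstacle is the resolvent analysis of Section \ref{section resolvent estim}: one must show that, over the entire $\RE^{3j}$ integration domain and uniformly in $m\ge\mu^{-1}$ and in $\sigma\in\Sigma_0(j)$, the products of the post-integration-by-parts mismatch factors obey the above bounds while losing no more than $\sqrt{j!}$ in the combinatorics --- controlling at once the lower bound $\mu m$ on each factor, the integrability of the product against $\prod_i|V(k_i)|$, and the large intermediate particle momenta $p - K_\ell$. A closely related difficulty is the combinatorial verification that the non-oscillatory terms assemble into \eqref{def gn} with exactly the right multiplicities of $\Sigma_0(j)$-configurations, so that subtracting $\Psi^{(n)}_{\eff}(t)$ is clean to order $n+2$. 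The remaining ingredients --- self-adjointness of $H$, relative boundedness of $\V$ with respect to $T$, and the $C^1$ and positivity properties of $g_n$ --- are as in the massless case treated in Sections \ref{section massless}--\ref{section g}.
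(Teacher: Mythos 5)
Your overall strategy is the paper's: expand the dynamics, use oscillatory-phase integration by parts (the identity of Proposition \ref{prop integration by parts}, iterated), cancel the vacuum-projected contributions against the counterterm via the nonlinear fixed point $g_n = F_n(p,g_n)$ (the paper's Lemma \ref{lemma 1 section 6}), bound the remaining strings by the resolvent estimates of Section \ref{section resolvent estim} with a $\Vu$-factor per creation and an $m^{-1}\|V\|_{L^2}$-factor per contraction (Proposition \ref{prop rem}), and obtain the $\min\{1,\Vu/(\mu^{1/2}m)\}$ by deciding, according to the size of $m$, whether to integrate by parts one or two extra times on the order-$n$ remainder. The cosmetic difference (iterating Duhamel to order $n+1$ in an interaction picture versus iterating the operator identity $\I(t)\Q_\Omega = \B(t)\R + \I(t)\V\R$) is immaterial.

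There is, however, one concrete slip that would cause trouble if executed literally: with your choice $\widetilde H_0 = \h_{g_n}(p) + T$, where $p$ is the \emph{particle} momentum, the intermediate-state energy after $\ell$ vertices is $(p-\sum_{i\le\ell}\sigma(i)k_i)^2 - g_n(p-\sum_{i\le\ell}\sigma(i)k_i) + \mu\sum_{i\le\ell}\sigma(i)\omega(k_i)$, so the accumulated mismatch contains the extra term $-g_n(p-\sum_{i\le\ell}\sigma(i)k_i)$ and is \emph{not} the denominator of \eqref{def gn}. Consequently the vacuum diagrams do not reproduce the right-hand side of \eqref{def gn} exactly, so the advertised cancellation against $g_n(p)$ is not exact as stated; worse, near the critical mass $m\sim\mu^{-1}$ the lower bound on the denominators is only of order $(1-2\ve)\mu\omega(k)\gtrsim 1$, and an additional negative $O(1)$ contribution from $g_n$ evaluated at a shifted momentum can destroy the positivity and size of the denominators on which all your resolvent estimates rest. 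The paper avoids this by exploiting translation invariance: the comparison inside the integration by parts is with $p^2+T$ (no counterterm), while the effective evolution enters through the fiber operator $\h_{g_n}(p+P)$, i.e. $g_n$ is always evaluated at the conserved \emph{total} momentum (using $[\V,p+P]=0$ and $P\Omega=0$), which is what produces exactly the denominators of Definition \ref{definition gn}. With that correction your outline matches the actual proof; a further minor point is that the prefactor $C^n\sqrt{n!}$ does not come from "reducing $n!$ time-orderings": the ordered Duhamel/IBP structure contributes only $C^n$ from the sign patterns, and $\sqrt{n!}$ is the Fock-norm loss for the at-most-$n$-boson remainder states, as in \eqref{number estimate}.
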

	
	\begin{remark} 
		In Theorem \ref{thm 2} 
	 the mass value $m = \mu^{-1 }$ is \textit{critical}, in the sense that the dependence of our estimates does not improve when invoking higher orders in $n$, 
		relative to  the massless case. That is,   the $n=1$ iteration from   Theorem \ref{thm 1}.  
	\end{remark}
	
	\begin{remark} Additional remarks are  in order. For concreteness, we assume that $V$ is given by \eqref{eq:Nelson:form:factor} so  that $\n V\n_\mu^2 = 
		O (  \log (\mu ) )  $.

\begin{enumerate}[leftmargin=*]
		\item	 Consider $n=1$. Then, for   $ \mu^{-1} \leq m  \le   (\log \mu)^{1/2} \mu^{-1/2} $ 
		the statement coincides with the massless case treated in Theorem \ref{thm 1}. 
	For $m\gg   (\log \mu)^{1/2} \mu^{-1/2}    $, the bound improves compared to Theorem \ref{thm 1}, as the approximation now holds for all  
	$$	|t | \ll  \frac{m \mu }{	(\log \mu)^{3/2}		}	  \ . $$
 
		\vspace{1mm}
 
\item For $n\ge 3$ and above the critical mass threshold, the approximation extends to substantially longer times. For instance, if $m \ge 1 $, convergence holds for all 
	$$|t| \ll   
	\frac{\mu^{	 \frac{n+1}{2}	}}{	 (\log \mu)^{\frac{n+2}{2} }		} \ . 
$$  
 
\vspace{1mm}

\item The reason for considering only $n\in \mathbb N$ odd is that $g_n = g_{n-1}$ for all $n\ge 2 $ even. This follows from the simple fact that 
		$	 \langle \Omega , b_{k_{j}}^{\# } \ldots  b_{k_1}^{\# } \Omega\rangle  = 0 $
		for all $j\in \mathbb N$ odd.

\vspace{1mm} 
\item Similarly as in the massless case,  it is possible to introduce 
  polynomial generators  $\h_{g_n}^{(N)}$ and to prove a statement analogous to Corollary \ref{cor:explicit:g}. The polynomial generators result from  the truncation  of a power series
  that   is obtained via expanding  the denominators  in \eqref{def gn} around		$\sum_{j} \mu \omega(k_j)$. 
  Thanks to the mass term, this expansion is in fact less singular that in the massless situation. 
Since this procedure is relatively straightforward, we omit the details.
\end{enumerate}
	\end{remark}

	\subsection{Comparison with previous results} 
	\label{subsection:comparison} Let us compare our analysis with some related results from the literature. 
	
\vspace{2mm}

\noindent (1) Davies \cite{Davies} considers heavy tracer particles weakly coupled to a scalar boson field. He utilizes the Hamiltonian $H^{(a)} = \sum_{i=1}^a (-\Delta_{X_i}) + T + \sum_{i=1}^a \mathbb V_{X_i} $ on the Hilbert space $\otimes^a L^2(\RE^3) \otimes \mathscr F$, with $T$ and $\mathbb V_{X_i}$ as defined in \eqref{V term}. For $a=1$, this Hamiltonian coincides with our choice of $H$. He establishes \cite[Theorem 2.2]{Davies} that for $t\in \RE$ and all $\varphi^{(a)} \in \otimes^a L^2(\RE^3)$,
\begin{equation}
\textstyle 
	\lim\limits_{\mu \to \infty }	e^{- i t  H^{(a)  } } \vp^{(a)}  \otimes \Omega 
	= 
	e^{- i t  \mathfrak h^{(a)}  }  \vp^{(a)} \otimes \Omega  \label{eq:Davies},
\end{equation}
where $\mathfrak h^{(a)} = \sum_{i=1}^a (-\Delta_{X_i}) +  K^{(a)} $ is an operator on $\otimes^a L^2(\RE^3)$, and $K^{(a)}$ describes an effective pair potential between the tracer particles, given by
\begin{align}\label{K:a}
K^{(a)} = \sum_{1\le i, j \le a }
 \int 
\cos(k(X_i-X_j) )
 \frac{|V(k)|^2}{\omega(k)} 
  \d k.
\end{align}
Note in particular that the effective Hamiltonian in \eqref{eq:Davies} does not contain any corrections of the free dispersion relation. This is due to the restriction to times $t=O(1)$ as $\mu \to \infty$.

For a single tracer particle, we reproduce \eqref{eq:Davies} with an explicit rate of convergence by Corollary \ref{cor:explicit:g}, when applied for $n=0$. To see this, note that $\alpha_0(\mu) = K^{(1)} + O(\mu^{-1})$ as $\mu \to \infty$. In his proof, Davies shows convergence of the resolvent, in the sense that
\begin{align}\label{eq:resolvent}
\textnormal{s}- \lim_{\mu\to \infty} \big( H^{(a)}+ z \big)^{-1}  = \big( \mathfrak h^{(a)}+z \big)^{-1} \otimes \P_\Omega,\quad z \in \mathbb C \setminus \mathbb R,
\end{align}
where $\P_\Omega$ is the projection in $\F$ onto the vacuum state. This is then used to obtain \eqref{eq:Davies}. Note that \eqref{eq:resolvent} does not imply strong convergence in the resolvent sense, as the limit operator is not the resolvent of an operator (for an abstract discussion of such limits, we refer to \cite[Sect. II]{Arai}). Our approach is based on a different strategy. We directly compare the unitary evolutions via repeated application of a suitable integration-by-parts formula (see Section \ref{section prelim}). Unlike \eqref{eq:resolvent}, this method directly provides quantitative error estimates. More importantly, it allows us to obtain an effective operator defined through a nonlinear equation as in \eqref{def g}.

Let us also mention that our analysis can be readily extended to reproduce Davies' result for $a\ge 2$ as well. However, our primary focus lies in extending the approximation to longer time scales, particularly those requiring non-trivial modifications of the dispersion relation. For transparency, we limit our investigation to the case where $a=1$, postponing the analysis of more than one tracer particle for future study. In such cases, we would expect momentum-dependent corrections to the effective pair potential $K^{(a)}$.

\vspace{2mm} 

\noindent (2) In \cite{Hiroshima,Hiroshima2} Hiroshima studies the same model as Davies, but with a $\mu$-dependent UV cutoff in \eqref{eq:Nelson:form:factor}. He establishes \eqref{eq:resolvent} in the simultaneous limit $\Lambda(\mu)\to \infty$ as $\mu \to \infty$, which requires the subtraction of a diverging energy. In this limit, the operator $K^{(a)}$ in \eqref{K:a} describes a Coulomb $(m=0)$ resp. Yukawa ($m>0$) pair potential. 

In the present work, we opt to maintain a fixed UV cutoff in \eqref{eq:Nelson:form:factor}. However, we expect that our results can be extrapolated to this coupled limit as well. In particular, one can verify that $H-g(0)$, 
 $g(p)-g(0)$, and consequently $(e^{-itH} - e^{-it\h_g})\Psi_0$, have appropriate limits as $\Lambda \to \infty$. Nevertheless, since our error estimates are not uniform in the cutoff -- note that $\| V\|_{L^2}\sim \Lambda $ and $\| |k|^{-1/2} V \|_{L^2} \sim \Lambda^{1/2}$ -- this observation alone is not sufficient for a direct extrapolation. We expect that refinement error bounds can be attained through the use of Gross' unitary dressing transformation. However, to keep the focus on the conceptual ideas, we refrain from introducing additional technical details. Finally, it is worth mentioning that we are unaware of any work that establishes \eqref{eq:Davies} for the renormalized Nelson model.

\vspace{2mm}

\noindent (3) The works by Davies and Hiroshima have been revisited and extended to other particle-field models as well as different scalings. See for instance \cite{Arai,Takaesu,Takaesu2,Takaesu3,
Hiroshima3,HiroshimaSasaki}.
 
\vspace{2mm}

\noindent (4) Teufel \cite{Stefan,Stefan2} and Tenuta and Teufel \cite{TenutaTeufel} consider the Neslon model in a slightly different scaling, namely they study the dynamics for large $\mu$ generated by
\begin{equation} \label{eq:Teufel:Ham}
\mathfrak H
  \   =  \ 
 \,  \frac{ - \Delta_X }{\mu}
 \, +     \, 
\int_{\RE^{3}} \omega (k) b_k^* b_k \d k 
 \ +   \,  
\int_{\RE^{3}}
	V(k) 
\big( 
e^{i k \cdot X}  
 \, b_k + 
e^{ - i k \cdot X}  
\, b_{k}^* 
\big)   \d k  \ ;
	\end{equation}
see e.g. \cite[Eq. (10)]{TenutaTeufel} for $\varepsilon =\mu^{-1/2}$. Comparing our Hamiltonian $H$ to $\mu \mathfrak H$, the latter has an additional factor $\mu^{1/2}$ in front of the third term, which makes the interaction comparable to the field energy $T$. For a heuristic comparison of the two scalings, see \cite[Section 6.3]{Stefan}. Conceptually, the scaling in \eqref{eq:Teufel:Ham} is similar to the one observed in the Born--Oppenheimer theory for electrons coupled to heavy nuclei. In fact, both can be understood through the general framework of
\textit{adiabatic perturbation theory}, and we refer to the book of Teufel \cite{Stefan} for an in-depth exposition of the subject. In this framework, Teufel \cite{Stefan2} and Tenuta and Teufel \cite{TenutaTeufel} obtained a norm approximation for the dynamics generated by \eqref{eq:Teufel:Ham}; see e.g.
\cite[Theorem 6.10]{Stefan}. More concretely, the authors consider
small electron velocity, and 
the wave function approximation consists of   
wave packets  $\int^\oplus_{\RE^3 } \vp(t,X) \Omega(X) \d X  $, 
where $\Omega(X)$
is the ground state of the $X$-dependent Fock space operator
$ H(X) =  
\int \omega (k) b_k^* b_k \d k   + a(  e^{ikX} V) + a^*(e^{ikX }  V  ) $, and 
$\vp(t,X)$ is driven by an effective Hamiltonian. Depending on the time scale, the effective Hamiltonian contains an effective pair interaction (including \eqref{K:a}, but also the momentum-dependent Darwin term) if one considers more than one tracer particle, and corrections to the mass of the particle(s). Contrary to our results, the effective states are \textit{not} vacuous, but rather represent dressed electron states.

\vspace{2mm}

\noindent (5) The emergence of an effective dispersion relation for a tracer particle coupled to a quantum field has also been explored in other contexts. Specifically, in \cite{Bach}, Bach, Chen, Faupin, Fr\"ohlich and Sigal, and previously in \cite{TeufelSpohn2}, Spohn and Teufel investigated the dynamics of an electron in a slowly varying external potential within the framework of non-relativistic QED. Their results show that the dynamics of such systems can be described in terms of dressed electron states whose evolution is governed by an effective dispersion relation $E(p)$. This dispersion is characterized, as usual for translation invariant particle-field models, as the infimum of the energy at fixed total momentum $p\in \RE^3$. 
Notably, while $E(p)$ plays a similar role as our $\h_g(p)$, 
our description of the generator is more explicit. In particular, we describe the effective dispersion as an explicit function of $|p|$ (up to small errors; see e.g. \eqref{eq:approx:sol:g}) and do not refer to the fiber decomposition of the Hamiltonian.

	\section{The integration-by-parts formula}
	\label{section prelim}
	The main goal of this section is to introduce 
	an integration-by-parts formula
	that will be the heart of our analysis. The formula  provides an expansion scheme for the difference between the original and the effective dynamics, and will be used in the proofs of Theorems \ref{thm 1} and \ref{thm 2}. 
	
	\vspace{2mm}
		
The heuristics behind the integration-by-parts formula are as follows: Upon invoking Duhamel's formula, see Eq. \eqref{eq:Duhamel} below, the propagator $e^{isH}$ exhibits rapid oscillations as $\mu \to \infty$, when acting on states in $L^2_X \otimes (\text{Span} (\Omega))^\perp$. This behavior arises from the presence of the large value of the operator $T =\mu \int \mathrm{d} k \omega(k) b_k^* b_k $ when applied to such states. Effectively, it suppresses the value of the integral coming from contributions in $L^2_X \otimes (\text{Span} (\Omega))^\perp$. On the other hand, contributions in the orthogonal complement $L^2_X\otimes \text{Span}(\Omega)$ do not manifest these rapid oscillations since $T\Omega = 0$. Later, we will demonstrate how these terms add up to the effective generator $g(p)$.

		\vspace{2mm}

	Throughout  this section, we denote 
	$\Psi_{\eff}(t) = e^{-it H_{\eff}} \Psi_0$ where $\Psi_0  = 	\vp \otimes \Omega \in \mathscr{H}$ with $\varphi \in D(p^2)$.
	The effective Hamiltonian is 
	\begin{equation}
		H_{\eff} = \h_{g }(p) \otimes \1 
	\end{equation}
	where  the  generator $g(p)$   is  not fixed but arbitrary, 
	and for notational convenience we drop the subscript, i.e. we write 
	$\h \equiv \h_g = p^2 -  g(p) $.

	\vspace{2mm}

	Let us start with the following   calculation
	for the difference between the original and the effective dynamics.
	Namely
	\begin{align}
		\nonumber 
	\Psi(t)		- 	\Psi_{\eff} (t)  & = 
		e^{ -i t H }
		\Big(
	 \1 - 	 e^{i t H } e^{- i t H_\eff}  
		\Big) \Psi_0	\\ 
		\nonumber 
		& = 
		e^{ -i t H }
		\frac{1}{i}
		\int_0^t 
		e^{i s H }
		\Big(
 H -  		H_{\eff}	  
		\Big)
		e^{- i s H_\eff}
		\Psi_0
		\d s  \\ 
		& = 
		e^{ -i t H }
		\frac{1}{i}
		\int_0^t 
		e^{i s H }
		\Big(
		 \mathbb V + g(p) 
		\Big)
		e^{- i s H_\eff}
		\Psi_0 \d s  \  \label{eq:Duhamel}
	\end{align}
	where we used $T\Omega = 0$.
	Let us now note 
	that thanks to
	$P \Omega = 0$
	we have $H_\eff \Psi_0 = \h (p+ P) \Psi_0$.
	Hence, since $[\V, p+P]= [g(p), p+P]=0 $, we obtain the identity
	\begin{equation}
		\Psi(t)	 - 			\Psi_{\eff} (t)   
		= 
		e^{ -i t H }
		\frac{1}{i}
		\int_0^t 
		e^{i s H }
		e^{- i s \h( p + P) } \d s  \
		\Big(
  \V  + g(p) 
		\Big)
		\Psi_0  \  . 
	\end{equation}
	
	The last expression motivates the following definition.
	
	\begin{definition}
		For all $t \in \RE$ we define the operator 
		\begin{equation}
			\I (t) : = 	\frac{1}{i}
			\int_0^t 
			e^{i s H }
			e^{ - i s  \h (p+P)	} \d s   \ . 
		\end{equation}
	\end{definition}

	\begin{remark}[Dynamics difference]
		\label{remark difference}
		Clearly, for all $ t    \in \RE  $, we can now write
		\begin{equation}
			\label{psi 1}
	\Psi(t)	  - 				\Psi_{\eff} (t)  
			= 
			e^{ - i t H } \I (t)  
			\Big(
		 \V + g(p) 
			\Big)
			\Psi_0    \ . 
		\end{equation}
	\end{remark}

	Note that 
		the first  term  $\V \Psi_0 $ is 
a  state that contains one boson. 
We are now interested in studying 
 the action of the operator $\I(t)$  on states
 that are   orthogonal to the vacuum, as they  lead
  to rapidly  oscillating  phases. 
 To this end, we introduce  
 \begin{equation}
\label{def:projections}
 	\P_\Omega = \1 \otimes  \ket{\Omega}	\bra{\Omega} \qquad
 	\t{and}
 	\qquad 
 	\Q_\Omega 
 	 = 
 	 \1   -\P_\Omega \ . 
 \end{equation}

	It will be convenient to define the following two auxiliary operators.
	The first one we call the \textit{resolvent}.
	The second one we call the \textit{boundary term}. 
	
	\begin{definition}[Auxiliary operators]
		\label{def R and B}
		We define  as operators on $\mathscr{H}  =  L_X^2\otimes \F$
		\begin{enumerate}
			\item The resolvent      
			\begin{equation}
				\R : = 	
				\Q_\Omega 
				\Big(
				\h(p+P) - p^2 - T 
				\Big)^{-1 } \Q_\Omega
			\end{equation}
			\item The boundary term, for all $t \in \RE $
			\begin{equation}
				\B (t)  : =	\Big(
				e^{i t H } e^{ - i t  \h(p+ P )} - \1 
				\Big) \ . 
			\end{equation}
		\end{enumerate}
	\end{definition}

	\begin{remark}[Domain of $\R$] 
The resolvent operator $\R$ is an 
 unbounded 	  operator on $\H$.
 Here, 
 we define its domain   as
 $D (\R) := \{	 \Psi \in \H : 
   \t{s--}\lim_{\epsilon\downarrow 0 }  \R(\epsilon )	\Psi  \t{ exists in } \H 		\}$, 
    where  
\begin{align}
\R (\epsilon ) := 
\Q_\Omega 
\Big(
				\h(p+P) - p^2 - T - i \epsilon
				\Big)^{-1 } \Q_\Omega  \ , 
				 \quad \epsilon >0.
\end{align}
We then define the strong limit
$\R \Psi : = \lim_{\ve\downarrow 0 } \R(\epsilon )\Psi$. 
In practice, we shall always apply $\R$ to states $\Psi \in \mathscr H$ that are evidently
 in the domain $D(\R)$--see the  estimates contained 
in Sections \ref{section massless} and \ref{section resolvent estim}, respectively.
In order to keep the exposition simple,  
we do not refer anymore to the unbounded nature of $\R$ outside of this section. 
\end{remark}

	Let us now relate $\I (t)$, $\R$ and $\B (t)$. 
	The following proposition 
	contains
	the core idea of our proof. 
	\begin{proposition}[Integration-by-parts formula]
		\label{prop integration by parts}
For all $\Psi  \in  D(\R) \cap D(\V\R)$, we have 	  
		\begin{equation}
			\label{I(t)}
			\I(t)  {  \Q_\Omega  }  \Psi = \Big( \B(t) \R 		+		\I(t) \V \, \R \Big) \Psi  \ . 
		\end{equation}
	\end{proposition}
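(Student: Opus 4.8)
The plan is to obtain \eqref{I(t)} by an integration by parts in the time variable, carried out after a harmless $i\epsilon$-regularization of the resolvent which is removed at the end. Fix $\epsilon>0$ and abbreviate $A_\epsilon := \h(p+P) - p^2 - T - i\epsilon$, so that $\R(\epsilon) = \Q_\Omega A_\epsilon^{-1}\Q_\Omega$. The target is the regularized identity
\begin{equation*}
\I(t)\,\Q_\Omega\Psi \;=\; \B(t)\,\R(\epsilon)\Psi \;+\; \I(t)\,\V\,\R(\epsilon)\Psi \;-\; \epsilon\int_0^t e^{isH}\,e^{-is\h(p+P)}\,\R(\epsilon)\Psi\,\d s ,
\end{equation*}
from which \eqref{I(t)} will follow by letting $\epsilon \downarrow 0$. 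The computation rests on a few commutation relations that I would record first: since $\h(p+P)=(p+P)^2-g(p+P)$ is a function of $p$ and $P$, and $P$ commutes with $T$ and with $\calN$, one has $[\h(p+P),\,p^2+T]=0$ and $[\h(p+P),\,\P_\Omega]=0$; and since $[\V,p+P]=0$, also $[\V,\h(p+P)]=0$. Consequently $A_\epsilon$ commutes with $\h(p+P)$ and with $\Q_\Omega$, it satisfies $\|A_\epsilon^{-1}\|\le\epsilon^{-1}$, $A_\epsilon^{-1}$ commutes with $e^{-is\h(p+P)}$ and with $\Q_\Omega$, and $A_\epsilon^{-1}\Q_\Omega=\R(\epsilon)$.

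The heart of the argument is to differentiate $s\mapsto U_\epsilon(s):=e^{isH}A_\epsilon^{-1}e^{-is\h(p+P)}$. Writing $H=p^2+T+\V$, using that $p^2+T$ commutes with $A_\epsilon^{-1}$, and using $p^2+T-\h(p+P)=-A_\epsilon-i\epsilon$, the commutator collapses:
\begin{equation*}
\tfrac1i\,\tfrac{\d}{\d s}U_\epsilon(s) \;=\; e^{isH}\big(HA_\epsilon^{-1}-A_\epsilon^{-1}\h(p+P)\big)e^{-is\h(p+P)} \;=\; e^{isH}\big(-\1-i\epsilon A_\epsilon^{-1}+\V A_\epsilon^{-1}\big)e^{-is\h(p+P)} .
\end{equation*}
Integrating over $[0,t]$ and applying to $\Q_\Omega\Psi$, I would then identify the three pieces: on the left, sliding $A_\epsilon^{-1}$ through the unitary $e^{-it\h(p+P)}$ turns $U_\epsilon(t)\Q_\Omega\Psi-U_\epsilon(0)\Q_\Omega\Psi$ into $\B(t)\R(\epsilon)\Psi$; on the right, the $-\1$ term gives $\I(t)\Q_\Omega\Psi$ (via $\int_0^t e^{isH}e^{-is\h(p+P)}\d s=i\,\I(t)$), the $\V A_\epsilon^{-1}$ term — after commuting that factor past $e^{-is\h(p+P)}$ — gives $-\I(t)\V\R(\epsilon)\Psi$, and the $i\epsilon A_\epsilon^{-1}$ term gives the $\epsilon$-term, of norm at most $|t|\,\epsilon\,\|\R(\epsilon)\Psi\|$. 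Rearranging gives the regularized identity. Then I let $\epsilon \downarrow 0$: for $\Psi\in D(\R)\cap D(\V\R)$ one has $\R(\epsilon)\Psi\to\R\Psi$ and $\V\R(\epsilon)\Psi\to\V\R\Psi$; since $\B(t),\I(t)$ are bounded ($\|\B(t)\|\le2$, $\|\I(t)\|\le|t|$) the first two terms converge to $\B(t)\R\Psi$ and $\I(t)\V\R\Psi$, and the $\epsilon$-term vanishes because $\|\R(\epsilon)\Psi\|$ stays bounded. This yields \eqref{I(t)}.

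The main obstacle I anticipate is not conceptual but a domain/regularity one: the differentiation of $U_\epsilon$ is licit only if $A_\epsilon^{-1}$ maps the relevant vectors into $D(H)\cap D(\h(p+P))$, which is not automatic for a general $\varphi\in D(p^2)$ since $D(\h(p+P)-p^2-T)$ may be strictly larger than $D(\h(p+P))\cap D(p^2)\cap D(T)$. I would handle this by first running the whole argument on a dense set of $\Psi$ with finitely many bosons and smooth, compactly supported momentum profiles — on which, sector by sector, $p^2$, $T$ and $\h(p+P)$ act as explicit commuting multiplication operators and the differentiation is transparent — and then reaching general $\Psi\in D(\R)\cap D(\V\R)$ through the regularized identity, which before the limit involves only bounded operators. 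The remaining point, the convergence $\V\R(\epsilon)\Psi\to\V\R\Psi$, follows from the interpretation of the domain $D(\V\R)$ together with the fact that $\V$ is $T$-bounded with relative bound less than one, so that it is controlled by $T$ and hence uniformly by the regularized resolvents.
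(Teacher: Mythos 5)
Your proposal is correct and takes essentially the same route as the paper: the same algebraic ingredients ($[\h(p+P),p^2+T]=0$, $[\V,p+P]=0$, and $H-p^2-T=\V$) drive a differentiation/integration by parts in $s$ of $e^{isH}(\text{resolvent})\,e^{-is\h(p+P)}$, producing the boundary contribution $\B(t)\R$ and the bulk contribution $\I(t)\V\R$, exactly as in the paper's computation on $\Q_\Omega\H$. The only difference is your $i\epsilon$-regularization with the explicit error term $\epsilon\int_0^t e^{isH}e^{-is\h(p+P)}\R(\epsilon)\Psi\,\d s$, which is a harmless technical refinement consistent with the paper's definition of $D(\R)$ through $\R(\epsilon)$.
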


\begin{remark}
	Let us comment on the above proposition.
	
		\begin{enumerate}[leftmargin=*]
		\item 
		In practice, 	we only apply the integration-by-parts formula 
		to  states $\Psi \in \Q_\Omega \H$.
		Thus, unless confusion arises,  we will often omit the projection  $\Q_\Omega$.
		
		\vspace{1mm}
		
		
		\vspace{1mm}
		
		\item  Proposition \ref{prop integration by parts} is partially inspired by previous works \cite{Jeblick,
			Jeblick2,Mitrouskas2,Mitrouskas,
			Leopold}, where similar integration-by-parts formulas were used in different contexts. The common feature of the considered models is that a slow system is coupled to a fast system, and that the evolution of the slow system is governed by an effective generator. However, we want to emphasize that our expansion is novel in that it yields an effective generator that solves a \textit{nonlinear} equation; see e.g. \eqref{def g}.
		
	\end{enumerate}

\end{remark}

	\begin{proof}

	In what follows, we understand  
the operator identities to hold on $\Q_\Omega \H$.
		The main idea is to integrate by parts in a convenient way.
		To this end, we first compare
		with the free kinetic energy
		\begin{align}
			\I (t) 
			& = 
			\frac{1}{i}
			\int_0^t 
			e^{i s H }
			e^{ - i s  \h (p+P)	} \d s    \\ 
			& = 
			\frac{1}{i}
			\int_0^t 
			e^{i s H } e^{   - is ( p^2 + T )}
			e^{ i s (p^2 + T )}
			e^{ - i s  \h (p+P)	} \d s     \ . 
		\end{align}	
		Note that  $[ \h(p+ P) , p^2 + T]=0	$. Therefore, we can write 
		\begin{align}
			\I (t) 
			& = 
			\frac{1}{i}
			\int_0^t 
			e^{i s H } e^{  -  is ( p^2 + T )}
			e^{ - i s  \big(  \h (p+P)	 - p^2 - T \big) 				} 
			\d s     \ . 
		\end{align}
		Next, we use  
		\begin{align}
			e^{ - i s  \big(  \h (p+P)	 - p^2 - T \big) 				} 
			=  \frac{ \d }{\d s }  
			e^{ - i s  \big(  \h (p+P)	 - p^2 - T \big) 				}  
			i  \R
			\ . 
		\end{align}
		and integrate by parts
		\begin{align}
			\nonumber 
			\I(t) 
			& =  
			\frac{1}{i } 
			e^{i s H } e^{  -  is ( p^2 + T )}
			e^{ - i s  \big(  \h (p+P)	 - p^2 - T \big) 				} 
			i  \R
			\bigg{|}_{s= 0}^{t}	\\
			& \quad 	- 
			\label{eq I}
			\frac{1}{i } 
			\int_0^t 
			e^{i s H }
			\frac{1}{i}
			\Big(
			p^2 + T  - H 
			\Big)
			e^{  -  is ( p^2 + T )}
			e^{ - i s  \big(  \h (p+P)	 - p^2 - T \big) 				} 
			i  \R 
			\d s  \ . 
		\end{align}
		The first term in \eqref{eq I} corresponds to
		the boundary term $\B(t)$. Namely, 
		using again $[ \h(p+ P) , p^2 + T]=0	$
		\begin{equation}
			\frac{1}{i } 
			e^{i s H } e^{  -  is ( p^2 + T )}
			e^{ - i s  \big(  \h (p+P)	 - p^2 - T \big) 				} 
			i  \R
			\bigg{|}_{s= 0}^{t}
			= \B(t) \R  \ . 
		\end{equation}
		For the second term in  \eqref{eq I}
		we observe that $p^2+T -H = - \V$. 
		Using again 
		$[ \h(p+ P) , p^2 + T]=0	$
		and $[ \V , p + P]=0 $
		we find 
		\begin{align}
			\nonumber 
			- 	\frac{1}{i } 
			\int_0^t 
			e^{i s H }
			\frac{1}{i}
			\Big(
			p^2 + T  - H 
			\Big)
			e^{  -  is ( p^2 + T )}
			& 	e^{ - i s  \big(  \h (p+P)	 - p^2 - T \big) 				} 
			i  \R 
			\d s   \\ 
			\nonumber 
			& =   
			\frac{1}{i } 
			\int_0^t 
			e^{i s H }
			\frac{1}{i}
			\V  
			e^{  -  is ( p^2 + T )}
			e^{ - i s  \big(  \h (p+P)	 - p^2 - T \big) 				} 
			i \R 
			\d s   \\ 
			\nonumber 
			& =   
			\frac{1}{i } 
			\int_0^t 
			e^{i s H }
			\V  
			e^{ - i s    \h (p+P)	 			} 
			\R 
			\d s   \\ 
			& =    
			\frac{1}{i } 
			\int_0^t 
			e^{i s H }
			e^{ - i s    \h (p+P)	 			} 
			\d s    
			\V  	\R  
			=   \I(t) \V \R \ . 
		\end{align}
		This finishes the proof of the proposition. 
	\end{proof}

	\section{Effective dynamics for massless fields}
	\label{section massless}
	In this section, we apply the integration-by-parts formula 
	\begin{equation}
		\label{int by part formula}					\I(t) = \B(t) \R 		+		\I(t) \V \, \R 
	\end{equation}
	provided by Proposition \ref{prop integration by parts}, 
	in order to prove Theorem \ref{thm 1}. 
	In what follows, we always assume that $\Psi_0$  satisfies Condition \ref{condition 1} relative to  some fixed $\ve $, 
	and that $   V $ satisfies Condition \ref{condition 2}.  
	The field is either massless or massive, 
	but the results in this section are mostly relevant to the massless  case. 
	
	\vspace{1mm}
	We choose  the function  $g (p)$  according to Definition \ref{def g1}
	and  unless confusion arises  we drop the
	following subscript $\h = \h_g$. 
	In addition,  we use the decomposition   $\V  = \V^+ + \V^-$
	\begin{align}
		\label{decomposition}
		\V^+ 
		\equiv
		\mu^{1/2}
		\int_{\RE^3}  V(k) e^{ -  i k \cdot X}b_{k}^* \d k    
		\qquad \t{and}\qquad 
		\V^- 
		\equiv 
				\mu^{1/2}
		\int_{\RE^3}  V(k) e^{ i k \cdot X}b_{k} \d k   
	\end{align}
	in terms of creation and annihilation operators.

	\vspace{2mm}

	Let us recall that the difference $\Psi_{\eff}(t)  -  \Psi(t)$
	was written in terms of $\V$ and $g(p)$ in Remark \ref{remark difference}.
	In what follows, we      integrate-by-parts  the $\V$ term    
	using \eqref{int by part formula}
	and $\V \Psi_0 \in \Q_\Omega \H$, 
where
 $\P_\Omega$ and 
 $\Q_\Omega = \1  - \P_\Omega$ were the projections  introduced in \eqref{def:projections}. 
We find that 
	\begin{align}
		\nonumber 
		e^{i tH } \Big( 
		  \Psi(t) - 	\Psi_{\eff}(t)   
			\Big) 
		& = 
		\I(t) 
		\Big(
		\V  + g(p) 
		\Big)
		\Psi_0      \\ 
		\nonumber 
		& = 
		\I(t) 
		\Big(
		\V \R \V   + g(p) 
		\Big)
		\Psi_0   
		+ 
		\B (t)  \R \V \Psi_0  \\ 
		& = 
		\I(t) 
		\Big(
		\P_\Omega \V \R \V   +  g(p) 
		\Big)
		\Psi_0   
		+ 
		\I(t) 
		\Q_\Omega \V \R \V \Psi_0 
		+ 
		\B (t)  \R \V \Psi_0 \ . 
		\label{step1}
	\end{align}

	The next step is to realize that  our choice of  $g(p)$ 
	forces the first term on the right side to vanish.
	The other two terms then need to be estimated. 
	Let us now calculate the projection onto the vacuum. 
	\begin{lemma}\label{lemma P}
		The following identity holds 
		\begin{equation}
			\label{psi 3}
			\P_\Omega \V \R \V \Psi_0 
		 \, 	=		 \, 
			 - \mu 
			\int   | V (k)|^2 \Big( 
			  \mu\, \omega(k)
			  + 
			  (p-k)^2
			   - \h(p) 
			\Big)^{-1 }   \d k\otimes \1 \  \Psi_0 \ . 
		\end{equation}
	\end{lemma}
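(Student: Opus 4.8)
The plan is to compute $\P_\Omega \V \R \V \Psi_0$ by a direct application of the definitions, using that $\V\Psi_0$ is a one-boson state on which $\R$ acts diagonally in the momentum representation. First I would write $\V\Psi_0 = \V^+ \vp \otimes \Omega$, since $\V^- \vp\otimes\Omega = 0$; explicitly this is $\mu^{1/2}\int V(k)\, e^{-ik\cdot X}\vp \otimes b_k^*\Omega\, \d k$, which indeed lies in $\Q_\Omega\H$ (one-boson sector). Next I would apply $\R = \Q_\Omega(\h(p+P) - p^2 - T)^{-1}\Q_\Omega$. On the one-boson state with boson momentum $k$, we have $P \mapsto k$, $T \mapsto \mu\,\omega(k)$, and $p$ acts on the $L^2_X$ factor; since $e^{-ik\cdot X}$ shifts momentum by $k$ and $\vp$ has momentum support, it is cleanest to pass to the momentum representation where on the component with particle momentum $p$ and boson momentum $k$ the operator $\h(p+P) - p^2 - T$ becomes multiplication by $\h(p) - (p-k)^2 - \mu\,\omega(k)$ — here I use that $\h(p+P)$ acting on the one-boson state gives $\h(p+k)$ in the fiber where the \emph{particle} momentum (i.e. the argument $p=-i\nabla_X$ before the shift) has been lowered appropriately; one has to be careful that after the shift by $e^{-ik\cdot X}$ the quantity $p+P$ evaluates to (particle momentum) which combined with the total-momentum bookkeeping yields the stated denominator. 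I should double-check the sign and the argument of $\h$: the denominator in \eqref{psi 3} is $\mu\omega(k) + (p-k)^2 - \h(p)$, i.e. $-(\h(p) - (p-k)^2 - \mu\omega(k))$, so $\R$ contributes a factor $(\h(p) - (p-k)^2 - \mu\omega(k))^{-1} = -(\mu\omega(k) + (p-k)^2 - \h(p))^{-1}$, which accounts for the overall minus sign.

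Then I would apply the second $\V$. Acting with $\V = \V^+ + \V^-$ on the one-boson state $\mu^{1/2}\int V(k)\, (\mu\omega(k)+(p-k)^2-\h(p))^{-1} e^{-ik\cdot X}\vp \otimes b_k^*\Omega\, \d k$ and then projecting onto $\Omega$ kills the $\V^+$ contribution (it produces a two-boson state) and leaves only $\V^-$ annihilating the single boson: $\V^- b_k^*\Omega = \mu^{1/2}\int V(\ell)\, e^{i\ell\cdot X} \delta(\ell - k)\,\d\ell\, \Omega = \mu^{1/2} V(k)\, e^{ik\cdot X}\Omega$ (using the CCR). The two exponentials $e^{ik\cdot X} e^{-ik\cdot X} = 1$ cancel, the $b$-integrals collapse via the delta, and one is left with $\mu \int |V(k)|^2 (\mu\omega(k)+(p-k)^2 - \h(p))^{-1}\,\d k$ acting as a multiplication operator in $p$ on $\vp$, tensored with $\1$ on the Fock factor, with the overall minus sign from $\R$ — exactly \eqref{psi 3}.

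The main obstacle, and the step deserving the most care, is the bookkeeping of momenta when evaluating $\h(p+P) - p^2 - T$ on the one-boson state after the momentum shift induced by $e^{-ik\cdot X}$: one must track whether the relevant particle momentum entering $(p-k)^2$ and entering $\h(\,\cdot\,)$ is the pre-shift or post-shift variable, and make sure the translation-invariance identity $H_\eff\Psi_0 = \h(p+P)\Psi_0$ used earlier is applied consistently. Concretely I would work in the joint momentum representation $\widehat{\Psi}(p; k)$ for the one-boson sector, where $\V^+\vp\otimes\Omega$ has the form $\mu^{1/2} V(k)\,\widehat\vp(p+k)$ (the argument $p+k$ reflecting that $e^{-ik\cdot X}$ is momentum translation by $k$), and then $\R$ multiplies by $(\h(p) - p^2 + p^2 \text{-reshuffle} \ldots)^{-1}$; after a change of variables $p \to p - k$ in the final integral the denominator takes the symmetric form displayed. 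I also need to note for rigor that $\V\Psi_0 \in D(\R)$ — this is immediate since the resolvent factor $(\mu\omega(k) + (p-k)^2 - \h(p))^{-1}$ is bounded below away from zero under Condition \ref{condition 1} (using $\mu\omega(k) \geq \mu|k|$, $|p| \leq \ve\mu$, and $g(p) = O(1)$), so the $\epsilon\downarrow 0$ limit defining $\R$ is trivial here and no principal-value subtleties arise. Everything else is a routine computation with the CCR and Fubini.
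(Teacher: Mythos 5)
Your proposal is correct and follows essentially the same route as the paper: keep only the $\V^-\R\,\V^+$ contribution, pull the resolvent through $e^{-ik\cdot X}b_k^*$ (the paper does this via the commutation identity $\R\, e^{-ik\cdot X}b_k^* = e^{-ik\cdot X}b_k^*\big(\h(p+P)-(p-k)^2-T-\mu\omega(k)\big)^{-1}$ together with $T\Omega=P\Omega=0$, you via the joint momentum representation), and then contract with the CCR so that the exponentials cancel. Just be sure in the final write-up that the resolvent factor is placed to the \emph{right} of $e^{-ik\cdot X}b_k^*$, so that $p$ in $(p-k)^2-\h(p)$ is the momentum acting directly on $\vp$ — exactly the pre-shift/post-shift bookkeeping point you yourself flagged, which your momentum-space formula $\mu^{1/2}V(k)\hat\vp(p+k)$ resolves correctly.
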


	An immediate consequence of 
	Lemma \ref{lemma P}    is  
	that  thanks to the  choice  of $g  (p)   $ given by Def. \ref{def g1}, it follows from   
	\eqref{step1}    that 
	\begin{equation}
		\label{eq psi 1}
		\|	 \Psi(t)		-		\Psi_\eff (t)	 \|_{\mathscr{H}} 
		\leq  
		\|    \B(t) \R \V \Psi_0		\|
		+ 
		\|	 \I(t)	\Q_\Omega \V \R\V \Psi_0 	\|  \ . 
	\end{equation}
	We dedicate the rest of this section to the proof of Lemma \ref{lemma P} 
and 
	to provide estimates on the remainder terms given by the right hand side of \eqref{eq psi 1}. 
	These estimates are given in Proposition \ref{prop B} and \ref{prop Q}. 
	We informally refer to these two terms as the boundary term, and the projection term, respectively.

	\begin{proof}[Proof of Lemma \ref{lemma P}]
		We decompose $  \V$
		and consider only the combination
		of creation and annihilation operators
		that give a non-zero contribution.
		Namely
		\begin{equation}
			\label{p eq 1}
			\P_\Omega \V \R \V \Psi_0 
			=
			\P_\Omega 
			\int  
 \mu 
			V(k_1)   V( k_2 ) 
			\Big( 			e^{ i k_2 X} 
			b_{k_2} 
			\R  
			e^{ - i  k_1  X} 
			b_{k_1 }^* \big) 
			\Psi_0 	\,		\d k_1
			\d k_2  \   . 
		\end{equation}
		Next, we compute using standard commutation relations that 
		\begin{align}  
			\R   
			e^{ - i  k  X} 
			b_{k}^* 
			& 	  = 
			e^{ - i  k  X} 
			b_{ k}^*  
			\Big(
			\h(p  +P) - (p - k )^2 - T  -  \mu  \,  \omega(k) 
			\Big)^{-1 } \ . 
			\label{p eq 2}
		\end{align}
		In particular,  if we evaluate the above operator on $\Psi_0 = \vp \otimes \Omega$ we find 
		\begin{equation}
			\label{p eq 3}
			\R   
			e^{ - i  k  X} 
			b_{k}^*  \Psi_0 
			=  
			e^{ - i  k  X} 
			b_{ k}^*  
			\Big(
			\h(p  ) - (p - k )^2    - \mu \,  \omega(k) 
			\Big)^{-1 } \Psi_0 \  , 
		\end{equation}
		in view of  $T\Omega = P\Omega=0$. 
		 The CCR now imply that for all $k_1, k_2  \in \RE^3$
		\begin{equation}
			\label{p eq 4}
			\P_\Omega b_{k_2} b_{k_1}^*  
			=  \delta(k_1 + k_2 ) \P_\Omega 
		\end{equation}
		since $\P_\Omega b_{k }^* = 0$. 
		The proof is finished
		once we put together \eqref{p eq 1}, \eqref{p eq 3} and \eqref{p eq 4}. 
	\end{proof}

	Before we turn to the analysis of the remainder estimates,
	let us record here the proof of Theorem \ref{thm 1}.
	
	\begin{proof}[{Proof of Theorem \ref{thm 1}}]
		Starting from the   estimate \eqref{eq psi 1} 
		we use the results from 
		Proposition \ref{prop B} to estimate the boundary term,
		and the result from Proposition \ref{prop Q}
		to estimate the projection term. 
	\end{proof}

	\subsection{Error analysis}  
	Recall that the initial datum $\Psi_0 = \vp \otimes \Omega$
	satisfies Condition \ref{condition 1} with respect to $\ve\in(0,1/2)$, 
	and that $\omega(k) \ge |k|$.
	We also remind the reader of the notation 
	$\omega_\mu (k)   = \omega (k)  + \mu^{-1 }$. 
	
	\begin{proposition}[Boundary term]\label{prop B}
		There exists a constant  $C>0$	 such that  
		\begin{equation}
			\|	 \B(t) \R \V \Psi_0	\|_{\mathscr H}
			\leq 
			\frac{ C }{\mu^{1/2}}
			\|   \omega_\mu^{-1 }  V	\|_{L^2} 
		\end{equation}  
for all $t \in\RE $ and $\mu>0$ large enough. 
	\end{proposition}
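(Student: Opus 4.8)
The plan is to reduce the estimate to a bound on $\|\R\V\Psi_0\|_{\mathscr H}$ by discarding the boundary factor: since $\B(t)=e^{itH}e^{-it\h(p+P)}-\1$ is a difference of two unitary operators on $\mathscr H$, we have $\|\B(t)\|\le 2$ uniformly in $t\in\RE$, hence $\|\B(t)\R\V\Psi_0\|_{\mathscr H}\le 2\,\|\R\V\Psi_0\|_{\mathscr H}$. Since $\V^-\Psi_0=0$, only $\V^+\Psi_0$ contributes; this is a one-boson state, so it lies in $\Q_\Omega\mathscr H$ and is evidently in $D(\R)$ (the resolvent acts boundedly on the one-boson sector over $\supp\hat\vp$).

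Next I would compute $\R\V^+\Psi_0$ explicitly, exactly as in the proof of Lemma \ref{lemma P}: commuting the resolvent past the creation operator as in \eqref{p eq 2} and then using $T\Omega=P\Omega=0$ as in \eqref{p eq 3}, one gets
\begin{equation}
  \R\V^+\Psi_0=\mu^{1/2}\int_{\RE^3}V(k)\,e^{-ik\cdot X}b_k^*\Big[\big(\h(p)-(p-k)^2-\mu\,\omega(k)\big)^{-1}\vp\Big]\otimes\Omega\;\d k\ .
\end{equation}
The bosons in different summands carry distinct momenta, so passing to momentum representation for the tracer particle and inserting $\h(p)=p^2-g(p)$ gives
\begin{equation}
  \|\R\V^+\Psi_0\|_{\mathscr H}^2=\mu\int_{\RE^3}|V(k)|^2\int_{\RE^3}\frac{|\hat\vp(p)|^2\,\d p}{\big(\mu\,\omega(k)+(p-k)^2-p^2+g(p)\big)^2}\,\d k\ .
\end{equation}

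The core of the argument is a lower bound on the denominator. On $\supp\hat\vp\subseteq\{|p|\le\ve\mu\}$ with $\ve<\tfrac12$, together with $\omega(k)\ge|k|$ and $g\ge0$, one has $\mu\,\omega(k)+(p-k)^2-p^2+g(p)\ge(1-2\ve)\mu|k|+g(p)$. Invoking the uniform lower bound $g(p)\ge C_1>0$ for $|p|\le\ve\mu$ — established in Section \ref{section g}, and which also follows quickly from the fixed point equation \eqref{def g} together with the a priori bound $g\le C_2$ and the nondegeneracy of $V$ — this becomes
\begin{equation}
  \mu\,\omega(k)+(p-k)^2-p^2+g(p)\ \ge\ c\,(\mu|k|+1)\ =\ c\,\mu\,\omega_\mu(k),\qquad c:=\min\{\,1-2\ve,\;C_1\,\}\ ,
\end{equation}
for $\mu$ large enough. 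Plugging this in and using $\|\vp\|_{L^2}=1$ yields $\|\R\V^+\Psi_0\|_{\mathscr H}^2\le c^{-2}\mu^{-1}\|\omega_\mu^{-1}V\|_{L^2}^2$, and therefore $\|\B(t)\R\V\Psi_0\|_{\mathscr H}\le 2c^{-1}\mu^{-1/2}\|\omega_\mu^{-1}V\|_{L^2}$, which is the claim with $C=2/c$.

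The only genuinely delicate ingredient is the lower bound $g(p)\ge C_1$: without it the estimate would degenerate into $\||k|^{-1}V\|_{L^2}$, which is infinite for the Nelson form factor \eqref{eq:Nelson:form:factor} — this is precisely the infrared regularization built into the $\mu$-norm \eqref{nu norm}. Everything else is bookkeeping: the explicit form of $\R\V^+\Psi_0$ is already contained in the proof of Lemma \ref{lemma P}, the orthogonality of the one-boson contributions is immediate, and $\|\B(t)\|\le 2$ requires no work. Note also that no $t$-dependence appears here, consistent with the fact that Proposition \ref{prop B} supplies the $t$-independent part of the error in Theorem \ref{thm 1}.
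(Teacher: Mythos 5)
Your proposal is correct and follows essentially the same route as the paper: bound $\|\B(t)\|\le 2$, compute $\R\V^+\Psi_0$ explicitly via the commutation identity from Lemma \ref{lemma P}, pass to the momentum representation, and lower-bound the denominator by $c\,\mu\,\omega_\mu(k)$ using $|p|\le\ve\mu$ together with the uniform bound $g(p)\ge C_1$ from Lemma \ref{lemma g}. The only cosmetic difference is that you state the one-boson norm identity as an equality where the paper invokes the general number estimate \eqref{number estimate} as an inequality, which changes nothing.
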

	
	\begin{remark}
		We 
		record here     the usual estimates      
		for creation-
		and annihilation operators,  extended to $ \mathscr H = L_X^2 \otimes \F $.
		The following is enough for our purposes:  for $ n \in  \mathbb N $
		and 
		$f  \in L^2(\RE^{3n} ; 	 L^2_X	)$ 
		we have that 
		\begin{equation}
			\label{number estimate}
			\bigg\|
			\int_{ \RE^{3n}}    f(k_1, \cdots , k_n ) \otimes 
			b_{k_n}^* \cdots 	b_{k_1}^* \Omega  \  
			\d k_1 	 \cdots \d k_n 
			\, 		\bigg\|_{\mathscr H} 
			\leq  \sqrt{ n!  }		 \|	f 	\|_{L^2 (	L^2_X	)	}  \ . 
		\end{equation} 
	\end{remark}
	
		\begin{remark}\label{remark bounds g}
		In Section \ref{section g} we 
		describe in more detail  the generator $g(p)$. 
		In the present section, we will only need the following bounds,  valid for $\mu$ 
		large enough: 
		\begin{equation}
			g(p)  > 0 
			\qquad 
			\t{and}
			\qquad 
C_1 
			\leq 
			|g(p)|
			\leq 
C_2 
		\end{equation}
		for $|p|\le \ve \mu$ and for some constants $C_1$ and $C_2$. 
		See Lemma \ref{lemma g} for more details. 
	\end{remark}

	\begin{proof}[Proof of Proposition \ref{prop B}]
		First, we note that 
		$		\|	 \B(t) \R \V \Psi_0	\| \leq 2 		\|	  \R \V^+ \Psi_0	\| $.
		Further, 
		we use the decomposition \eqref{decomposition}
		to write $\V \Psi_0 = \V^+ \Psi_0	$.
		Then,  thanks to   \eqref{p eq 3} we find 
		\begin{equation}
			\label{RV+}
			\R \V^+ \Psi_0
			= 
			\mu^{1/2 }
			\int_{\RE^{3}} 
			V(k) e^{- i k X} 
			\Big(
			\h(p ) - (p - k )^2   -  \mu \omega(k) 
			\Big)^{-1 } \vp \otimes b_k^* \Omega   \ \d k  \ . 
		\end{equation} 
		Next,   the  number estimates 
		\eqref{number estimate}
		imply that in the  momentum representation 
		\begin{align}
			\label{eq1}
			\|		 \R \V^+ \Psi_0	\|_{\mathscr{H}}^2
			& 	\leq  
			\int_{\RE^{6}} 
			\frac{  \mu \, 	|	   V (k)|^2  \, 		|  \hat \vp(p)	|^2 \, 	\d p \d k  	}{	
				|  \h(p ) - (p - k )^2   -  \mu \omega(k) 	|^2	} 	 \   . 
		\end{align}
		The next step is to find an appropriate lower 
		bound of the   denominator in the above integrand.
		To this end, we use the lower bound 
		$|g(p)| \geq C_1 $ from Remark \ref{remark bounds g}, 
		the fact that $|p| \leq \ve \mu $
		and $\omega(k) \geq \mu |k|$ to find that 
		\begin{align}
			\nonumber
			|  \h(p ) - (p - k )^2   -  \mu \omega(k) 	|
			& \   \geq  \ 
		\mu 	\omega(k)	+  2 p \cdot k + k^2 +  g(p)   	\\ 
			\nonumber
			&  \ 	\geq  \   
			(1 - 2 \ve) \mu \omega(k) 
			+ C_{1 } 	\\ 
			& \ 		\geq  \ 
			C \mu 
			\big(	 \omega(k) +  \mu^{-1 }		\big) 
			\label{eq11}
		\end{align} 
		where $C = \min(	 1- 2 \ve , C_1	)$. 
		The proof   is finished once
		we combine \eqref{eq1}
		and \eqref{eq11}. 
	\end{proof}

	Next, we turn to   the following proposition, 
	in which we estimate the projection term.
	
	\begin{proposition}[Projection term]\label{prop Q}
		There exists a constant  $C>0$	 such that  
		\begin{equation}
			\|	 \I(t) 
			\Q_\Omega \V \R \V \Psi_0	\|_\H 
			\leq 
			\frac{C }{\mu }
			\big(	1 +    	\mu^{1/2}	 | 	t | 	\big)
			\|	  \omega_\mu^{-1 }   V	\|^2_{L^2}
		\end{equation}  
	for all $t \in \RE $ and $\mu>0$ large enough. 
	\end{proposition}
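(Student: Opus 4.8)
The plan is to estimate $\I(t)\Q_\Omega\V\R\V\Psi_0$ by first analyzing the state $\Q_\Omega\V\R\V\Psi_0$, which lies in the two-boson sector, and then using the fact that $\I(t)$ has operator norm at most $|t|$ on top of an integration-by-parts improvement available because the state is orthogonal to the vacuum. Concretely, I would begin by writing $\V\R\V^+\Psi_0$ explicitly: using \eqref{p eq 3} to commute $\R$ past the creation operator, one finds that $\R\V^+\Psi_0$ is the one-boson state displayed in \eqref{RV+}, and then applying $\V = \V^+ + \V^-$ one more time produces a two-boson contribution (from $\V^+$) and a zero-boson contribution (from $\V^-$). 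The zero-boson part is exactly $\P_\Omega\V\R\V\Psi_0$, which we subtract off via $\Q_\Omega$, so only the genuine two-boson term survives. Using the number estimate \eqref{number estimate} with $n=2$ and the denominator lower bound $|\h(p)-(p-k)^2-\mu\omega(k)| \geq C\mu(\omega(k)+\mu^{-1}) = C\mu\,\omega_\mu(k)$ established in \eqref{eq11}, together with an analogous bound on the second resolvent denominator, one gets
\begin{equation}
\|\Q_\Omega\V\R\V\Psi_0\|_{\mathscr H} \leq \frac{C}{\mu}\,\|\omega_\mu^{-1}V\|_{L^2}^2 \ .
\end{equation}

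Next, I would handle the factor $\I(t)$. The crude bound $\|\I(t)\Psi\| \leq |t|\,\|\Psi\|$ (since $\I(t)$ is an integral over $s\in[0,t]$ of a product of unitaries) already gives the term proportional to $|t|\,\mu^{-1/2}\,\|\omega_\mu^{-1}V\|_{L^2}^2$ after writing $|t| = \mu^{-1/2}\cdot\mu^{1/2}|t|$ — wait, more carefully: $|t|\cdot\frac{C}{\mu}\|\omega_\mu^{-1}V\|^2 = \frac{C}{\mu}|t|\,\|\omega_\mu^{-1}V\|^2$, and to match the claimed bound $\frac{C}{\mu}(1+\mu^{1/2}|t|)\|\omega_\mu^{-1}V\|^2$ this is already the $\mu^{1/2}|t|$ piece up to the extra $\mu^{-1/2}$, so in fact the naive bound is even slightly better than needed for large $|t|$; the point of the "$1+$" term is to cover small $|t|$. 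For the $|t|$-independent part of the bound, I would apply the integration-by-parts formula \eqref{int by part formula} once more to $\I(t)$ acting on the $\Q_\Omega$-state: $\I(t)\Q_\Omega(\V\R\V\Psi_0) = \B(t)\R(\V\R\V\Psi_0) + \I(t)\V\R(\V\R\V\Psi_0)$. The boundary term $\B(t)\R\V\R\V\Psi_0$ is bounded uniformly in $t$ by $2\|\R\V\R\V\Psi_0\|$, and an additional resolvent gains another factor of $\mu^{-1}$ times a power of $\|\omega_\mu^{-1}V\|$; combined with Schwarz and the number estimate this produces the $\mu^{-1}$-type contribution without a $|t|$ prefactor. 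One then bounds the remaining $\I(t)\V\R(\cdots)$ term by $|t|$ times the norm of a three-boson state, which is of even higher order in $\mu^{-1}$, and absorb it.

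The main obstacle I anticipate is keeping careful track of the domains and the precise combinatorial structure when the creation and annihilation operators from the three copies of $\V$ act on $\R$: one must verify that $\V\R\V\Psi_0 \in D(\R)\cap D(\V\R)$ so that the integration-by-parts formula of Proposition \ref{prop integration by parts} applies, and one must correctly identify which Wick contractions contribute to $\Q_\Omega\V\R\V\V$-type terms versus the already-subtracted vacuum piece. The resolvent denominators coming from intermediate two-boson states have the form $\h(p+P)-p^2-T$ evaluated on states with two bosons of momenta $k_1,k_2$, giving $\h(p)-\text{(shifted }p\text{)}^2 - \mu\omega(k_1)-\mu\omega(k_2)$ after using $P\Omega=T\Omega=0$ on the appropriate reduced state; I need the analogue of \eqref{eq11} showing this is bounded below by $C\mu(\omega(k_1)+\omega(k_2)+\mu^{-1})\geq C\mu\,\omega_\mu(k_1)$, which follows from $|p|\leq\varepsilon\mu$, $\omega(k)\geq\mu^{-1}\cdot$(nothing — rather $\mu\omega(k)\geq\mu|k|$), and the positivity lower bound $g(p)\geq C_1$ from Remark \ref{remark bounds g}, exactly as in the proof of Proposition \ref{prop B}. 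Once these bounds are in hand, the remaining estimates are routine applications of Cauchy–Schwarz and \eqref{number estimate}, and the final bound assembles by the triangle inequality from the boundary contribution (order $\mu^{-1}\|\omega_\mu^{-1}V\|^2$, no $|t|$) and the $\I(t)$-times-two-boson contribution (order $\mu^{-1}|t|\,\|\omega_\mu^{-1}V\|^2$, which since $|t|\leq\mu^{1/2}|t|$ is dominated by $\mu^{-1}\mu^{1/2}|t|\,\|\omega_\mu^{-1}V\|^2$).
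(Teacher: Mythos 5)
There is a genuine quantitative gap at the heart of your proposal: the intermediate bound $\|\Q_\Omega\V\R\V\Psi_0\|_{\mathscr H}\leq C\mu^{-1}\|\omega_\mu^{-1}V\|_{L^2}^2$ is false. The state $\Q_\Omega\V\R\V\Psi_0=\V^+\R\V^+\Psi_0$ contains \emph{two} interaction operators but only \emph{one} resolvent, so there is no ``second resolvent denominator'' to exploit: writing it out via \eqref{RV+} and applying the number estimate with the single bound $\|R_p(k_1)\|\leq C(\mu\,\omega_\mu(k_1))^{-1}$ gives only
\begin{equation*}
\|\V^+\R\V^+\Psi_0\|_{\mathscr H}\ \leq\ C\,\|V\|_{L^2}\,\|\omega_\mu^{-1}V\|_{L^2}\ ,
\end{equation*}
with no decay in $\mu$ at all (the factor $\mu$ from the two $\mu^{1/2}$'s in $\V$ cancels the single $\mu^{-1}$ from the resolvent). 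Consequently your primary route for the $|t|$-dependent part — the crude bound $\|\I(t)\|\leq|t|$ applied directly to this state — produces $|t|\,\|V\|_{L^2}\|\omega_\mu^{-1}V\|_{L^2}$, which is far larger than the claimed $\mu^{-1/2}|t|\,\|\omega_\mu^{-1}V\|_{L^2}^2$ and does not prove the proposition.

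The correct mechanism, which is in fact the second route you sketch and is exactly the paper's argument (Remark \ref{remark obs}), is to apply the integration-by-parts identity \eqref{int by part formula} to $\I(t)$ acting on the two-boson state: $\I(t)\Phi=\B(t)\R\Phi+\I(t)\V\R\Phi$ with $\Phi=\V^+\R\V^+\Psi_0$. The gain comes from the extra resolvent hitting $\Phi$: commuting $\R$ through both creation operators produces the two denominators $R_p(k_1)$ and $R_p(k_1,k_2)$ of \eqref{resolvent 1}--\eqref{resolvent 2}, whence $\|\R\V^+\R\V^+\Psi_0\|\leq C\mu^{-1}\|\omega_\mu^{-1}V\|_{L^2}^2$; the boundary piece is then $O(\mu^{-1}\|\omega_\mu^{-1}V\|_{L^2}^2)$, and the remaining piece is bounded by $|t|\,\|\V(\calN+1)^{-1/2}\|\sqrt{3}\,\|\R\Phi\|\leq C|t|\,\mu^{1/2}\|V\|_{L^2}\cdot\mu^{-1}\|\omega_\mu^{-1}V\|_{L^2}^2$. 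Note that this last term is \emph{not} ``of even higher order to be absorbed,'' as you assert — it is precisely the dominant $\mu^{-1}\mu^{1/2}|t|$ contribution in the statement. So the skeleton of the paper's proof is present in your write-up, but the resolvent bookkeeping must be corrected: the $\mu^{-1}$ gain is attached to $\R\V^+\R\V^+\Psi_0$ (two resolvents), never to $\V^+\R\V^+\Psi_0$ itself, and both the $t$-independent and the $t$-dependent parts of the final bound must be routed through that state.
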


	\begin{remark}\label{remark obs}
		Before we turn to the proof,
		we make the following   observation.
		Let $ n \geq 1 $ and let   $\Phi \in \mathscr{H}$ be an $n$-particle state, i.e. $\calN \Phi = n \Phi$. 
		Then
		\begin{equation}
			\label{obs}
			\|	 \I(t)  \Phi 	\|_\H 
			\leq 
			\big(	\, 2 \, +  \, |t|  \,	 \mu^{1/2}	\,	 \|   V	\|_{L^2}  \, (n + 1)^{\frac{1}{2}}	 \, 	\big)  \|	 \R \Phi	\|_\H  	 \ . 
		\end{equation}
		Indeed, 	starting from \eqref{int by part formula} one uses that $\|	 \B(t)\| \leq 2 $
		as well as $\|	 \I(t)	\| \leq |t|$ in operator norm.
		Recall that $[\R,N]=0$. 
		Thus,     $\|	  \V \R \Phi 	\|  \leq   \| \V (\calN + 1)^{-1/2}	  (n+1)^{1/2} 	\R \Phi\|	$.
		Then, it suffices to use  the  standard estimate
		$ \| \V (\calN + 1)^{-1/2}	\| \leq	 \mu^{1/2}	 \|	   V\|_{L^2}$
		
	\end{remark}

	\begin{proof}[Proof of Proposition \ref{prop Q}]
		We use the observation \eqref{obs}
		as well as the decomposition \eqref{decomposition}
		of the interaction term to find 
		that for some constant $C>0$
		\begin{equation}
			\|		 \I(t)		\Q_\Omega \V \R \V \Psi_0		\|_\H 
			\ \leq  \ 
			C 
			\big(	1 +  \mu^{1/2} | t|  	\big)   
			\|	 \R \V^+ \R\V^+ \Psi_0	\|_\H   \ . 
		\end{equation}
		It suffices now to estimate the norm on the right hand side. 
		To this end,  we do a 
		two-fold application of the commutation relation \eqref{p eq 2}
		to find that
		in analogy with  \eqref{RV+} 
		\begin{align}
			\label{V++}
			\R \V^+ \R \V^+ \Psi_0 		
			= 
			\mu 
			\int_{\RE^{6 }}		   V(k_1)		  V(k_2)
			e^{-i (k_1 + k_2) X }
			R_p (k_1) R_p ( k_1, k_2)\vp \otimes b_{k_2}^* b_{k_1}^*	\Omega \d k_1 \d k_2  \ . 
		\end{align}
		Here,  $R_p (k_1)$  is the operator 
		that 
		appears  from the commutation of $\R$ and \textit{one} of the $b_k^*$ operators.
		The resolvent $R_p (k_1, k_2)$ appears due to the  commutation of $\R$ 
		and \textit{two} $b_k^*$ operators.
		They are defined   via spectral calculus for $p = -i \nabla_X$ 
		on the subspace $\1 (|p | \leq \ve \mu) L_X^2 $ via the formulae: 
		\begin{align}
\label{resolvent 1}
			R_p(k_1)  
			&   : = 
			\Big(
			\h_g(p) - (p - k_1)^2 -  \mu \omega (k_1) 
			\Big)^{-1 } \  ,  \\
\label{resolvent 2}
			\textstyle 
			R_p(k_1, k_2)
			&  :   =  
			\Big(
			\h_g(p) - (p - k_1 - k_2)^2 - \mu  \omega (k_1)  -\mu  \omega (k_2)
			\Big)^{-1 } \  , 
		\end{align}
and extended by zero to $L_X^2$. 
		In particular, for  $|p| \leq \ve \mu$
		we may replicate the lower bounds \eqref{eq11} 
		for the denominators  
		to find that the  following operator norm estimates hold, 
		for a constant $C>0$
		\begin{equation}
			\label{resolv estim 1}
			 \| R_p(k_1,k_2) \| 
			\leq 
				\frac{ C  }{\mu  \,  	 \omega_\mu(k_2) 		}   
			\qquad
			\t{and}
			\qquad 
			 \|	 R_p(k_1)	\| 
			\leq 
			\frac{C }{\mu   \,  \omega_\mu(k_1)}     \   . 
		\end{equation} 
		Finally,  we proceed analogously as we did in the proof of Proposition \ref{prop B}.
		That is, 
		we combine
		the number estimate \eqref{number estimate}
		and 
		the resolvent bounds  \eqref{resolv estim 1}
		to find  that  
		\begin{align}
			\nonumber 
			\|	   \R \V^+ \R \V^+ \Psi_0 			\|_{\mathscr{H}}^2
			&  \  	\leq  \ 
			2  \mu^2 
			\int_{\RE^{6}}
			|	   V(k_1)	|^2 \, 
			|	   V(k_2)		|^2 \, 
		\|	 R_p (k_1) R_p (k_1, k_2 ) 		\vp	\|^2_{L_X^2 }
		 \,  \d k_1 \d k_2 \ \\
			& \  \leq  \ 
			\frac{C }{\mu^2 }
			\|	  \omega_\mu^{-1 }   V	\|^4_{L^2}   \ . 
		\end{align}
		This finishes the proof. 
	\end{proof}

	\section{Analysis of the effective generator}
	\label{section g}
	In this section 
	we analyze    
	the effective  generator  $g(p)$
for  massless boson fields, with dispersion  $\omega (k) = |k|$.  That is, 
	the function 
	on $B_{\mu/2} \equiv \{ |p |< \frac{1}{2} \mu\}$
	satisfying 
	\begin{equation}
		\textstyle 
		g(p) = \mu \int_{\R^3 }		|V(k)|^2  \big( 	 \mu |k| + (p-k)^2 - p^2 + g(p )  \big)^{-1 } \d k 
		\label{g}
	\end{equation}
and first introduced in 
	Definition \ref{def g1}. 
We remind the reader that  in  Theorem \ref{thm 1} 
we proved the  validity  of the approximation with the effective Hamiltonian $\h_g = p^2  - g(p)$.

	\vspace{2mm}
	Our  analysis contains  two parts. 
	First, we   prove that 
	$g(p)$ can be approximately solved in terms of an    explicit  analytic  function of $p \in  B_{ \mu /2}$. 
	The error in the approximation is  uniform over compact sets of $B_{\mu/2}$. 
	Secondly, 
	based on this analytic approximation,   we 
	introduce a sequence of polynomial generators, corresponding   to the truncations of the   associated power series.
These polynomials 	 then induce a sequence of effective Hamiltonians $\h_g^{(N)}$. 
	Combined with Theorem \ref{thm 1}, 
	we  can study their validity as an approximation 
	of the original dynamics. This will lead to a proof of Corollary \ref{cor:explicit:g}. 
We then analyze the   optimality of our approach in Theorem \ref{thm:non-convergence}. 
	
	\vspace{2mm}
	
	Throughout this section, we   assume  that  $V$ satisfies Condition \ref{condition 2}.

	\subsection{Solving for the generator}
	The first step towards solving for $g(p)$
	is  the following result 
	that was already used in the last section.

	\begin{lemma}\label{lemma g}
	Fix   $\ve \in [ 0 , \frac{1}{2} )$. 		Then, 
		for all $ |  p |  \leq \ve \mu $ 
		\begin{equation}
			\bigg( 
			\frac{ 1  + \delta_{\ve,V} (\mu)}{ 1 + 2 \ve	}
			\bigg)
			\int_{\R^3} 
			\frac{|V(k)|^2}{|k| } \d k 
			\, 	\leq  \, 
			g(p) 
			\, 	\leq  \, 
			\bigg( 
			\frac{ 1   }{ 1 - 2 \ve	}
			\bigg) 
			\int_{\R^3} 
			\frac{|V(k)|^2}{|k| } \d k  \ ,
		\end{equation}
		where $\delta_{\ve,V}(\mu) \rightarrow 0 $ as $ \mu \rightarrow \infty$. 
	\end{lemma}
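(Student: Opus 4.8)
The plan is to derive the bounds directly from the fixed-point equation \eqref{g} by controlling the denominator $\mu|k| + (p-k)^2 - p^2 + g(p) = \mu|k| + 2p\cdot k + k^2 + g(p)$ using the constraint $|p| \le \ve\mu$. First I would establish the two-sided elementary inequality
\begin{equation}
	(1-2\ve)\,\mu|k| + g(p) \ \le\ \mu|k| + 2p\cdot k + k^2 + g(p) \ \le\ (1+2\ve)\,\mu|k| + k^2 + g(p)\ ,
\end{equation}
valid for all $k\in\R^3$ and $|p|\le\ve\mu$, using only Cauchy--Schwarz ($|2p\cdot k|\le 2\ve\mu|k|$) and $k^2\ge 0$ on the left. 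Dividing and integrating against $\mu|V(k)|^2\,\d k$ turns \eqref{g} into the pair of scalar inequalities
\begin{equation}
	\mu\!\int_{\R^3}\!\frac{|V(k)|^2\,\d k}{(1+2\ve)\mu|k| + k^2 + g(p)}\ \le\ g(p)\ \le\ \mu\!\int_{\R^3}\!\frac{|V(k)|^2\,\d k}{(1-2\ve)\mu|k| + g(p)}\ .
\end{equation}

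Next I would turn each of these into an explicit bound on $g(p)$. For the upper bound: since $g(p)>0$ (established already in the sketch below Definition \ref{def g1}, using $F(p,0)>0$), dropping the $g(p)$ term in the denominator on the right only enlarges the integral, giving $g(p)\le (1-2\ve)^{-1}\!\int |V(k)|^2|k|^{-1}\d k$ after cancelling $\mu$. This is exactly the claimed upper bound, and it is finite by Condition \ref{condition 2}. For the lower bound: I would combine the already-known upper bound with the left inequality. Inserting $g(p)\le C_2 := (1-2\ve)^{-1}\||k|^{-1/2}V\|_{L^2}^2$ and also $k^2\le \Lambda k^2/(\mu|k|)\cdot\mu|k|$... more cleanly, on the support-free level I would bound $k^2 + g(p) \le \big(\tfrac{\|k\|_{L^\infty(\supp V)}}{\mu} + \tfrac{C_2}{\mu|k|}\cdot\text{(something)}\big)$ --- but since $V$ need not be compactly supported under Condition \ref{condition 2}, the cleaner route is to write $k^2 + g(p) \le \delta_{\ve,V}(\mu)\,\mu|k|$ only after noting $k^2/(\mu|k|) = |k|/\mu$ and $g(p)/(\mu|k|)\le C_2/(\mu|k|)$, so this is \emph{not} uniformly small in $k$. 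Hence the correct bookkeeping: factor $(1+2\ve)\mu|k|$ out of the denominator to get
\begin{equation}
	g(p)\ \ge\ \frac{\mu}{1+2\ve}\int_{\R^3}\frac{|V(k)|^2\,\d k}{\mu|k|}\cdot\frac{1}{1 + \frac{k^2 + g(p)}{(1+2\ve)\mu|k|}}\ \ge\ \frac{1}{(1+2\ve)\big(1+\eta_{\ve,V}(\mu)\big)}\int_{\R^3}\frac{|V(k)|^2}{|k|}\,\d k\ ,
\end{equation}
provided one can show $\frac{k^2+g(p)}{\mu|k|}\le \eta_{\ve,V}(\mu)\to 0$ uniformly on the region where $|V(k)|^2|k|^{-1}$ carries its mass; then set $\delta_{\ve,V}(\mu) = -\eta_{\ve,V}(\mu)/(1+\eta_{\ve,V}(\mu))$ to match the stated form $(1+\delta_{\ve,V})/(1+2\ve)$.

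The main obstacle, and the one point requiring genuine care, is precisely this last uniformity: because Condition \ref{condition 2} only gives $V,\ |k|^{-1/2}V\in L^2$ --- not compact support or pointwise decay --- the quantity $(k^2+g(p))/(\mu|k|) = |k|/\mu + g(p)/(\mu|k|)$ is neither small for large $|k|$ nor small near $k=0$. I would handle this by a standard split of the integral: for any fixed $R>0$ and $\rho>0$, on the annulus $\rho\le|k|\le R$ the ratio is bounded by $R/\mu + C_2/(\mu\rho)\to 0$, while the contributions from $\{|k|<\rho\}$ and $\{|k|>R\}$ to $\int|V(k)|^2|k|^{-1}\d k$ can be made arbitrarily small by dominated convergence (finiteness of $\||k|^{-1/2}V\|_{L^2}$ and $\|V\|_{L^2}$). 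A diagonal/$\ve_0$-argument in $R,\rho$ then produces a single sequence $\delta_{\ve,V}(\mu)\to 0$, completing the proof. Everything else is elementary manipulation of the scalar fixed-point inequality, so I do not anticipate further difficulties.
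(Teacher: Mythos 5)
Your proposal is correct and takes essentially the same route as the paper: the upper bound by dropping $k^2\ge 0$ and $g(p)>0$ from the denominator, and the lower bound by inserting $g(p)\le C_2$, restricting to an annulus on which $k^2$ and the constant are $o(\mu|k|)$, and recovering the full integral in the limit --- the paper's choice $\theta=\mu^{-1/2}$, i.e.\ the annulus $\mu^{-1/2}\le |k|\le \mu^{1/2}$ with monotone convergence, is precisely your diagonalization in $R,\rho$. The only blemish is the harmless sign slip $(p-k)^2-p^2=k^2-2p\cdot k$ (not $+2p\cdot k$), which does not matter since you only use $|2p\cdot k|\le 2\ve\mu|k|$.
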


	\begin{proof}
		First, we prove the upper bound.
		Recall that $g(p)>0$ and
		therefore we can estimate  the denominator
		$| 	\h_g(p)	-	(p -  k )^2 - \mu \omega (k)|   \geq ( 1 - 2 \ve)  \mu |k| $.
		It suffices to plug this bound back in \eqref{g} and use  the triangle inequality.
		We write $C_2  \equiv (1 - 2\ve)^{-1} \int \d k |k|^{-1} |V(k)|^2$.

		\vspace{1mm}
		Now, we prove the lower bound. 
		To this end, let $ \theta  \in (0,1)$. 
		Then, we find  
		\begin{align}
			\nonumber
			g(p) & =  \mu
			\int_{\R^3 } \frac{   | V(k)|^2	 \d k 	}{ 		 \big(
				\mu \omega(k)  -  2 p\cdot k + k^2  + g(p) 
				\big)	 }  		\\ 
			& \geq \mu
			\int_{\R^3 } \frac{  | V(k)|^2	 \d k 	}{ 		 \big(
				(1 + 2 \ve ) \mu |k |+ k^2  +  C_2 
				\big)	 }  \notag \\ 
			\nonumber
			& \geq \mu
		 \int_{ 	 \{  \theta^{-1}  \mu^{-1} \leq |k| \leq \theta   \mu				\}	}
		\frac{ | V(k)|^2	 \d k 	}{ 		 \big(
				(1 + 2 \ve ) \mu |k |+ k^2  +  C_2 
				\big)	 }  	 \\
			& \geq 
			\frac{	   	  \int_{ 	 \{  \theta^{-1}  \mu^{-1} \leq |k| \leq \theta   \mu				\}	}	|k|^{-1 }|V(k)|^2	\d k	}{ \big(  1 + 2\ve + \theta  ( 1 + C_2)  	\big)   	}   \ . 
		\end{align}
		where in the last line we used $k^2 \leq  \theta  \mu |k|$
		and $1 \leq  \theta  \mu |k|$.  
		It suffices now to take $ \theta = \mu^{-1/2} $, 
		apply  the monotone convergence theorem, 
		and perform elementary manipulations. 
	\end{proof}

In our next result,  
	we describe the  generator $g(p)$
	in terms of an explicit  function of   $ p  \in B_{\mu/2} $, plus  a small error term. 
	To this end, we denote  $g_0 := g (0) > 0$. 
	We also introduce the notation for the denominator
	\begin{equation}
		D(k) 
		: =  \mu |k| + k^2 +  g_0  \ , \qquad k \in \R^3 \  , 
	\end{equation}
 which we use extensively in the rest of this section.

	\begin{proposition}\label{prop expansion g}
	Fix $\ve \in [ 0 , \frac{1}{2} )$. 
		Then, there is  a constant $C>0$
		such that 
		for all $|p|\leq \ve  \mu$ 
		\begin{equation}
			\label{g eq 2}
			\bigg|
			g(p)	
			-
			\frac{\mu}{2  |p|}
			\int_{\R^{3}} \frac{|V(k)|^2}{|k|}
			\tanh^{-1} \bigg(	 \frac{2 |p| |k |}{D(k)}		\bigg) \  \d k  
			\bigg|
			\leq 
			\frac{C }{\mu} \Vu^2  \  , 
		\end{equation} 
		for all  $\mu>0$ large enough. 
	\end{proposition}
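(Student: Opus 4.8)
The plan is to peel off the one place where the right‑hand side of the fixed‑point equation \eqref{g} genuinely depends on $p$ in a complicated way, namely the appearance of $g(p)$ inside its own denominator, and to replace it by the constant $g_0=g(0)$. Concretely, I would introduce the comparison function
\begin{equation*}
  \tilde g(p)\ :=\ \mu\int_{\R^3}\frac{|V(k)|^2\,\d k}{D(k)-2p\cdot k},\qquad D(k)=\mu|k|+k^2+g_0 ,
\end{equation*}
and prove the proposition in two independent steps: first that $\tilde g(p)$ coincides \emph{exactly} with $g_{\eff}(p)$, the explicit $\tanh^{-1}$‑expression of \eqref{eq:approx:sol:g} (equivalently, the integral subtracted in \eqref{g eq 2}); and second that $|g(p)-\tilde g(p)|\le C\mu^{-1}\Vu^2$ uniformly on $\{|p|\le\ve\mu\}$.

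For the first step I would use that $V$ is radial and pass to polar coordinates $k=r\omega$. For fixed $r=|k|$ and with $u=\cos\theta$ the cosine of the angle to $p$, the denominator is $D(r)-2|p|ru$, which is strictly positive on $|p|\le\ve\mu$ since $D(r)-2|p|r\ge(1-2\ve)\mu r>0$; in particular $2|p|r/D(r)\le 2\ve<1$ and the inverse hyperbolic tangent below is well defined. The elementary identity $\int_{-1}^{1}(a-bu)^{-1}\,\d u=\tfrac1b\log\tfrac{a+b}{a-b}=\tfrac2b\tanh^{-1}(b/a)$ then turns the angular integral into $\frac{2\pi}{|p|r}\tanh^{-1}\!\big(2|p|r/D(r)\big)$, and combining with the $r^2\,\d r$ radial measure produces precisely $\tilde g(p)=\frac{\mu}{2|p|}\int_{\R^3}\frac{|V(k)|^2}{|k|}\tanh^{-1}\!\big(2|p||k|/D(k)\big)\,\d k=g_{\eff}(p)$. (At $p=0$ the factor $\tanh^{-1}(x)/x\to1$ recovers $g_{\eff}(0)=\mu\int|V(k)|^2 D(k)^{-1}\,\d k=g_0$, consistent with \eqref{g} at $p=0$.)

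For the second step I would note that the denominator in \eqref{g} equals $D(k)-2p\cdot k+\big(g(p)-g_0\big)$, so subtracting and recombining the two fractions yields
\begin{equation*}
  g(p)-\tilde g(p)\ =\ -\,\mu\,\big(g(p)-g_0\big)\int_{\R^3}\frac{|V(k)|^2\,\d k}{\big(D(k)-2p\cdot k+g(p)-g_0\big)\big(D(k)-2p\cdot k\big)} .
\end{equation*}
Both denominators are bounded below, uniformly on $|p|\le\ve\mu$, by $c\,\mu\,\omega_\mu(k)$: indeed $D(k)-2p\cdot k\ge(1-2\ve)\mu|k|+g_0$ and $D(k)-2p\cdot k+g(p)-g_0\ge(1-2\ve)\mu|k|+g(p)$, and the uniform positive lower bound $g(p),g_0\ge C_1>0$ from Lemma \ref{lemma g} promotes $(1-2\ve)\mu|k|+C_1$ to $c\,\mu(|k|+\mu^{-1})=c\,\mu\,\omega_\mu(k)$. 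Hence the integral is bounded by $C\mu^{-2}\int|V(k)|^2\omega_\mu(k)^{-2}\,\d k=C\mu^{-2}\Vu^2$, while $|g(p)-g_0|\le C$ again by Lemma \ref{lemma g}; multiplying by the prefactor $\mu$ gives $|g(p)-\tilde g(p)|\le C\mu^{-1}\Vu^2$, which is \eqref{g eq 2}.

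Neither step poses a real obstacle: the first is an exact one‑dimensional integration, and the second merely recombines the resolvent lower bounds already established in Section \ref{section massless} with the two‑sided bound of Lemma \ref{lemma g}. The only mild point of care is that every estimate must hold uniformly for $|p|\le\ve\mu$ rather than on compact $p$‑sets, which is automatic here because the constants $C_1,C_2$ of Lemma \ref{lemma g} depend only on $\ve$ and on $V$, not on $p$.
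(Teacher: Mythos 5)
Your proposal is correct and follows essentially the same route as the paper: replace $g(p)$ by $g_0$ in the denominator, bound the resulting difference by $C\mu^{-1}\Vu^2$ using the two-sided bounds of Lemma \ref{lemma g} and the lower bound $D(k)-2p\cdot k+\dots \geq c\,\mu\,\omega_\mu(k)$, and then identify the remaining integral with the $\tanh^{-1}$ expression by an exact angular integration. The only (cosmetic) difference is that you integrate $\int_{-1}^{1}(a-bu)^{-1}\,\d u$ directly, while the paper first symmetrizes the denominator as $D(k)^2-(2p\cdot k)^2$ before performing the same angular integral.
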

	
	\begin{proof}
		Let us denote 
		$D (p,k)  \equiv   \mu |k| +k^2 - 2p\cdot k  +  g(p) $
		so that  $g(p) = \mu \int \d k  |V(k)|^2 / D(p,k)$. 
		In particular,    Lemma \ref{lemma g}  
		implies that   $C_1  \leq  g(p)  \leq C_2 $
		for a pair of constants $C_1$ and $C_2$. 
		Thus, we find  the following   lower bounds for the denominators
		\begin{equation}
			\label{D:lower}
			D(p,k)  \, \geq  \, 
			C_3 \,  (  \mu  |k| +  1 ) 
			\qquad \t{and}\qquad   D(k)	 \pm  2 p \cdot k 	 
			\, \geq \,  C_3 \,   (  \mu |k|	 + 1 	)
		\end{equation}
		for an appropriate constant $C_3>0$.

		\vspace{1mm}
		The first step towards \eqref{g eq 2}  is to  compare   $g(p)$  
		and its denominator $D(p,k)$
		with the simpler one $D(k) - 2p\cdot k $. 
		We find
		\begin{equation}
			\label{g eq 1}
			\frac{g(p)}{\mu}	 =  
			\int_{\R^3 }
			\frac{| V(k)	 	|^2		\d k 	}{  D(p,k)}  
			=
			\int_{\R^3 }
			\frac{| V(k)		|^2		\d k }{D(k) - 2 p \cdot k } 
			+
			\big(   g(p)  - g_0 	\big)  \int_{\R^{3}}	
			\frac{  |V(k)|^2\d k  }{	   D (p,k) \big(	 D(k) - 2 p \cdot k	\big)	}  .
		\end{equation}
		Thus, \eqref{g eq 1} and \eqref{D:lower} imply 
		\begin{align}
			\nonumber 
			\bigg| 	\frac{	g(p)}{\mu}	- 
			\int_{\R^3 }
			\frac{| V(k)		|^2		\d k }{D(k) - 2 p \cdot k } 
			\bigg| 
			\leq 
			\frac{			2 C_2}{C_3^2}
			\int_{\R^3} 
			\frac{|V(k)|^2 \, \d k }{	 \big(	  \mu |k|  + 1 	\big)^2					}	 
			=  
			\frac{C}{\mu^2} \Vu^2	 \ , 
		\end{align} 
		where 
		$
		C \equiv 
		2 C_2  / C_3^2
		$
		and we remind the reader that $\Vu$
		is given by \eqref{nu norm}. 
		This gives the estimate   \eqref{g eq 2}  in the statement of the proposition.
		
		\vspace{1mm}
		
		The second step  is   to analyze the integral on the left hand side of \eqref{g eq 2}. 
		To this end, we use that $V(k)$ and $D(k)$ are rotational symmetric, and thus we compute 
		\begin{align}
			\int_{\R^3 }
			\frac{| V(k)		|^2		\d k }{D(k) - 2 p \cdot k } 
			= 
			\int_{\R^3 }
			\frac{| V(k)		|^2	 D(k)	\d k }{( D(k) - 2 p \cdot k) (D(k) + 2 p \cdot k ) } 
			= 
			\int_{\R^3 }
			\frac{| V(k)		|^2	  	\d k }{D(k)  \big(   1 -     | 	 \frac{2 p \cdot k}{D(k)}	|^2 \big) 	 }  \ . 
		\end{align}
		In the last integral we can compute the 
		angular integration.
		Let us denote $\rho = |k| \in (0,\infty)$ and $\omega = k /|k| \in \mathbb S^2$.
		Without loss of generality, we may assume inside the integral that  $p \cdot  k = |p| k_3 $ and, therefore, 
		in spherical coordinates we compute
		\begin{align}
			\nonumber
			\int_{\R^{3}}	 	\frac{ |V(k)|^2}{D(k) }       \bigg(  	   \frac{1 }{ 1   -    | 	 \frac{2 p \cdot k}{D(k)}	|^2		}	  		\bigg)	\d k     
			\nonumber
			& = 
			\int_0^\infty  \d \rho   \, 
			\rho^2 		 \frac{|V(\rho)|^2		}{	D(\rho)	}
			\int_0^{2\pi} \d \vp 
			\int_0^\pi 
			\frac{\sin \theta \d \theta}{ 
				\big( 1 -  
				|	 \frac{|p|\rho \cos \theta}{D(\rho)}		|^2 	 \big) 		 }   	\\ 
			\nonumber
			& = 
			4 \pi 
			\int_0^\infty  \d \rho \, 
			\rho^2 		  \frac{|V(\rho)|^2		}{	D(\rho)	}
			\frac{			D(\rho) }{		2|p|\rho			}	
			\int_{ 0 }^{	 \frac{		2 |p|\rho}{D(\rho)}		
			} 
			\frac{\d x }{1 -   x^2 		 }   	\\[3mm]
			\nonumber
			& = 
			\frac{2\pi}{|p|}
			\int_0^\infty  \d \rho   \, 
			\rho	  |V(\rho)|^2
			\tanh^{-1}		\Big(    \frac{2 |p|\rho}{D(\rho ) }	  	\Big)   .
		\end{align}
		This finishes the proof. 
	\end{proof}
	
	\subsection{Polynomial approximations} 
	From Proposition \ref{prop expansion g}
	we conclude that $g(p)$   is  approximately determined by an explicit function 
	of $|p| $.

	\vspace{1mm}
	Let us now argue that the explicit function gives rise to a power series approximation. 
	To see this, recall that 
	the inverse hyperbolic tangent function is analytic in $( -1 , 1)$
	and its power series expansion is 
	$\tanh^{-1}(x) = \sum_{ 		 \substack{  n \in 2\mathbb N-1 	} }  	\frac{x^n}{n}$. 
	Thus, the generator  $g(p)$ is approximately given by a power series: Fixing $0\le \ve<1/2$, we have
	\begin{equation}
		\label{series g}
		g(p) =  \sum_{  j \in 2 \N_0	} 
		\alpha_j(\mu)  |p|^j   + O    \bigg(   \frac{   \Vu^2		}{\mu } \bigg)  \ ,  
		\quad 
		\alpha _j (\mu)
		: = 
		\frac{2^j \mu }{ (j+1) }
		\int_{\R^3}
		\frac{|V(k)|^2}{D(k)}
		\bigg( \frac{|k|}{D(k) }			\bigg)^j \d k \ 
	\end{equation}
with the error being \textit{uniform} over $|p|\le \ve \mu$. Note    that the coefficients   are positive $\alpha_j(\mu)>0$ and  satisfy     $\alpha_j (\mu) \sim \mu^{-j}$. 
	More precisely
	\begin{align}	\label{lem:g(p):expansion:general:case} 
			\lim_{\mu\rightarrow \infty}\mu^{j} \alpha_j (\mu) 
					=
					\frac{2^j}{(j+1) }
					\int_{\R^3} 
					\frac{  |V(k)|^2 }{|k|} \d k  \ . 
	\end{align}

	\vspace{1mm}

	By combining the results from Theorem \ref{thm 1} and 
	\eqref{series g} we can now verify Corollary \ref{cor:explicit:g} about the approximation
	of the original dynamics  with  the polynomial generators
	\begin{align}
		\h_g^{(N )}(p)
		  =
		p^2 - \sum_{ j \in 2 \N_0 :  j\le N  }
		\alpha_j(\mu)   |p|^j 	  \qquad N \in 2 \mathbb N _0  \ . 
	\end{align}

	\begin{proof}[Proof of Corollary \ref{cor:explicit:g}] 
		Let $ N \in 2 \mathbb  N_0$ and $\ve \in [0,\tfrac12)$ and note that due to Condition \ref{condition 1}, we can assume that $P_0 \le \ve \mu$. 
	First note that   the coefficients $\alpha_j(\mu)$ in  \eqref{series g}
		satisfy the upper bound 
			\begin{equation} 
			\alpha_j (\mu) 
			\leq 
			\frac{(2 / \mu)^j}{(j+1) }
			\int_{\R^3} 
			\frac{  |V(k)|^2 }{|k|} \d k   
			\label{bounds Kn}
		\end{equation}
		for all $ j \in 2 \N_0$ and $\mu>0$.
Thus, an  elementary estimate using the geometric series  shows that  \eqref{bounds Kn} implies  
		that for any $ J  \geq 2 $
		\begin{equation}
			\label{upper bound R}
			\sum_{j = J ;  j \t{ even }}^\infty \mathcal 
			\alpha_j(\mu) P_0^j 
		\, 	\leq \, 
		  \frac{  \| |k|^{-1/2}	V	\|^2 }{    ( 	1 - 2P_0/\mu)  	}
		\bigg(	\frac{2P_0}{\mu}					\bigg)^J  
 \, 	\equiv  \, 
		C_0 		\bigg(	\frac{2P_0}{\mu}					\bigg)^J   
		\end{equation}
Note that $C_0$ is independent of $J$. 
	Next, we use the  Duhamel formula, 
	the power series expansion \eqref{series g} with $\ve  = \frac{P_0}{\mu}$, 
	and the bound \eqref{upper bound R} with $J = N +2$
to find that    for some constant $C> 0 $
		\begin{align}
			\nonumber 
			\|	 e^{- i t \h_g}	\vp - e^{ - i t \h_g^{(N )}} 	 \vp \|  
			&  \, 	\leq  \, 
			|t| 	\, 	 	\|	 ( \h_g 	- \h_g^{(N ) } ) \vp  	\|	\\[2mm]
			& 	\, \le		\,   |t|		\,		 \bigg( \sum_{j=N +2}^\infty \alpha_j(\mu) \|\,  |p|^j \varphi \| + C \mu^{-1} \Vu^2 \bigg) \notag\\ 
			& 		\,  \leq 	\, 
			C  \, |t| \, 
			\bigg(
		\big(	2 P_0 \mu^{  -1  }			\big)^{N +2 }
			+  \mu^{-1} \Vu^2 
			\bigg)  \ . 
		\end{align}
		This finishes the proof  for $N <\infty$,   after we invoke Theorem \ref{thm 1} and 
		use the triangle inequality. Using Proposition \ref{prop expansion g}, the proof for $  N  = \infty $ follows directly from Theorem \ref{thm 1} and the triangle inequality.
	\end{proof}

	\subsection{Range of validity}
	\label{section:range}
	In the remainder of this section, we provide a detailed comparison of the range of validity of the  sequence of polynomial approximations, as stated in   Corollary \ref{cor:explicit:g}. 
	The following discussion can be viewed as an elaboration on Remark \ref{rem:non-conv}.

	\vspace{1mm}
	
	In order to extend the discussion further, 
	we parametrize 
	momentum and time scales 
	through two non-negative numbers $(a,b)$ as follows 
	\begin{align}
		\varphi  \in  \mathds 1(\tfrac12 \mu^a \le |p| \le  \mu^a) L^2_X   \quad \text{and} \quad t \sim \mu^b, 
	\end{align}
	where $\vp\in L^2_X$ is the initial datum. 
	In this context,   our strongest result in the massless case is Theorem \ref{thm 1}, 
	which proves  $ \lim_{\mu \to \infty} ( e^{-i t \h_g}\vp\otimes \Omega - \Psi(t) ) = 0$  
	for all $(a,b)\in I := [0,1) \times [0,1/2)$.

	\vspace{1mm}

	Certainly,  employing  the polynomial generators $\h_g^{(N )}(p)$ as an effective dynamics  has 
	the advantage of being  explicit in comparison to $\h_g(p)$.
	However, the  truncation of the series introduces an additional error term in Corollary \ref{cor:explicit:g}, relative to Theorem \ref{thm 1}.
	Thus,  convergence  is only guaranteed 
	for parameters $(a,b)\in I_N $ belonging to the smaller regions
	\begin{align}
		I_N := \bigcup_{a\in [0,1)} \big[0,a\big] \times \big[0 ,\min(1/2, (N+2)(1-a)) \big).
	\end{align}
	Note that as $N\rightarrow \infty$, the sets $I_N$ start to cover the full window $I$. For illustration  purposes, some of these 
	regions are sketched in Figure  2. 
	\begin{figure}[t]
		\label{fig1}
		\includegraphics[width=15cm]{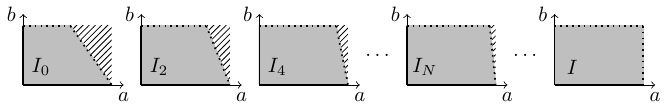}
		\centering
		\caption{  	The $(a,b)$-regions $I_N$}
	\end{figure}

	\vspace{1mm}
	
	The next theorem demonstrates that these smaller $(a,b)$-regions are in fact optimal. More concretely, we show that the approximation using $\h_g^{(N)}(p)$ fails on the diagonal boundary of $I_N$.
	That is, for $b(a) =  (N+2)(1-a) $ with  $a\in (1 -  \frac{1}{2(N+2)} , 1 ) $. 
	
	\vspace{1mm }
Here for simplicity we consider the massless Nelson model 	  \eqref{eq:Nelson:form:factor}, so that $\Vu^2 = O(\log \mu)$.

	\begin{theorem}\label{thm:non-convergence} 
Let  $V$ be given by \eqref{eq:Nelson:form:factor} and choose 
 $ N \in 2 \mathbb N_0$. Further, assume that 
	\begin{equation}
\label{initialdatum}
		 \varphi 
		 =
		  \mathds 1(\tfrac12 \mu^a \le |p| \le \mu^a) \varphi  \qquad \t{ with  } \qquad 
		 a  \in \bigg( 1  -  \frac{1}{2(N +1)}  \, ,	\, 1 		\bigg)	  \ . 
	\end{equation}
		Then, there exists a constant
		$\tau = \tau (N )>0$ such that for $t = \tau \mu^b$ with $b= ( N  + 2 ) (1-a) $
		\begin{align}
			\liminf_{\mu \to \infty} \Big\|\Big( e^{-i t H} - e^{-i t \h_g^{(N )}(p) } \Big) \varphi \otimes \Omega \Big\| > 0.
		\end{align}
	\end{theorem}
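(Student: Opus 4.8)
The plan is to show that the two unitary flows disagree by isolating the \emph{leading} discrepancy between $\h_g^{(N)}(p)$ and $\h_g(p)$ on the momentum shell where $\vp$ lives, and to show that the original dynamics $e^{-itH}$ actually tracks $\h_g(p)$ up to this time scale (which is permissible since $(a,b)$ with $b=(N+2)(1-a)$ still lies inside the region $I$ of Theorem \ref{thm 1}, thanks to the constraint $a>1-\tfrac{1}{2(N+1)}$ forcing $b<\tfrac12$). Thus it suffices to produce a lower bound for $\|(e^{-it\h_g}-e^{-it\h_g^{(N)}})\vp\otimes\Omega\|$ and combine it with the upper bound $\|e^{-itH}\vp\otimes\Omega-e^{-it\h_g}\vp\otimes\Omega\|\le C(\mu^{-1/2}\Vu+|t|\mu^{-1/2}\Vu^2)=o(1)$ coming from Theorem \ref{thm 1}.

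First I would write $\h_g(p)-\h_g^{(N)}(p)=-(g(p)-\sum_{j\le N}\alpha_j(\mu)|p|^j)$ and, using Proposition \ref{prop expansion g} and the series expansion \eqref{series g}, decompose this as $-\alpha_{N+2}(\mu)|p|^{N+2}+\mathcal{E}(p)$ where the remainder satisfies $|\mathcal{E}(p)|\le C_0(2P_0/\mu)^{N+4}+C\mu^{-1}\Vu^2$ uniformly on $|p|\le P_0=\mu^a$ (the geometric-series estimate \eqref{upper bound R} with $J=N+4$, plus the $\tanh^{-1}$-approximation error). For the Nelson form factor \eqref{eq:Nelson:form:factor} one has $\alpha_{N+2}(\mu)\sim c_{N+2}\mu^{-(N+2)}$ by \eqref{lem:g(p):expansion:general:case}, so on the shell $\tfrac12\mu^a\le|p|\le\mu^a$ the leading term has size $\alpha_{N+2}(\mu)|p|^{N+2}\sim c_{N+2}\mu^{-(N+2)(1-a)}=c_{N+2}\mu^{-b}$, while the remainder $\mathcal{E}$ is of order $\mu^{-(N+4)(1-a)}+\mu^{-1}\log\mu$, both of which are $o(\mu^{-b})$ precisely because $a>1-\tfrac{1}{2(N+1)}$ keeps $b<\tfrac12<1$ and $(N+4)(1-a)>b$. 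Hence $t(\h_g(p)-\h_g^{(N)}(p))$ equals $-\tau\,\mu^b\alpha_{N+2}(\mu)|p|^{N+2}+o(1)$, a multiplication operator whose symbol, on the shell, takes values in a fixed interval $[\tau c_{N+2}2^{-(N+2)}c',\tau c_{N+2}c'']$ bounded away from $0$ and $\pi$ for $\tau$ small enough.

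Now I would lower-bound $\|e^{-it\h_g}\vp-e^{-it\h_g^{(N)}}\vp\|$ by writing $e^{-it\h_g}\vp-e^{-it\h_g^{(N)}}\vp=e^{-it\h_g^{(N)}}(e^{-it(\h_g-\h_g^{(N)})}-\1)\vp$ (the two operators are both functions of $p$ and commute), so the norm equals $\|(e^{-it(\h_g-\h_g^{(N)})}-\1)\vp\|$. In momentum representation this is $\big(\int |e^{-i\theta_\mu(p)}-1|^2|\hat\vp(p)|^2\,dp\big)^{1/2}$ with $\theta_\mu(p)=t(g(p)-\sum_{j\le N}\alpha_j|p|^j)$; since $|\theta_\mu(p)|$ is bounded below and above by constants (independent of $\mu$) on $\mathrm{supp}\,\hat\vp$ for $\tau$ chosen once and for all sufficiently small, we get $|e^{-i\theta_\mu(p)}-1|\ge 2\sin(\delta/2)>0$ there, whence $\|(e^{-it(\h_g-\h_g^{(N)})}-\1)\vp\|\ge 2\sin(\delta/2)\|\vp\|=2\sin(\delta/2)$. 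Combining with the $o(1)$ bound from Theorem \ref{thm 1} via the reverse triangle inequality yields $\liminf_{\mu\to\infty}\|(e^{-itH}-e^{-it\h_g^{(N)}(p)})\vp\otimes\Omega\|\ge 2\sin(\delta/2)>0$, which is the claim.

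\textbf{Main obstacle.} The delicate point is \emph{uniformity of the lower bound in $\mu$}: one must verify that $t(g(p)-\sum_{j\le N}\alpha_j|p|^j)$ does not degenerate or blow up as $\mu\to\infty$ on the moving shell $|p|\sim\mu^a$. This requires the sharp two-sided asymptotics $\alpha_{N+2}(\mu)\asymp\mu^{-(N+2)}$ from \eqref{lem:g(p):expansion:general:case} together with a careful bookkeeping of the two competing error sources — the geometric tail of the power series \eqref{upper bound R} (scaling like $(P_0/\mu)^{N+4}$, hence $\mu^{-(N+4)(1-a)}$) and the $\tanh^{-1}$-approximation error $O(\mu^{-1}\Vu^2)=O(\mu^{-1}\log\mu)$ — and checking that the strict inequality $a>1-\tfrac{1}{2(N+1)}$ is exactly what makes both of them negligible against the main term $\mu^{-b}$ while simultaneously keeping $b<\tfrac12$ so that Theorem \ref{thm 1} still applies. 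Fixing $\tau$ small enough that the limiting symbol stays inside, say, $(0,\pi)$ is then routine, but the balance of exponents is where all the work concentrates.
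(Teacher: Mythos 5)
Your proposal is correct in substance and uses the same ingredients as the paper (the tail bound \eqref{upper bound R}, the two–sided asymptotics \eqref{lem:g(p):expansion:general:case} for $\alpha_{N+2}(\mu)$, the $\tanh^{-1}$-error from Proposition \ref{prop expansion g}, and Theorem \ref{thm 1} plus the reverse triangle inequality), but the decisive lower-bound step is carried out by a different mechanism. The paper applies Duhamel twice to $1-e^{it\h_g(p)}e^{-it\h_g^{(N)}(p)}$ and bounds the norm below by $|t|\,\|(\h_g-\h_g^{(N)})\varphi\| - \tfrac{t^2}{2}\|(\h_g-\h_g^{(N)})^2\varphi\|$, which forces the specific choice $\tau^2=C_V/D_V$ so that the linear term wins. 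You instead exploit that $\h_g(p)$ and $\h_g^{(N)}(p)$ are commuting multiplication operators in momentum space, so the norm equals $\|(e^{-it(\h_g-\h_g^{(N)})}-\1)\varphi\|$, computed exactly as $\big(\int |e^{-i\theta_\mu(p)}-1|^2|\hat\varphi(p)|^2\,\d p\big)^{1/2}$, and you pin the phase $\theta_\mu(p)$ into a fixed window $[\tau c',\tau c'']\subset(0,\pi)$ on the shell. This is a legitimately different, arguably more elementary, route: it works for every sufficiently small $\tau$ (and in fact yields the exact norm), at the price of being tied to the fact that both generators are functions of $p$, whereas the paper's Duhamel argument would survive for non-commuting effective generators. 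The exponent bookkeeping you do (main term $\asymp\mu^{-b}$ on the shell, tail $\lesssim \mu^{-(N+4)(1-a)}$, analytic error $O(\mu^{-1}\log\mu)$, both $o(\mu^{-b})$ since $1-a>0$ and $b<1$) matches the paper's \eqref{eq:bound:on:R}--\eqref{eq:ref:lb:R}.

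One inaccuracy to fix: the constraint $a>1-\tfrac{1}{2(N+1)}$ does \emph{not} force $b=(N+2)(1-a)<\tfrac12$; it only gives $b<\tfrac{N+2}{2(N+1)}$, which exceeds $\tfrac12$. What your argument (and equally the paper's final step, which also invokes Theorem \ref{thm 1} at $t=\tau\mu^b$) genuinely needs is $b<\tfrac12$, i.e. $a>1-\tfrac{1}{2(N+2)}$ — the range appearing in the discussion preceding the theorem. So do not present $b<\tfrac12$ as a consequence of the stated hypothesis; either work under $a>1-\tfrac{1}{2(N+2)}$ or flag the restriction explicitly. With that caveat, your proof goes through.
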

	\begin{remark}\label{rem:non-conv-2} To continue the example from Remark \ref{rem:non-conv}, 
	 consider  an initial state  $\vp$
	 satisfying  \eqref{initialdatum} with $a = 23/24$.  
		Remember that Corollary \ref{cor:explicit:g} proves 
		\begin{align}\label{rem:cor}
			\lim_{\mu \to \infty} (e^{ -itH} - e^{-it\h_g^{(N)}(t)})  \varphi \otimes \Omega = 0 \quad \text{for all} \quad |t| \ll \min\big\{ \mu^{1/2} \log(\mu)^{-1} , \mu^{(N+2)/24} \big\}.
		\end{align}
		By Theorem \ref{thm:non-convergence}, on the other hand, we know that  $(e^{-itH} -  e^{-it \h_g^{(N)}(p)} ) \varphi \otimes \Omega$ does not converge to zero for $t\sim \mu^{(N+2)/24}$. Thus, we can conclude that \eqref{rem:cor} is in fact optimal. Note that in the considered case with $a=23/24$, we hit the threshold at $N =10$, for which the time scale does not extend further by considering larger values of $N$. 
		The discussion is easily generalized to other choices of $P_0= \mu^a$.
	\end{remark}
	
	\begin{proof}[Proof of Theorem \ref{thm:non-convergence}]
		Let $  N  \in  2 \mathbb N_0$. 
		Throughout this proof we consider the remainder term 
		\begin{equation}
			\mathcal R_N (p) 
			= \sum_{ j \ge N+2 : j \t{ even }		}  
			\alpha_j(\mu) |p|^j \ 		\quad  	\t{on} \quad \tfrac12 \mu^a \le |p| \le \mu^a
		\end{equation}
		corresponding to the truncation of the series in \eqref{series g} of order $n$. 
		In view of \eqref{upper bound R} 
		we have the following upper bound for
		$|p|\leq   \mu^a$ with $a <1$ and $\mu>0$ large enough
		\begin{equation}
			\label{eq:bound:on:R}
			\mathcal R_ N (p)   \leq 
			C_1	 \mu^{(N +2) (a-1)} \  , 
		\end{equation}
		for a constant $C_1 = C_1 ( N )$. 
		On the other hand, under the assumption $|p|\geq  \frac{1}{2} \mu^a$, 
		it follows from  \eqref{lem:g(p):expansion:general:case} 
		that the following lower bounds holds
		for $\mu>0$ large enough 
		\begin{equation}
			\label{eq:ref:lb:R}
			\mathcal R_N (p)  
			\geq \alpha_{N +2}(\mu)	|p|^{N +2}
			\geq 
			C_2 
			\mu^{  ( N +2) (a-1)} 
		\end{equation}
		for a constant $C_2   \geq C( N )  \int |k|^{-1} |V(k)|^2 \d k>0$.

		Next, we use these bounds to derive a lower bound for  $(e^{-i t \h_g(p) } - e^{-i \h_g^{(N  )}(p)} ) \varphi$. To this end, we invoke Duhamel's formula twice, that is
		\begin{align}
			i\Big( 1- e^{ i t \h_g(p) }  e^{-i t \h_g^{( N )}(p) } \Big) \varphi & = \int_0^t \d t e^{is \h_g(p) } e^{-is\h_g^{(N )}(p)} \Big( \h_g(p) - \h_g^{(N  )}(p) \Big) \varphi \notag\\
			& = t \Big( \h_g(p) - \h_g^{(N )}(p) \Big) \varphi \notag\\
			& \quad - \int_0^t \d s \int_0^s \d r \,  e^{i(r-s) \h_g (p) } e^{-ir\h_g^{(N )}(p)} \Big( \h_g(p) - \h_g^{(N )}(p) \Big)^2 \varphi.
		\end{align}
		In view of Proposition \ref{prop expansion g} and    \ref{series g}, we have (the ratio  $\ve = \mu^a/\mu\ll 1$ can be disregarded)
		\begin{align}
			\h_g(p) - \h^{(N )}_g(p) =   \mathcal R_ N (p) +  O(\mu^{-1} \log(\mu) )
		\end{align}
		and thus, using \eqref{eq:bound:on:R},  \eqref{eq:ref:lb:R}  and $\mu^{-1}\log(\mu) = o (	 \mu^{(N +2)(1-a)}	)$ as $\mu\to\infty$,	we can estimate
		\begin{align}
			\Big\| \Big( \h_g(p) - \h_g^{(N )}(p) \Big)^2 \varphi \Big\| & \le  D_V  \big( \mu^{(N +2)(a-1) }\big)^2  \\
			\Big\| \Big( \h_g(p) - \h_g^{(N )}(p) \Big) \varphi \Big\| & \ge C_V \mu^{(N +2)(a-1)}
		\end{align}
		for suitable constants $D_V,C_V>0$.
		Choosing $t = \tau \mu^{b}$ with $b=( N +2)(1-a)$ and $\tau^2 = C_V / D_V >0 $, we arrive at
		\begin{align}
			\Big\| 
			\Big( 1- e^{ i t \h_g(p) }  e^{-i t \h_g^{(N )}(p) } \Big) \varphi \Big\| & \ge |t| \Big\| \Big( \h_g(p) - \h_g^{(N )}(p) \Big) \varphi \Big\| - \frac{t^2}{2} \Big\|  \Big( \h_g(p) - \h_g^{(N )}(p)\Big)^2 \varphi \Big\| \notag\\
			& \ge   C_V   \tau - \frac{ D_V}{2} \tau^2 =  \frac{C_V}{2}> 0 \ . 
		\end{align}
		Together with $\limsup_{\mu\to \infty} \| ( e^{-itH } - e^{-i t  \h_g(p) } ) \varphi \otimes \Omega \| = 0 $  by Theorem \ref{thm 1}, this proves the claimed statement.
	\end{proof}

	\section{Effective dynamics for massive fields} 
	\label{section massive}
	The  main   goal  of this section is to  prove  Theorem \ref{thm 2}. 
	To this end,   we consider a massive dispersion relation 
	\begin{equation}
		\omega(k) =   \sqrt{   k^2  + m^2 }  \ 
	\end{equation}
with mass term $m \ge \mu^{- 1 }$. 
In what follows, 
	$\Psi_0  $  denotes the 
	initial datum satisfying Condition \ref{condition 1}
	relative to $0 < \ve < 1/2$, 
	and $V$ is the form factor satisfying Condition \ref{condition 2}. 
	We let  $\Psi_{\eff}(t) = e^{ - i t \h_g}\vp \otimes \Omega$  be 
	the effective dynamics, 
	with Hamiltonian  $\h_g = p^2  -  g(p)$. 
	The generator $g : \RE^3 \rightarrow \RE $
	is assumed to be bounded and measurable, and shall be chosen 
	below. 
	
	\vspace{1mm}
	
	The first step is to  recall  from Section \ref{section prelim}		that the following identity holds  for all $t \in \RE$
	\begin{equation}
		\label{eq 1 section 6}
		e^{i tH } \Big(  		  \Psi(t) - 	\Psi_{\eff}(t)    \Big) 
		=   
		\I(t) 
		\Big(
		\V + g(p) 
		\Big)
		\Psi_0  \ . 
	\end{equation}
	Recalling our notation $\P_\Omega   $ 
	and $	\Q_\Omega    $ from \eqref{def:projections}, 
	an application of the  integration by parts identity \eqref{I(t)} yields 
	\begin{align}
		\label{eq 2 section 6}
		\mathcal I(t) \Q_\Omega \V & = \B(t) \R \V + \mathcal I(t)  \V \R \V   \notag\\
		& = \B(t) \R \V +   \mathcal I(t) \P_\Omega \V \R \V   + \mathcal I(t) \Q_\Omega \V \R \V     \ .
	\end{align}
	Therefore, we combine \eqref{eq 1 section 6} and \eqref{eq 2 section 6} to find  that 
	\begin{equation}
		\label{eq 3 section 6}
		e^{i tH } \Big( 		  \Psi(t) - 	\Psi_{\eff}(t)     \Big) 
		= 
		\B(t) \R\V \Psi_0 
		+  
		\I (t) \Big(
		\P_\Omega \V \R \V \Psi_0   +  g(p) \Psi_0 
		\Big)
		+	 
		\I(t) \Q_\Omega \V \R \V \Psi_0  \ . 
	\end{equation}
	The next step is note that the expansion process can be continued, 
	if  we  now apply the identity  \eqref{eq 2 section 6} 
	to the last term in \eqref{eq 3 section 6}
	and expand $\I(t) \Q_\Omega \R \V \R $ again.
	This process can be repeated arbitrarily many times. 
	In particular, for $ n \in \mathbb N $ we find 
	\begin{align} 
		\label{eq 4 section 6}
		e^{i tH } \Big(    &  		  \Psi(t) - 	\Psi_{\eff}(t)    \Big)  \\ 
		& = 
		\sum_{1 \leq j \leq n } \B(t) (\R \V)^j \Psi_0  
		+
		\I(t) 
		\bigg(
		\sum_{		\substack{1 \leq j \leq n		\\		 j \t{ odd}}   }
		\P_\Omega
		\V ( \R \V )^j  \Psi_0  +  g(p)\Psi_0
		\bigg)	 
		+  \mathcal I(t) \Q_\Omega  \V (\R \V)^{n } \Psi_0  \   ,  \nonumber 
	\end{align}
	where 
	we have used the 
	fact that 
	for  any even  $j \in  \mathbb N :$   $\P_\Omega \V (\R\V)^{j } \P_\Omega  =0$.

	In the remainder of the section, we show that one can choose a generator 
	$g(p)= g_n(p)$
	such that the middle term in \eqref{eq 4 section 6} vanishes.
	Every other term can be regarded as an error term, with precise estimates derived in Proposition \ref{prop rem}. In Section \ref{sec: proof thm2}, the results are combined to prove Theorem \ref{thm 2}.
	
	\subsection{Choosing   the generator}
	Our present goal 
	is to prove that by choosing $g  = g_n$
	as in Definition \ref{definition gn}, the middle term  of \eqref{eq 4 section 6} cancels exactly. 
	This is the content  of  Lemma  \ref{lemma 1 section 6} stated below. 

\vspace{1mm}

To this end,  let us  recall    the following notation
	for a special class of collection of sequences $\sigma$ of length $  j   \in \mathbb N $. 
	\begin{align}
		\Sigma_0  ( j  ) 
		& 	 = 
		\textstyle 
		\{	   \sigma \in \{ 	 +1  , -1 \}^j  \ :  \  
			\sum_{ i  = 1}^{\ell }	\sigma (i) \geq 1 \ \forall \ell \leq j    -1  \ \  \t{and} 	  \ \   \sum_{ i=1 }^j \sigma (i) = 0  			\}   \ . 
	\end{align}
	Next, we introduce auxiliary functions
	that will help us navigate the proof of the upcoming Lemma. 
			More precisely, 	let $n \in \mathbb N $ 
	and  $ | p| <   \frac{1}{2 } \mu$.
	We 
	define     $F_n		( p , \cdot )	: (0, \infty) \rightarrow   (0 , \infty )	$ as
	\begin{align}
\label{Fn}
		F_n ( p , x)  &
		: =
		\sum_{	\substack{	 j =2 	 \\  j \t{ even }	 }	}^{n+1 } 
		\sum_{\sigma \in \Sigma_0(j)}	 F_{ j }^\sigma ( p , x) 	\\ 
		\nonumber 
		F^{\sigma}_{j } ( p , x) 
		&:  = 
		\mu^{j/2}
		\int_{\RE^{3  j }} 
		\d k_1 \cdots \d k_{j}
		V(k_1) \cdots V(k_{j} )
		\, \langle \Omega , b_{k_{j}}^{\#_{\sigma(j )}} \ldots  b_{k_1}^{\#_{\sigma(1)}} \Omega\rangle  \\
		& \qquad  \times   
		\prod_{ \ell =1}^{ j  -1  } 
		\bigg( 
	 \mu 	\sum_{ i =1}^\ell \sigma(i) \omega(k_i) 
		+ 
		\Big( 	 p - \sum_{i=1}^\ell \sigma(i) k_i		\Big)^2
		- 
		p^2  +x 
		\bigg)^{-1} \ . 
	\end{align}
In particular, we may   re-phrase the definition of $g_n(p)$ given in Definition \ref{definition gn} 
in terms as solutions of the fixed point equation 
		\begin{equation}
		g_n(p) =  	F_n ( p  ,   g_n(p))  
	\end{equation}
	for all $| p | <   \frac{1}{2} \mu$.

	\begin{lemma}\label{lemma 1 section 6}
		Let $n \in \mathbb N $ and let $g = g_n$
		be as in Definition \ref{definition gn}. 
		Then, it holds that  
		\begin{equation}
			\sum_{\substack{1 \leq j \leq n		\\		 j \t{ odd}}   }   
			\P_\Omega    \V ( \R \V )^j  \Psi_0     = -   ( g(p) \otimes \1 ) \Psi_0 \ . 
		\end{equation}
	\end{lemma}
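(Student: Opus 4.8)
The plan is to expand every interaction factor as $\V=\V^++\V^-$ via the decomposition \eqref{decomposition}, so that, for any bounded generator $g$ entering the resolvent $\R$,
\begin{equation*}
\V(\R\V)^j=\sum_{\sigma\in\{+1,-1\}^{j+1}}\V^{\sigma(j+1)}\,\R\,\V^{\sigma(j)}\,\R\cdots\R\,\V^{\sigma(1)},
\end{equation*}
with the convention $\V^{+1}\equiv\V^+$, $\V^{-1}\equiv\V^-$, the factors being labelled from right to left so that $\V^{\sigma(1)}$ acts first on $\Psi_0$. The first step is to identify which $\sigma$ contribute to $\P_\Omega(\,\cdot\,)\Psi_0$. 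Since $\V^{\pm}$ changes the boson number by $\pm1$ while $\R$ commutes with $\calN$, the state obtained after applying $\V^{\sigma(1)},\dots,\V^{\sigma(\ell)}$ together with the intermediate resolvents is an eigenstate of $\calN$ with eigenvalue $s_\ell:=\sum_{i=1}^{\ell}\sigma(i)$ (read as the zero vector whenever a partial sum would become negative). Now $\V^-\Omega=0$ forces $\sigma(1)=+1$; if $s_\ell=0$ for some $\ell\le j$ then the factor $\Q_\Omega$ inside the next $\R$ annihilates the state; and the final $\P_\Omega$ forces $s_{j+1}=0$. Putting these constraints together gives exactly $s_\ell\ge1$ for all $\ell\le j$ and $s_{j+1}=0$, i.e. $\sigma\in\Sigma_0(j+1)$, which is non-empty only when $j+1$ is even, i.e. $j$ is odd. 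Moreover, for $\sigma\in\Sigma_0(j+1)$ every intermediate state carries at least one boson, so each $\Q_\Omega$ encountered may be replaced by the identity.

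For such a $\sigma$ I evaluate the monomial explicitly, in the spirit of the proof of Lemma \ref{lemma P}: I commute each dressed operator $e^{-i\sigma(i)k_iX}b_{k_i}^{\#_{\sigma(i)}}$ arising from $\V^{\sigma(i)}$ to the left, past all resolvents lying to its left, and use the canonical commutation relations to evaluate the resulting vacuum expectation. Write $q_\ell:=\sum_{i\le\ell}\sigma(i)k_i$ and $E_\ell:=\mu\sum_{i\le\ell}\sigma(i)\omega(k_i)$. The relations $[T,b_k^{\#_{\pm1}}]=\pm\mu\omega(k)b_k^{\#_{\pm1}}$, $[P,b_k^{\#_{\pm1}}]=\pm k\,b_k^{\#_{\pm1}}$ and $e^{-iqX}p\,e^{iqX}=p+q$ show that, once the first $\ell$ dressed operators have been moved past the resolvent sitting between positions $\ell$ and $\ell+1$, the combination $p+P$ in $\h(p+P)$ is left invariant (conservation of total momentum), while $p^2\mapsto(p-q_\ell)^2$ and $T\mapsto T+E_\ell$. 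Since everything eventually acts on $\Psi_0=\vp\otimes\Omega$, where $T=P=0$, and the residual phases cancel on the support of the $\delta$-distributions produced by the commutation relations, that resolvent becomes, using $\h(p)=p^2-g(p)$,
\begin{equation*}
\big(\h(p)-(p-q_\ell)^2-E_\ell\big)^{-1}=-\big(E_\ell+(p-q_\ell)^2-p^2+g(p)\big)^{-1}.
\end{equation*}
Collecting the $j$ resolvent factors yields an overall sign $(-1)^j=-1$ (recall $j$ is odd), and the boson operators leave behind the vacuum expectation $\langle\Omega,b_{k_{j+1}}^{\#_{\sigma(j+1)}}\cdots b_{k_1}^{\#_{\sigma(1)}}\Omega\rangle$; hence
\begin{equation*}
\P_\Omega\,\V^{\sigma(j+1)}\R\cdots\R\,\V^{\sigma(1)}\Psi_0=-F^{\sigma}_{j+1}\big(p,g(p)\big)\,\vp\otimes\Omega,
\end{equation*}
with $F^{\sigma}_{j+1}$ as in \eqref{Fn}.

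Summing over $\sigma\in\Sigma_0(j+1)$ and then over odd $j\in\{1,\dots,n\}$ (equivalently over even lengths $j+1\in\{2,\dots,n+1\}$), and recalling that $\vp$ is supported in $\{|p|\le\ve\mu\}\subset\{|p|<\tfrac12\mu\}$ by Condition \ref{condition 1}, we obtain
\begin{equation*}
\sum_{\substack{1\le j\le n\\ j\t{ odd}}}\P_\Omega\,\V(\R\V)^j\Psi_0=-F_n\big(p,g(p)\big)\,\vp\otimes\Omega.
\end{equation*}
This holds for any bounded generator $g$ entering $\R$. Choosing $g=g_n$ as in Definition \ref{definition gn} and invoking the fixed-point identity $g_n(p)=F_n(p,g_n(p))$, valid for $|p|<\tfrac12\mu$, the right-hand side becomes $-(g_n(p)\otimes\1)\Psi_0$, which is the assertion.

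I expect the main obstacle to be the commutator bookkeeping in the middle step: one must verify carefully that the momentum and energy shifts accumulated while moving the $b^{\#}$-operators past the resolvents reproduce \emph{exactly} the partial sums $q_\ell$ and $E_\ell$ entering \eqref{Fn}, that the residual phases indeed cancel on the $\delta$-supports, and that the $\Q_\Omega$'s may be discarded along the way precisely when $\sigma\in\Sigma_0(j+1)$. This last point is where the summability condition defining $\Sigma_0$ enters, as it simultaneously guarantees positivity of all denominators and consistency of the particle-number counting.
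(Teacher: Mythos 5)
Your proposal is correct and follows essentially the same route as the paper's proof: decompose $\V=\V^++\V^-$, reduce the sum to sequences in $\Sigma_0$ (dropping the $\Q_\Omega$'s exactly because all partial sums stay $\ge 1$), commute the dressed operators $e^{-i\sigma(i)k_iX}b^{\#_{\sigma(i)}}_{k_i}$ to the left of the resolvents, evaluate on $\Psi_0$ using $T\Omega=P\Omega=0$ to identify $-F^\sigma_{j+1}(p,g(p))$, and finally invoke the fixed-point equation $g_n(p)=F_n(p,g_n(p))$ on the spectral support of $\vp$. The additional details you supply (phase cancellation on the $\delta$-supports, the sign count $(-1)^j$, and the restriction $|p|\le\ve\mu<\tfrac12\mu$) are consistent with, and slightly more explicit than, the paper's argument.
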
 
	
	\begin{proof}
		Throughout the proof, we treat $g(p)$ as an arbitrary real-valued, bounded measurable function on $\RE^3$. The precise choice given by Definition \ref{definition gn} will only be made at the   end. 
 For transparency, we denote the $g$-dependent resolvent 
		on $ \Q_\Omega\H $  
		\begin{equation}
			\mathbb R_g  := 	  ( 
			\h_g(p+P)
			-p^2-T 
			  )^{-1}  
		\end{equation}
		We will consider on 
		the Hilbert space $\mathscr{H} = L_X^2 \otimes \F$  
		the following operator, regarded as a function of $g$: 
		\begin{align} 
			\label{eq 5 section 6}
			\mathcal A _n [g]
			:=
			\sum_{	\substack{	 j =2 	 \\  j \t{ even }	 }	}^{n+1 } 
			A_{j}[g]
			\quad \text{with} \quad 
			A_{j} [g]  :=   \P_\Omega \V (\R_g  
  \Q _\Omega 
			 \V)^{j -1 }  \P_\Omega 
		\end{align}
		which coincides with the operator in \eqref{eq 4 section 6}
		after   the change of variables   $j\mapsto j-1$.

		\vspace{1mm}
		
		First, we use the decomposition $\V = \V^+ + \V^+$
		into creation and annihilation operators, 
		in each factor  of $A_j[g]$.
		The expansion can be represented as a sum 
		over sequences $\sigma \in \Sigma_0(j)$ as follows 
		\begin{align}
			\label{eq 7 section 6}
			A_{j}[g]
			=
			\sum_{\sigma \in \Sigma_0(j)} A_{j}^\sigma [g] ,
			\quad 
			A_j^\sigma  [g]
			:= 
			\P_\Omega		
			\V^{\sigma(j)}  \R_g \V^{\sigma(j-1)}   \cdots  \R_g  \V^{\sigma(1)} 
			\P_\Omega \ . 
		\end{align} 
	Here, we have dropped the projections 
$\Q_\Omega $ 
in \eqref{eq 5 section 6}
at the expense of summing over the sequences $\sigma \in \Sigma_0( j  )$.
The
 summability condition 
 $	\sum_{ i  = 1}^{ \ell  }	\sigma (i) \geq 1 \ \forall \ell   \leq j    -1  $ 
 guarantees that $\R_g$ acts on states that contain at least one particle. 

\vspace{1mm}

		The next step is to   compute the operator  
		$	 A^\sigma_j [g]$
		using commutation relations between $\R_g$
		and $e^{ - i kX }$ and $b_k$ and $b_k^*$: 
		\begin{align}
			\R_g 
			e^{   i 	  k_i X }
			b_{k_i} 
			& = 
			e^{  i 	   k_i X }
			b_{k_i} 
			\big(
			\h_g(p+P) - 
			(p  + k_i )^2
			- T + \mu \omega(k_i) 
			\big)   \\
			\R_g 
			e^{    - i 	  k_i X }
			b_{ k_i}^*
			& = 
			e^{  -  i 	   k_i X }
			b_{k_i}^* 
			\big(
				\h_g(p+P) - 
			(p  -  k_i )^2
			- T -  \mu \omega(k_i)  
			\big)  \ . 
		\end{align}
		The idea is to   move all  the factors $e^{ikX}b_k$ and $e^{- i kX}b_k^*$ to the \textit{left} of all the resolvents, in the operator 
		\eqref{eq 7 section 6}. 
		Once they hit the vacuum projection $\P_\Omega$
		on the right, the shifted resolvents
		are evaluated using 
		$T \Omega = P \Omega  =  0 $.
		The result is an operator that only acts
		on the $L_X^2$ variables, 
		tensored with the monomial 
		$b_{k_j}^{\#_{\sigma( j )}} \ldots  
		b_{k_1}^{\#_{\sigma(1)}}$. 
		One then takes the vacuum expectation value over such monomial. 
		This calculation yields  
		(here $p = - i \nabla_X$ and $\int = \int_{\RE^{3j}} \d k_1 \cdots \d k_ j $)
		\begin{align}
			A_j^\sigma  [g]
			= 
			\mu^{j/2}
			\int 
			V(k_1) \cdots V(k_j )
			\prod_{ \ell =1}^{ j -1}
			R_g^{\sigma(1) \cdots  \sigma (\ell )}
			( p ; k_1, \cdots, k_\ell)
			\langle \Omega , 
			b_{k_j}^{\#_{\sigma( j )}} \ldots  
			b_{k_1}^{\#_{\sigma(1)}} \Omega\rangle 
			\otimes \1 
		\end{align}
		where we introduce
		the  following operators
		on $L_X^2 $
		\begin{align} 
			R_{g}^{\sigma(1)\ldots\sigma(\ell)}( p ; k_1,\ldots,k_\ell) 
			& 	= 
			(-1) 
			\textstyle
			\bigg(
				 \mu 
			\sum_{ i =1}^\ell \sigma(i)    \omega(k_i)  
			+
							\Big( 	 p - \sum_{i=1}^\ell \sigma(i) k_i		\Big)^2
							-p^2 + g(p)  	\bigg)^{-1} 
		\end{align}

		Finally, in terms of the auxiliary functions
		introduced in  \eqref{Fn}
		we identify  that 
		$A_j^\sigma [g]  \Psi_0 =   (-1)^{j-1 } F_j^\sigma(p, g(p)) \otimes \1 \Psi_0 $
		where the right hand side is evaluated 
		on the operator $p = - i \nabla_X$ in the
		spectral subspace $\1 (|p |\leq \ve \mu)$. 
We sum over all sequences $\sigma \in \Sigma_0(j)$
		and all  even $ 2 \leq j \leq n+1 $
		to obtain
		\begin{equation}
			\mathcal A_n[g] 
			\Psi_0  =  
			\sum_{	\substack{	 j =2 	 \\  j \t{ even }	 }	}^{n+1 } 
			A_{j}^\sigma [g] \Psi_0 
			= 
			 - 
			\sum_{	\substack{	 j =2 	 \\  j \t{ even }	 }	}^{n+1 } 
			F_{j}^\sigma (p ,  g(p)) \otimes \1 \,  \Psi_0 
			=  - F_n(	 p ,   g(p)) \otimes \1 \, \Psi_0  \ . 
		\end{equation}
		Therefore, by choosing $g = g_n$ as in Definition \ref{definition gn}
		we see that 
		$ 	\mathcal A_n[g_n]  \Psi_0 =  - g_n(p) \otimes \1 \Psi_0$.
	This finishes the proof. 
	\end{proof}

	\subsection{Estimate for $(\R \V)^n \Psi_0$}
	\label{subsection6.1}
	In this subsection, 
	we state some estimates that we will need in order to control the first and third term in the expansion 
	\eqref{eq 4 section 6}. 
	The main challenge is to estimate  
	terms of the form 
	\begin{equation}
		\label{prod}
		\underbrace{	\R \V^{\#} \cdots \R \V^{\# } 	}_{n \t{ times }}	 (\vp \otimes \Omega)   \ . 
	\end{equation}
	Here,  $\V^\#$ can be either  a creation $\V^+$ or annihilation  $\V^-$ operator   (see \eqref{decomposition} for a definition)
	and     $ \R    $ is the resolvent operator, introduced in Definition \ref{def R and B}.
	It is crucial to observe that the presence of the projections $\Q_\Omega$ in the definition of $\R$
	guarantees that the states \eqref{prod}
	are either zero, or contain at least  one  particle. 
	
	\vspace{2mm}
	
	In this regard,  our most relevant     result is 
	Proposition \ref{prop rem}  stated below. 
	In order to formulate it, 
	we introduce some some notation that encodes the structure of the states \eqref{prod}.
	Let us  consider the  
	set of sequences of length $n$, taking values on $\{ + 1, -1\}$, 
	and whose all partial sums are bounded below by one. 
	That is, 
	\begin{equation}
		\Sigma(n) : = 
		\textstyle 
		\{  \sigma \in \{+1  , -1 \}^n	 \  : \  	
		\sum_{i=1}^j  \sigma (i)  \geq 1 	 \ \forall  j = 1 , \ldots, n	\} \  , \qquad n \in  \mathbb N \ . 
	\end{equation}
Contrary to $\Sigma_0(n)$, the set of sequences in $\Sigma(n)$ do not drop to zero at the last entry.

	\vspace{1mm}
	The proof of the following proposition is rather involved 
	and   will    be postponed to Section~\ref{section resolvent estim}.   
	We remind the reader that we assume  $m \geq   \mu^{-1 }$. 
Here, we employ the notation $\V^{+1} \equiv \V^{+}$
	and $\V^{-1 } \equiv \V^-$, and
	$n_\pm  \equiv  | \{i :  \sigma(i) = \pm 1 \}|$
		whenever $\sigma$ is known from context.

	\begin{proposition}\label{prop rem}
		Let $  n \geq 2$. 
		Then, there exists a constant $C>0$ such that for all sequences   $\sigma  \in \Sigma (n)$ 
		\begin{equation}
			\label{prop rem estimate}
			\bigg\|	
			\prod_{i=1}^n	(\R  \V^{\sigma (i) })	 (\vp \otimes \Omega)				\bigg\|_{	 \mathscr H }
			\leq 
			\frac{C^n}{\mu^{n/2}}
			\sqrt{( n_+ - n_-  )   !} 
			\bigg( \frac{1}{m}	\bigg)^{n_-}
			\Vu^{n_+}
			\|	  V	\|_{L^2}^{n_- }  \    
		\end{equation} 
		for all $\mu>0$ large enough. 
	\end{proposition}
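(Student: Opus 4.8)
The plan is to pass to the momentum representation by pushing all creation/annihilation operators and phase factors to the left of every resolvent, then to Wick-reduce the resulting monomial and estimate the surviving Fock vector via the number estimate \eqref{number estimate} together with operator-norm bounds for the shifted resolvents acting on $L_X^2$.

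First I would carry out the unfolding maneuver used in the proof of Lemma~\ref{lemma 1 section 6}, but now for a fixed sequence $\sigma\in\Sigma(n)$ and without any final vacuum projection. Commuting each factor $e^{\pm i k X}b_k^{\#}$ to the left of all the resolvents $\R$ --- using the identities of type \eqref{p eq 2} for both $\V^{+}$ and $\V^{-}$ --- and then using $T\Omega=P\Omega=0$, one arrives at
\[
\prod_{i=1}^{n}\big(\R\,\V^{\sigma(i)}\big)(\varphi\otimes\Omega)=\mu^{n/2}\int_{\RE^{3n}}\d k_1\cdots\d k_n\, V(k_1)\cdots V(k_n)\,\Big(\prod_{\ell=1}^{n}R_g^{\sigma(1)\cdots\sigma(\ell)}(p;k_1,\ldots,k_\ell)\Big)\varphi\;\otimes\; b_{k_n}^{\#_{\sigma(n)}}\cdots b_{k_1}^{\#_{\sigma(1)}}\Omega,
\]
where the $R_g^{\sigma(1)\cdots\sigma(\ell)}(p;\cdot)$ are the operators on $\mathds 1(|p|\le\ve\mu)L_X^2$ already appearing in the proof of Lemma~\ref{lemma 1 section 6}. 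The defining property $\sum_{i\le\ell}\sigma(i)\ge1$ of $\Sigma(n)$ guarantees that all intermediate resolvents genuinely act on states with at least one boson, so no term is annihilated by the projections in $\R$, and the surviving Fock vector lies in the $(n_+-n_-)$-boson sector.

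Next I would Wick-reduce the monomial $b_{k_n}^{\#_{\sigma(n)}}\cdots b_{k_1}^{\#_{\sigma(1)}}\Omega$: every annihilation label must contract with a creation label appearing to its right in the product, each contraction producing one Dirac delta, and the $n_+-n_-$ uncontracted creation labels remain. For a fixed contraction scheme $\pi$, integrating out the delta functions leaves an $L_X^2$-valued function $f_\pi$ of the $n_+-n_-$ surviving momenta, and the number estimate \eqref{number estimate} bounds the corresponding Fock contribution by $\sqrt{(n_+-n_-)!}\,\|f_\pi\|_{L^2(L_X^2)}$. To bound $\|f_\pi\|$ I would use operator-norm estimates for the shifted resolvents: each denominator $\mu\sum_{i\le\ell}\sigma(i)\,\omega(k_i)+(p-\sum_{i\le\ell}\sigma(i)k_i)^2-p^2+g(p)$ is bounded below by combining $\sum_{i\le\ell}\sigma(i)\ge1$, $\omega(k)\ge m$, $|p|\le\ve\mu$ and $g>0$. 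Because the signed sum $\sum\sigma(i)\omega(k_i)$ may cancel substantially, the gain from such a denominator is in general only of order $\mu m$ rather than $\mu\,\omega(k)$; this is exactly what produces the factors $(1/m)^{n_-}$ and is the reason the estimate requires $m\ge\mu^{-1}$. For the form factors, each of the $n_+$ creation momenta remains free and is integrated against a resolvent whose denominator is infrared-regularized by $g(p)$, yielding a factor $\Vu=\|V/\omega_\mu\|_{L^2}$, whereas each of the $n_-$ annihilation momenta is eliminated by a delta function and yields only $\|V\|_{L^2}$-type control. Collecting the $\mu^{n/2}$ from the interaction operators against the $n$ factors $\mu^{-1}$ from the resolvents gives the overall $\mu^{-n/2}$.

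The main obstacle is the final bookkeeping: showing that the sum over all contraction schemes $\pi$ costs only $C^n$. A crude count of the partial matchings of the $n_-$ annihilation labels into the creation labels is already much larger than $C^n\sqrt{(n_+-n_-)!}$ when $n_\pm\sim n/2$, so one cannot merely bound each term and multiply by the number of terms. The point must be that a contraction linking an annihilation label to a far-away creation label forces several additional resolvents to act on states of higher boson number, which improves their norms enough to make the sum over schemes converge with a geometric rate; organizing this together with the sharp split into $\Vu^{n_+}$ and $\|V\|_{L^2}^{n_-}$ is the technical heart of Section~\ref{section resolvent estim}. An alternative I would also pursue is an induction on $n$ that peels off the outermost $\R\,\V^{\sigma(n)}$ while simultaneously tracking the boson number and a running weight built from $\omega_\mu$; this replaces the global combinatorics by a recursive estimate, at the cost of an inductive hypothesis strong enough to reproduce both $\sqrt{(n_+-n_-)!}$ and the two distinct powers of $V$.
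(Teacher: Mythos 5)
Your proposal assembles the right ingredients (the pushed-through resolvents, the distinction between good bounds yielding $\Vu$ per creation step and bad bounds yielding $\mu^{-1/2}m^{-1}\|V\|_{L^2}$ per annihilation step, and the final number estimate on the $(n_+-n_-)$-boson sector), but it stops exactly at the step that carries the proof: you expand into a sum over Wick contraction schemes and then acknowledge that bounding each scheme and counting them gives far more than $C^n\sqrt{(n_+-n_-)!}$ when $n_\pm\sim n/2$. That is a genuine gap, not a detail: as written, the main route would fail, and your guessed rescue (that a contraction to a far-away creation label improves \emph{several} other resolvents) is not the mechanism the paper uses. The paper avoids the global combinatorics altogether by proceeding recursively, one factor $\R\V^{\sigma(\ell)}$ at a time, at the level of the kernel $\M_\ell(\bk,p)$ in the $L^2L^\infty$ norm of Definition \ref{def M state}. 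The decisive points are: (i) the second bound of Lemma \ref{lemma resolvent}, $|R_p(\bk_\ell)|\le C/(\ell m\mu)$, i.e.\ the ``bad'' resolvent bound improves linearly in the current boson number; and (ii) in Lemma \ref{lemma RV}, an annihilation step produces a sum of $\ell$ contraction terms, each multiplied by one resolvent on the $(\ell-1)$-boson sector, so the contraction sum costs only $\ell/(\ell-1)\le 2$ (this is exactly \eqref{M- 1}). Iterating these one-step bounds gives a product of $n-2$ factors $C\mu^{-1/2}\Vu$ or $C\mu^{-1/2}m^{-1}\|V\|_{L^2}$ acting on $\|\M_2\|\le C\mu^{-1}\Vu^2$, with no sum over schemes left to control.

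Your ``alternative'' inductive route is indeed the paper's strategy in embryo, but the way you describe it --- an inductive hypothesis ``strong enough to reproduce $\sqrt{(n_+-n_-)!}$'' --- suggests running the induction on Hilbert-space norms of the intermediate states, which is lossy: intermediate states have up to $\sim n$ bosons, and invoking the number estimate there produces factorials you cannot remove later. The missing device is precisely to defer the number estimate: the whole recursion is carried on $\|\M_\ell\|_{L^2L^\infty}$, and \eqref{number estimate} is applied only once, at the very end, to the final state with $n_+-n_-$ bosons, which is the sole source of the factor $\sqrt{(n_+-n_-)!}$ in \eqref{prop rem estimate}. With that bookkeeping in place the rest of your outline (source of $\mu^{-n/2}$, the split $\Vu^{n_+}\|V\|_{L^2}^{n_-}$, the role of $m\ge\mu^{-1}$) matches the paper's proof.
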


	\begin{remark}
		Two comments are in order 
		
		\begin{enumerate}[leftmargin=*]
			\item 
			The summability conditions implies $n_+ - n_- \geq 1 $. 
			On the other hand, $ n = n_- + n_+$. 
			Consequently, $2 n_- \leq  n -1  $  
			or equivalently
			$n_- \leq 		\lfloor	\frac{n-1}{2}  	 \rfloor $ 
			represents the  
			worst possible scenario, 
			regarding the growth with respect to $m$. 
			In particular, we obtain  the following bound that is  \textit{uniform} over  all  $\sigma  \in \Sigma(n) $  
			\begin{equation}
			\bigg\|	
				\prod_{i=1}^n	(\R  \V^{\sigma (i) })	 (\vp \otimes \Omega)				\bigg\|_{	 \mathscr H }
				\leq  
				\frac{C^n \sqrt{n! }		}{\mu^{n/2}}
				\bigg( \frac{1}{m}	\bigg)^{ 	\lfloor	\frac{n-1}{2}  	 \rfloor } 
				\Vu^n  
				\ . 
				\label{estimate n state}
			\end{equation} 
 where we used\footnote{
Indeed,  the 
Monotone Convergence Theorem implies  
 $ \| 	V\|_\mu \geq 2  \|	|k|^{-1/2}	 V\|_{L^2 }$ 
 for  $\mu$ large enough. 
 Additionally,   note that   $ C_0 :=  \|	|k|^{-1/2}	 V\|_{L^2 } / \|	V\|_{L^2}>0$
is  itself   by definition   a constant. Thus, take $C = 2C_0$.
} $ \|	V\|_{L^2} \leq  C \|	V\|_\mu$.  This bound, although not optimal with respect to $n\in \N$, 
			will be useful in the proof of the main theorem.

			\vspace{1mm }

			\item 
			Consider a $\mu$-dependent mass term $m = \mu^{- \delta  }$ where $ \delta \in (0,1) $
			and let $n$ be odd. 
			Then,  the   right hand side of \eqref{estimate n state} 
			is of order $\mu^{	- \frac{1}{2}	(	n  (1- \delta ) 	  -  \delta )	} $. 
			In particular, the rate gets better with the    order of iteration $ n \geq 1 $, but this growth   is modulated by $\delta $, i.e. the size of the mass term. 
			Further, the case $ m = \mu^{-1} $ is \textit{critical}:  
			our bound on the  rate of convergence does not improve with $n\geq 1 $ but rather stagnates.

		\end{enumerate}
		
	\end{remark}

	\subsection{Proof of Theorem \ref{thm 2}}\label{sec: proof thm2}
We now have all the necessary ingredients to prove
our main approximation result for massive fields.

	\begin{proof}[Proof of Theorem \ref{thm 2}]
		Let $\Psi_{	\eff}^{(n)} (t) = e^{- i t \h_{g_n}		}\vp \otimes \Omega$
		where the generator  $g_n$
		is chosen as in Definition \ref{definition gn}. 
		It  follows immediately from \eqref{eq 4 section 6},  Lemma \ref{lemma 1 section 6}
		and  $\|	\B(t) \|\leq 2 $    that 
		\begin{align} 
			\label{eq 8 section 6}
			\| 	\Psi^{(n)}_{	\eff}(t)    -  \Psi(t ) \|     \leq 
			2	\sum_{1 \leq j \leq n }   \|  	   (\R \V)^j \Psi_0  \| 
			+   \|	 \I (t) \Q_\Omega  \V ( \R \V )^n \Psi_0 	\| \ . 
		\end{align}

		Next, let us    note that since  $  n $ is odd,   the state $\Psi_n : = \Q \V (\R \V)^n\Psi_0 $
		has a trivial one-particle sector, and a non-trivial two-particle sector. In particular, $\P_\Omega \V \R \Psi_n = 0$.
		Thus, integrating by parts one more time (i.e. apply \eqref{eq 2 section 6}) leads to
		\begin{align}
			\label{I1}
		\I(t) \Psi_n 
			& =
			\B(t) \R \Psi_n  
			+ 
			\I(t)	 \V \R \Psi_n   =
			\B(t) \R \Psi_n  
			+ 
			\I(t)	\Q_\Omega \V \R \Psi_n  
\end{align}
Below we will use this identity in the case $\mu^{-1} \le m  \leq \mu^{-1 /2 } \Vu $, which we call the small mass case. Conversely, in the opposite large mass case $m \geq \mu^{-1 /2 }  \Vu  $, it is favourable to apply the integration by parts formula one more time, that is
\begin{align}	\label{I2}
			\I(t) \Psi_n 
			& 	  =
			\B(t) \R \Psi_n + \B(t) \R\V \R \Psi_n +
			\I(t) \V \R \V \R \Psi_n  \ . 
		\end{align}
For the remainder estimates, note that in the first case,  the relevant state 
		is   
		$\R \Psi_n = ( \R\V )^{n+1} \Psi_0 $
		which contains at most $n+1$ particles, an even number. 
		In the second case, the 
		relevant state is 
		$ \R \V \R \Psi_n = ( \R\V )^{n+2} \Psi_0 $
		which 
		contains at most $n+2$ particles, an odd number. 
		This on its turn implies that the least integer bracket in the estimate   \eqref{estimate n state}
		takes different values
		and thus makes the dependence on the mass term different. 
		
		\vspace{1mm}
		
		\textit{The small mass case}. 
		An    application of 	
		\eqref{eq 8 section 6}, 
		  \eqref{I1}  and 
		\eqref{estimate n state} 
		yields
		\begin{align}
			\| 	\Psi^{(n)}_{	\eff}(t)    -  \Psi(t ) \|     
			& 	\leq 
			2 	\sum_{ j =1 }^{n+1 }   \|  	   (\R \V)^j \Psi_0  \|  
			+ 
			\, 		 | 	t | \,  \mu^{1/2} \, 
			 \|	V\|_{L^2}  (n+2 )^{\frac{1}{2}}
			\|	 ( \R \V 	)^{n+1} \Psi_0	\| 	 	\\ 
			& \leq 
			\sum_{ j =1 }^{n+1 }  
			\frac{C^j \sqrt{j!}	}{\mu^{j/2 }} 
			\bigg(		 \frac{1}{m}			\bigg)^{	 \lfloor \frac{ j -1 }{2 } \rfloor 		}
		\Vu^j 
			+ 
\, 		 | 	t | \, 
\sqrt{ (n \! +\! 2)!}	  \, 
			\frac{C^{n+1}  }{\mu^{ \frac{n}{2}}} 
			\bigg(		 \frac{1}{m}			\bigg)^{	 \frac{n- 1 }{2}	}
		\Vu^{n+1}
			\nonumber 
		\end{align}
		where in the last line we have evaluated 
		$ \lfloor \frac{  n  }{2 } \rfloor  = \frac{n-1}{2}	$
		for $n$ odd. 
		The above bound records the dependence of the estimates with respect to $ n \in \mathbb N$. 
		Let us now simplify it. 
		First, without loss of generality we assume    $\mu^{-1/2}\Vu \le 1$  (otherwise
		   the bound  in 
 Theorem \ref{thm 2}  holds trivially). 
Next, for  the $t$-independent error  
 we estimate the summands   using $ C^j \sqrt{j!} \leq C^{n+1} \sqrt{(n+1)! } $ as well as the following inequalities (recall $n$ is odd):
\begin{align}
 \sum_{ j=1}^{n+1}
 			\frac{	\Vu^j }{\mu^{j/2 }} 
			\bigg(		 \frac{1}{m}			\bigg)^{	 \lfloor \frac{ j -1 }{2 } \rfloor 		}
	  \notag  
	&	 = 
	  \sum_{\substack{ j =1  \\ j\, \text{odd} }}^{n}
\frac{	\Vu^j }{\mu^{j/2 }}  
			\bigg(		 \frac{1}{m}			\bigg)^{	  \frac{ j -1 }{2 }  		} 
				+
		  		  \sum_{\substack{ j =2  \\ j\, \text{even } }}^{n+1 }
		  		\frac{	\Vu^j }{\mu^{j/2 }}  
		  		\bigg(		 \frac{1}{m}			\bigg)^{	  \frac{ j -2 }{2 }  		}   \notag\\
			&    = 
			  \sum_{\substack{ j =1  \\ j\, \text{odd} }}^{n}
			\frac{	\Vu^j }{\mu^{j/2 }}  
			\bigg(		 \frac{1}{m}			\bigg)^{	  \frac{ j -1 }{2 }  		} 
			+
\frac{ \Vu 	}{\mu^{1/2}}
			\sum_{\substack{ j =1  \\ j\, \text{odd } }}^{n  }
			\frac{	\Vu^j }{\mu^{j/2 }}  
			\bigg(		 \frac{1}{m}			\bigg)^{	  \frac{ j -1 }{2 }  		}   \notag\\
		 &   =  
\bigg( 1 + 		 \frac{\Vu }{\mu^{1/2}} \bigg) 
\sqrt{m}
		 	  \sum_{\substack{ j =1  \\ j\, \text{odd} }}^{n}
		 \frac{	\Vu^j }{\mu^{j/2 }}  
		 \nonumber 
		 \bigg(		 \frac{1}{m}			\bigg)^{	  \frac{ j   }{2 }  		}   \\ 
& 		 \leq 	
		 2 n  \bigg(	 
		 	\frac{\Vu 	}{\mu^{1/2}}			+ \frac{\Vu^n}{\mu^{n/2}}   
		 	\bigg(		 \frac{1}{m}			\bigg)^{	  \frac{ n -1 }{2 }  		} 
		 \bigg)	 \ . 
		 \label{estimate:lowmass}
\end{align}
In the first line,   we evaluated even and odd terms; in the second 
line  we shifted   the second sum $j\to j+1$ and in third  line we grouped terms. 
In the last line  we used 
 $\mu^{-1/2} \Vu \le 1$ and elementary geometric series estimates. 
This gives    (here we absorb $n  \leq C^n$)
\begin{align} 
\label{small mass case}
			\| 	\Psi^{(n)}_{	\eff}(t)    -  \Psi(t ) \|     
			&  \ 	\leq   \ 
			C^n	\,   \sqrt{n! } \, 
			\bigg(  \   
			\bigg[  
			\frac{\Vu 	}{\mu^{1/2}}		 \, 	+	\, 	 \frac{\Vu^n}{\mu^{n/2}}   
		\bigg(		 \frac{1}{m}			\bigg)^{	  \frac{ n -1 }{2 }  		} 
		\bigg] 
		   \ + \ 		 \frac{|t| 	}{	 \mu^{1/2}	}
						\frac{ \Vu^{n+1} }{    (	  \mu m	)^{  \frac{n-1}{2}}			 }
		 \ 	\bigg)
\end{align}
		
		\textit{The large mass case}. 
		An  application of \eqref{eq 8 section 6}, 
		\eqref{I2}
		and \eqref{estimate n state}
		yields
		\begin{align}
\label{large mass estimate}
			\| 	\Psi^{(n)}_{	\eff}(t)    -  \Psi(t ) \|     
			& 	\leq 
			2	\sum_{1 \leq j \leq n +2  }   \|  	   (\R \V)^j \Psi_0  \|  
			+ 
\, 		 | 	t | \,   \mu^{\frac{1}{2}}		\, 
			 \|	V\|_{L^2}  (n+3 )^{\frac{1}{2}}
			\|	 ( \R \V 	)^{n+2} \Psi_0	\| 	 	\\ 
			& \leq 
			\sum_{1 \leq j \leq n+2}
			\frac{C^j \sqrt{j!}	}{\mu^{j/2}} 
			\bigg(		 \frac{1}{m}			\bigg)^{	 \lfloor \frac{ j -1 }{2 } \rfloor 		}
		\Vu^j 
			+ 
\, 		 | 	t |  	\,
			\frac{C^{n+2} \sqrt{ (n \! +\! 3)!}	}{\mu^{ (n+1)/2  }} 
			\bigg(		 \frac{1}{m}			\bigg)^{	 \frac{n +  1 }{2}	}
		\Vu^{n+2}
			\nonumber 
		\end{align}
		where in the last line we have evaluated 
		$ \lfloor \frac{  n +1  }{2 } \rfloor  = \frac{n+1 }{2}	$
		for $n$ odd. 
		In order to simplify the  $t$-independent term  we use $C^j \sqrt{j!} \leq C^n \sqrt{n! }$
		for the summands. Additionally, we   use   the calculation \eqref{estimate:lowmass}
	and  include  an additional  term to the sum to find 
	\begin{equation}
		 \sum_{ j=1}^{n+2}
		\frac{	\Vu^j }{\mu^{j/2 }} 
		\bigg(		 \frac{1}{m}			\bigg)^{	 \lfloor \frac{ j -1 }{2 } \rfloor 		} 
		\leq 
			 2 ( n +1 ) \bigg(	 	\frac{\Vu 	}{\mu^{1/2}}			+ 
			 \frac{\Vu^{n+2 }}{\mu^{	 \frac{n+2}{2}	}}   
		\bigg(		 \frac{1}{m}			\bigg)^{	  \frac{ n + 1 }{2 }  		} 
		\bigg)	 
		\leq 
		C n  \frac{	 \Vu		}{\mu^{1/2}}
		\label{estimate:largemass}
	\end{equation}
where in the last inequality we use  that  for all $k \in \N$    in the large mass regime: 
\begin{equation}
	\label{mass}
	\frac{\Vu^{k }}{\mu^{  \frac{k}{2}	}}   
	\bigg(	 \frac{1}{m}		\bigg)^{	\frac{k -1}{2}	}   
 \ 	= \ 
	\bigg(	 \frac{	\Vu	}{ \mu^{1/2}m}  	\bigg)^{ \frac{k-1}{2}}
	\bigg(	 \frac{	\Vu	}{ \mu^{1/2}}  	\bigg)^{	 \frac{k+1}{2}	}
 \ 	\leq   \ 
	\frac{\Vu}{\mu^{1/2}}
\end{equation} 
 Thus, we	arrive at    (again, we absorb powers of $n$ in $C^n$): 
		\begin{align}
			\label{large mass case}
			\| 	\Psi^{(n)}_{	\eff}(t)    -  \Psi(t ) \|     
			& 
			\ 	\leq   \ 
  C^n \,  \sqrt{n!} \, 
			\bigg(
			\frac{\Vu}{\mu^{\frac{1}{2}}	}
			+
 \frac{ | 	t | 			}{\mu^{1/2}}
			\frac{ \Vu^{n+1 }}{   (	 \mu m	)^{  \frac{n-1}{2}}			 }
			\frac{	\Vu	}{\mu^{\frac{1}{2}} m }
			\bigg) \ . 
		\end{align}
		The proof of the theorem is finished
		once we combine \eqref{large mass case} and \eqref{small mass case}. 
	\end{proof}

\section{Resolvent estimates}
\label{section resolvent estim}
The main goal of this section
is to give a proof of Proposition \ref{prop rem}, stated    in Section \ref{subsection6.1}, 
which allows us to control the norm of states of the form \eqref{prod} in the case of massive fields.  
Thus,  we assume throughout    this section  
that the   dispersion relation  is   given by 
\begin{equation}
	\omega(k) =    \sqrt{ k^2 + m^2 } 
\end{equation}
with $m\ge \mu^{-1}$. The generator  is taken as in Definition \ref{def gn}, 
but the results in this Section apply to any non-negative bounded measurable function $g(p)  \geq  0$.

\vspace{1mm}

{ Throughout this section, 
the parameter  $0  \leq  \ve < 1/2$ is fixed, 
and $V$ is the form factor satisfying Condition \ref{condition 2}. 
} 
\vspace{1mm}

The organization of this section is as follows: 
In Subsection \ref{subsection resolvent}
we introduce 
relevant   resolvent functions.
In Subsection \ref{subsection example} 
we give  an example  of an estimate for a state \eqref{prod} 
consisting of only $\V^+$.
Finally, in Subsection \ref{subsection proof}
we turn to the proof of Proposition \ref{prop rem}.

\subsection{Resolvents}
\label{subsection resolvent}
Let us generalize the resolvents \eqref{resolvent 1}  and \eqref{resolvent 2}
that naturally emerged in the proof of Theorem \ref{thm 1}. 
To this end, let     $ n\geq 1$  and  $k_1 , \cdots, k_n \in \RE^3 $. 
We 
denote  $  \bk_n   \equiv  ( k_1, \ldots,  k_n)  \in \RE^{3n}$.  
For $  p \in \RE^3$, we introduce the following notations 
\begin{align}
	&  
\label{def Rp}
	\textstyle 
	R_p (\bk_n ) 
	: =
 (-1 ) 
	\bigg( \mu 
	\sum_{i=1}^n \omega (k_i) 
	+ 
	( p - \sum_{i=1 }^n k_i)^2 
	- p^2  + g(p)
	\bigg)^{-1 } 
\end{align}
Note that  $R_p(\bk_n)$
is symmetric with respect to the permutation of the variables $k_1, \cdots, k_n$. 
 
\vspace{2mm}

The following      estimates will be crucial in our analysis.  
We  remind the reader of
the notation 
$\omega_\mu (k) = \omega(k) + \mu^{-1  }$
and
$ 
\Vu 
\equiv \|	\omega_\mu^{-1 }	V\|_{L^2}$.

\begin{lemma}\label{lemma resolvent}
Let $ n \geq 1$. Then, 
the following   is  true for  all 
	$ |p| \leq \ve \mu$
	and  all 
	 $ 1  \leq i \leq n $:  There exists a constant $C>0$
such that 
for all 
  $ \bk_n   \in \RE^{3n}$  
 	\begin{equation}
 			\label{resolvent estimate 1}
 		| 	R_p(\bk_n)  | 
 		\leq 
 		\frac{ C }{ \mu 	  \omega_\mu(k_i) 		 }  , \qquad | 	R_p(\bk_n  ) |   
 		\leq 
 		\frac{C }{ n m \mu } \  . 
 	\end{equation}
 
\end{lemma}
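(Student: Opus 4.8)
The plan is to bound the modulus of the denominator of $R_p(\bk_n)$ from below, using only the energetic constraint $|p|\le\ve\mu$ with $\ve<\tfrac12$ — exactly the mechanism already exploited in the proof of Proposition \ref{prop B}. Write $K:=\sum_{i=1}^n k_i$ and expand the denominator in \eqref{def Rp} as
$$ D_p(\bk_n):=\mu\sum_{i=1}^n\omega(k_i)+K^2-2p\cdot K+g(p),\qquad R_p(\bk_n)=-D_p(\bk_n)^{-1}. $$
The terms $K^2$ and $g(p)$ are nonnegative and will simply be discarded. For the cross term I would use $|2p\cdot K|\le 2|p|\sum_i|k_i|\le 2\ve\mu\sum_i|k_i|$ together with the elementary inequality $|k_i|\le\sqrt{k_i^2+m^2}=\omega(k_i)$, which gives $|2p\cdot K|\le 2\ve\mu\sum_i\omega(k_i)$. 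Combining these observations yields
$$ D_p(\bk_n)\ \ge\ (1-2\ve)\,\mu\sum_{i=1}^n\omega(k_i)\ >\ 0, $$
the strict positivity coming from $\omega(k_i)\ge m>0$ and $\ve<\tfrac12$; in particular $|R_p(\bk_n)|=1/D_p(\bk_n)$.

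From this master lower bound the two claimed estimates follow by retaining different portions of the sum. For the second bound I keep all $n$ summands and use $\omega(k_i)\ge m$ for each $i$, so that $D_p(\bk_n)\ge(1-2\ve)nm\mu$. For the first bound I fix $i$ and keep only the $i$-th summand, $D_p(\bk_n)\ge(1-2\ve)\mu\,\omega(k_i)$; the replacement of $\omega(k_i)$ by the infrared-regularized weight $\omega_\mu(k_i)=\omega(k_i)+\mu^{-1}$ is where the massive hypothesis enters, since $\omega(k_i)\ge m\ge\mu^{-1}$ forces $\omega_\mu(k_i)\le 2\omega(k_i)$ and hence $D_p(\bk_n)\ge\tfrac12(1-2\ve)\mu\,\omega_\mu(k_i)$. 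Taking $C:=2/(1-2\ve)$ (a constant in the sense of the paper, since it depends only on $\ve$) covers both inequalities.

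I do not anticipate any real obstacle: the statement is a routine consequence of the smallness of $|p|$ relative to $\mu$, and the only point requiring a moment's care is tracking the regularized weight $\omega_\mu$ instead of $\omega$ — which is precisely why the assumption $m\ge\mu^{-1}$ cannot be dropped here, the bound $\omega_\mu\le 2\omega$ being false near $k_i=0$ in the massless case.
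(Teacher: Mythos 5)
Your proof is correct and follows essentially the same route as the paper: bound the denominator below termwise via $|2p\cdot k_i|\le 2\ve\mu\,\omega(k_i)$, discard the nonnegative $K^2$ and $g(p)$ to get $D_p(\bk_n)\ge(1-2\ve)\mu\sum_i\omega(k_i)$, then use $m\ge\mu^{-1}$ to trade $\omega$ for $\omega_\mu$ (first bound) or $\omega(k_i)\ge m$ summed over $i$ (second bound). No gaps; the only cosmetic difference is that the paper phrases the comparison as two-sided constants $C_1\omega_\mu\le\omega\le C_2\omega_\mu$, which is equivalent to your $\omega_\mu\le 2\omega$.
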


\begin{proof}
	The denominator on the right hand side of \eqref{def Rp} 
	can be  bounded below  using $ p \cdot k \leq \ve \mu |k|$ 
	and $g(p) \geq $ . We find that 
	\begin{align}
\textstyle
\nonumber  
		 \mu \sum_{j =1}^n \omega (k_j ) 
		+ 
		( p - \sum_{j=1 }^n k_j )^2 
		- p^2  + g(p)    
	& 	 =  
	\textstyle
    	 \sum_{j =1}^n  \big( \mu \, \omega (k_j )  -2 p \cdot k_j \big) 
		 + 
		 (  \sum_{j=1 }^n   k_j )^2 
   +  g(p)   \\  
  &  \textstyle \geq   C 	    \sum_{j =1}^n  \mu \omega (k_j )  \ . 
   \label{eq81}
	\end{align}
Next, 
note  that 
 we can use the bound $m \geq \mu^{-1}$
and  obtain that for constants $C_1$ and $C_2$
\begin{equation}
	C_1 \omega_\mu(k) \leq \omega (k) \leq C_2 \omega_\mu (k) \ . 
\end{equation}
Note also   that  \eqref{eq81} implies   $R_p(\bk) < 0$. 
This finishes the proof of the first bound in \eqref{resolvent estimate 1}, 
after  we use the trivial bound 
$ \sum_{j =1}^n   \omega_\mu  (k_j )\geq  \omega_\mu 	 (k_i) $.
The proof of the second bound in \eqref{resolvent estimate 1} is analogous, but uses the alternative bound  $\sum_{j =1}^n   \omega_\mu  (k_j )\geq  n m $ .
\end{proof}

\begin{remark}[Heuristics]
We are interested in the regime for which   
	 $  \omega_\mu^{-1 } V \in L^2$ 
	grows much slower than $\mu>0$ (for instance, logarithmically as in the Nelson model). 
	Consequently, 
	the generic bound that we use for the  
	resolvent function  is of the form  (here $n=1$)
\begin{align*}
  \mu^{1/2}  \|	R_p(k)  V(k)  	\|_{L^2 } \leq  \frac{C}{\mu^{1/2}} \Vu 
  \end{align*}
uniformly in $| p |\leq \ve \mu.$
	Thus, up to the $\Vu $ norm, 
	this  estimate extracts a \textit{full} factor $\mu^{-1/2}$.
	Hence, we refer to this estimate as a \textit{good resolvent bound}.

	On the other hand, in the presence of contractions of creation and annihilation operators, we will use the generic bound  (here $n=2$) 
\begin{align*}\label{eq:bad:res:estimate}
 \mu^{1/2}
	\bigg\| \int_{\R^3}  R_p	(k_1) 			    V(k_2) 					\psi (k_1 , k_2)  \d k_2  	\bigg\|_{L^2(\R_{k_1}^3) } 
	\leq 
	\frac{		C	}{		 \mu^{1/2} m } 
	\|	 	 V			\|_{L^2 }
	\|	 \psi			\|_{L^2}	
\end{align*}	
for $\psi  \in L^2(\RE^{6})$, which is only able to extract the possibly \textit{larger}  factor 
	$ \mu^{-1/2}m^{-1}$,
	uniformly in $|p| \leq \ve\mu.$
	Hence, we refer to these as \textit{bad resolvent bounds}. 
\end{remark}

\begin{remark}[Operator calculus]\label{remark operator}
	 The resolvent functions $R_p(\bk_n)$ 
	 define bounded operators  on  the spectral subspace $\1 (|p| \leq \ve \mu) L_X^2$ via 
		spectral calculus of $p = - i  \nabla_X$, 
		which we extend by zero to $L_X^2$. 
		Unless confusion arises
		we also denote these operators by $R_p( \bk_n)$.
		In particular, the estimates \eqref{resolvent estimate 1}
		are translated into operator norm estimates. 
\end{remark}

\begin{remark}\label{remark massless}
	The first bound in \eqref{resolvent estimate 1} 
	also apply in the massless case $m=0$
	provided the generator satisfies the lower bound $   g(p) \geq  C $.
	This only has the effect of changing the value of the constant in Lemma \ref{lemma resolvent}, and 
	we will use this observation in Appendix \ref{section mild}. 
\end{remark}

\subsection{An example}
\label{subsection example}
Before  we turn to the   estimates
of a general state of the form \eqref{prod}, 
let us warm up with an example that requires only good resolvent bounds. 
This state arises as the case where only $\V^+$  operators 
are considered, as no contractions are present. 

\vspace{1mm}

For  $ n \geq 1$ one can calculate the following  representation for the $n$-body state 
using commutation relations  
\begin{align}
	( 	\R \V^+  \cdots \R \V^+ )  (\vp \otimes \Omega)
	&  = 
	\mu^\frac{n}{2}
	\int_{\RE^{3k}} 
	e^{ - i  \sum_i k_i \cdot  X}
	\prod_{i=1}^n  V (k_i)    R_p (\bk_i) 
	\vp \otimes 
	b_{k_n}^* \cdots b_{k_1}^* \Omega  \d k_1 \cdots \d k_n \  . 
\end{align}
In particular, thanks to Lemma \ref{lemma resolvent}
and the  number estimate \eqref{number estimate}
we find that  
\begin{align}
\label{n body}
	\|			( 	\R \V^+  \cdots \R \V^+ )  (\vp \otimes \Omega)	\|_\mathscr{H}^2
	& 
	 \ 	\leq \ 
	n !   \mu^n  \, 
	\int_{\RE^{3(n+1)}}
	\prod_{i=1}^n  |  V (k_i) |^2 \| R_p (\bk_i)   \vp   \|_{L_X^2}^2 
	\d k_1 \cdots \d k_n  	\nonumber \\
&  \ 	\leq   \ 
	\frac{ C^n    n ! }{     \mu ^{n}  }
	\Vu^{2n}
	\   . 
\end{align}
Here, $C>0$ is  a constant independent of $n$.

\subsection{The general case}
\label{subsection proof}
In the previous example, 
there were only \textit{good resolvent bounds} involved because only creation operators $\V^+$ appeared. 
When annihilation operators  $\V^-$ are present, one needs to use also \textit{bad resolvent estimates} of the form \eqref{eq:bad:res:estimate}.
For such terms, 
we carry on the contractions
between creation- and annihilation-
operators
to estimate them. We now study this process.

\vspace{2mm }
In what follows,  
we denote $\overline{B}_{\ve \mu }	\equiv \{  p \in \RE^3 :  |p| \leq \ve \mu\}$. 
Our first step is to introduce an appropriate class of functions 
with a relevant norm. 
Namely, for $ n \in \mathbb N $
we consider     
\begin{equation}
	\M   \in 
	L^2 L^\infty 
	(  \RE^{dn}	 \times \overline{B}_{\ve \mu }	 	) 
	\equiv  
	L^2		\big(	\RE^{3n}	 ; 			L^\infty (  \overline{B}_{\ve \mu }	 	) 	\big)  
\end{equation}
which we equip with the norm 
\begin{equation}
	\|	 \M 	\|_{L^2 L^\infty } 
	: = 
	\bigg( \int_{\RE^3 }	 \Big( 
	\sup_{|p | \leq \ve \mu }	 | \M (  \bk_n,p ) 	|	
	\Big)^2 \d  k  		\bigg)^{1/2} \ .
\end{equation}
Analogously as in Remark \ref{remark operator}, 
these functions induce  bounded operators $\M (\bk_n,p)$
on $L_X^2$.

\vspace{2mm }

\textit{Example}. 
Up to irrelevant factors, these functions
should be regarded as the coefficients of the product states 
\eqref{prod}. 
For instance, in 
the previous example we have
\begin{equation}
	\label{n prod}
	( 	\R \V^+  \cdots \R \V^+ )  (\vp \otimes \Omega)
	=  
	\int_{\RE^{3k}} 
	e^{ - i  \sum_i k_i \cdot  X}
	\M_n (\bk_n,p )
	\vp \otimes 
	b_{k_n}^* \cdots b_{k_1}^* \Omega  \d k_1 \cdots \d k_n \  . 
\end{equation}
with $\M_n ( \bk_n ,p ) : =		\mu^\frac{n}{2} \prod_{i=1}^n  V (k_i)    R_p  (\bk_i)   $.

\vspace{2mm}
The specific example 
given above can be readily generalized in the following sense.

\begin{definition}
	\label{def M state}
	Let $ n \in \mathbb  N $.
	Then, to any 
	$ \M \in	L^2 L^\infty 
	(  \RE^{dn}	 \times  \overline{B}_{\ve \mu }	  ) $
	we associate the $n$-particle state  
	\begin{equation}
		\Phi_\M : = 
		\int_{\RE^{3k}} 
		e^{ - i  \sum_i k_i \cdot  X}
		\M (  \bk_n ,p)
		\vp \otimes 
		b_{k_n}^* \cdots b_{k_1}^* \Omega  \d k_1 \cdots \d k_n \  . 
	\end{equation}

\end{definition}

\begin{remark}
	For any     $\M \in	L^2 L^\infty 
	(  \RE^{dn}	 \times  \overline{B}_{\ve \mu }	 	)  $ 
	the number estimate \eqref{number estimate}  shows that 	 
	\begin{equation}
		\label{M bound}
		\| \Phi_{\mathcal{M}}	\|_\mathscr{H}
		\leq 
		\sqrt{n!}  \, 
		\|   \mathcal{M} 	\|_{L^2 L^\infty}    \ . 
	\end{equation}
	The bound \eqref{M bound}	
	motivates the introduction 
	of the space  $	L^2 L^\infty 
	(  \RE^{dn}	 \times\overline{B}_{\ve \mu }	  )  $.
	Let us also mention here  that we have 
	included the pre-factor
	$	 (-1)^n  	 e^{- i  \sum k_i \cdot X}$ only  for convenience. 
\end{remark}

Our next goal is to
study the action of $\R\V^+$
and $\R\V^-$
on states  $\Phi_\M$.
This calculation is recorded  in the following lemma, 
which is a straightforward consequence of the
definitions of $\V^+$, $\V^-$, $\R$ 
and the commutation relations.

\begin{lemma}\label{lemma RV}
	Let  $ \M \in	L^2 L^\infty 
	(  \RE^{dn}	 \times \overline{B}_{\ve \mu }	  )  $  and
	$\Phi_\M $   be as in Definition \ref{def M state}. 
	Then, 
	\begin{enumerate}[leftmargin=*]
		\item 
		For all $ n \geq 1 $	it holds that 
		$\R \V^+ \Phi_\M  = \Phi_{\M^+}$  where 
		\begin{equation} 
			\label{M+}
			\M_+ ( \bk_{n+1} ,p ) 
			: =   \mu^{1/2}
			V(k_{n+1})
			R_p ( \bk_{n+1}  )   \M(  \bk_n  ,p ) \ . 
		\end{equation}

		\item 
		For all $ n \geq 2$  it holds that 
		$\R \V^- \Phi_\M  = \Phi_{\M^-}$  where 
		\begin{equation} 
			\label{M-}
			\M_- (  \bk_{n-1} ,p) 
			: =   
			\mu^{1/2 }
			\sum_{ i=1}^n 
			\int_{\RE^3}    V(k_{n})   		 R_p  (   \bk_{ n-1 }) 
			\M(   \pi_i   \bk_n ,p  )   \d k_n 		\  , 
		\end{equation}
		where   we denote $\bk_n \equiv  (\bk_{n-1},k_n ) \in \RE^{dn}$ 
		and $\pi_i \in S_n $ is the transposition  that changes $k_i$ and $k_n$, i.e. 
		$\pi_i  \bk_n = (k_1 \cdots k_n \cdots k_i)$.  
	\end{enumerate}
\end{lemma}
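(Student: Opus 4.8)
The plan is to verify both identities by a direct computation: expand $\V^{\pm}$ into creation/annihilation operators, carry the new operator past those already present via the CCR, absorb the resulting momentum-shift phases, and then commute the resolvent $\R$ to the left using the relations already obtained in the proofs of Lemma \ref{lemma P} (see \eqref{p eq 2}) and Lemma \ref{lemma 1 section 6}.

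First I would treat part (1). Writing $\V^+ = \mu^{1/2}\int_{\RE^3} V(k_{n+1})e^{-ik_{n+1}\cdot X}b_{k_{n+1}}^*\,\d k_{n+1}$ and applying it to $\Phi_\M$, the new creation operator $b_{k_{n+1}}^*$ commutes with every $b_{k_i}^*$ and with the multiplier $\M(\bk_n,p)$ (which acts only on $L^2_X$), so it may be pushed all the way to the right against the vacuum, while the phase $e^{-ik_{n+1}\cdot X}$ merges with $e^{-i\sum_{i\le n}k_i\cdot X}$ into $e^{-i\sum_{i\le n+1}k_i\cdot X}$. This already displays $\V^+\Phi_\M$ as $\Phi_{\widetilde\M}$ with $\widetilde\M(\bk_{n+1},p)=\mu^{1/2}V(k_{n+1})\M(\bk_n,p)$, an $(n+1)$-particle state. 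To apply $\R$, I would use that $b_{k_{n+1}}^*\cdots b_{k_1}^*\Omega$ is a joint eigenstate of $T$ and $P$ with eigenvalues $\mu\sum_i\omega(k_i)$ and $\sum_i k_i$: pushing the phase $e^{-i\sum_i k_i\cdot X}$ to the left turns $\h_g(p+P)-p^2-T$ into $\h_g(p-\sum_i k_i+P)-(p-\sum_i k_i)^2-T$, which on this sector reduces to multiplication by $\h_g(p)-(p-\sum_i k_i)^2-\mu\sum_i\omega(k_i)$, whose inverse is exactly $R_p(\bk_{n+1})$ from \eqref{def Rp} (bounded on $\mathds 1(|p|\le\ve\mu)L^2_X$ by Lemma \ref{lemma resolvent}). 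Inserting it yields \eqref{M+}. Equivalently one may simply iterate \eqref{p eq 2} $n+1$ times, as is implicitly done for the pure-$\V^+$ state in Subsection \ref{subsection example}.

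For part (2) the only new feature is the contraction produced by $\V^-=\mu^{1/2}\int_{\RE^3} V(k)e^{ik\cdot X}b_k\,\d k$: one uses $b_k\,b_{k_n}^*\cdots b_{k_1}^*\Omega=\sum_{i=1}^n\delta(k-k_i)\,b_{k_n}^*\cdots\widehat{b_{k_i}^*}\cdots b_{k_1}^*\Omega$. Integrating out the delta, the $i$-th summand carries the factor $V(k_i)$, the phase $e^{ik_i\cdot X}e^{-i\sum_{j\le n}k_j\cdot X}=e^{-i\sum_{j\ne i}k_j\cdot X}$, and an $(n-1)$-particle state supported on $\{k_j\}_{j\ne i}$. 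Relabelling the dummy variables by the transposition $\pi_i=(i\,n)$ — so the contracted variable is renamed $k_n$ and the survivors become $\bk_{n-1}=(k_1,\dots,k_{n-1})$, with unit Jacobian — recasts the sum as $\Phi_{\widetilde\M_-}$ with $\widetilde\M_-(\bk_{n-1},p)=\mu^{1/2}\sum_{i=1}^n\int_{\RE^3}V(k_n)\M(\pi_i\bk_n,p)\,\d k_n$, an $(n-1)$-particle state. Applying $\R$ exactly as in part (1) — now the surviving bosons contribute $\sum_{j\le n-1}k_j$ and $\mu\sum_{j\le n-1}\omega(k_j)$ to $P$ and $T$ — produces the multiplier $R_p(\bk_{n-1})$ and gives \eqref{M-}; the hypothesis $n\ge 2$ is what guarantees that the intermediate state still carries a boson, so that $\Q_\Omega$ (hence $\R$) acts nontrivially on it.

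The computations are elementary; the step I expect to require the most care is the bookkeeping in part (2) — correctly tracking the contraction and the $\pi_i$-relabelling so that the arguments in \eqref{M-} come out exactly right — together with the standing domain caveat for the unbounded operator $\R$. For the latter I would simply note that Lemma \ref{lemma resolvent} makes all the multipliers $R_p(\bk_m)$ that occur bounded on $\mathds 1(|p|\le\ve\mu)L^2_X$, so the states $\Phi_{\M_\pm}$ of Definition \ref{def M state} are well defined, and the operator identities can be made rigorous by pairing with a dense set of vectors (or by working with the regularisation $\R(\epsilon)$ and letting $\epsilon\downarrow0$).
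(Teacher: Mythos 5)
Your computation is correct and is exactly the argument the paper intends: the lemma is stated there as a ``straightforward consequence of the definitions of $\V^+$, $\V^-$, $\R$ and the commutation relations,'' and your proof fills in precisely that route --- pulling the new $b^*_{k}$ or the contraction $b_k b_{k_n}^*\cdots b_{k_1}^*\Omega=\sum_i\delta(k-k_i)(\cdots)$ through, merging phases, relabelling via $\pi_i$, and evaluating $\R$ on the momentum sector via the shift relation \eqref{p eq 2}, with the sign convention of \eqref{def Rp} coming out right. The bookkeeping in part (2) and the remarks on $n\ge 2$ and on the domain of $\R$ (boundedness of $R_p(\bk_m)$ on $\mathds 1(|p|\le\ve\mu)L_X^2$, or the $\R(\epsilon)$ regularisation) are all consistent with the paper's treatment, so nothing is missing.
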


\begin{remark}
	Using the  notation  of   Lemma \ref{lemma RV},  
	we find  that  thanks to the resolvent estimates of Lemma \ref{lemma resolvent} 
	the following upper bounds  hold:  for $ n \geq 1 $
	\begin{equation}
		\label{M+ 1}
		\|	 \M_+		\|_{L^2 L^\infty}
		\leq 
		\frac{	 C 	}{\mu^{1/2}	}
		\Vu 
		\|	 \M	\|_{L^2 L^\infty}
	\end{equation}
	and for $ n \geq 2 $ (here we use $n/(n-1) \leq 2$)
	\begin{equation}
		\label{M- 1}
		\|	 \M_-		\|_{ L^2 L^\infty  }
		\leq 
		\frac{	  C  	}{\mu^{1/2} m	}
		\|	 V\|_{L^2}	  
		\|	 \M	\|_{L^2 L^\infty }     \  , 
	\end{equation}
	where for the $\M_-$ bound we  made use of  the  Cauchy-Schwarz inequality. 
\end{remark}

We now have all the ingredients
that we need to   
estimate  the  norm of a general state of the form \eqref{prod}, i.e.
we  turn to the proof of 
Proposition \ref{prop rem}. 
Let us remind the reader of the notation 
\begin{equation}
	\Sigma(n)  = 
	\textstyle 
	\{  \sigma \in \{+1  , -1 \}^n	 \  : \  	
	\sum_{i=1}^j  \sigma (i)  \geq 1 	 \ \forall  j = 1 , \ldots, n	\} \  , \qquad n \in  \mathbb N \ . 
\end{equation}
as well as   $\V^{+1} \equiv \V^{+}$, 
$\V^{-1 } \equiv \V^-$ and
when $\sigma$ is known from context 
$n_\pm  \equiv  | \{i :  \sigma(i) = \pm 1 \}|$.

\begin{proof}[Proof of Proposition \ref{prop rem}]
	Note that for all $\sigma \in \Sigma(n)$ it holds that 
	$\sigma(1) = \sigma(2) = +1$. Then, 	let us start the proof by considering     the  following two-body state (see e.g.  Definition \ref{def M state})
	\begin{equation}
		\R \V^+\R \V^+ (\vp \otimes \Omega) = 
		\Phi_{\M_2}  
		\qquad
		\t{with}
		\qquad 
		\M_2 ( \bk_2 ,p  )=   \mu V(k_1)  V (k_2)  R_p( k_1 , k_2) R_p(k_1). 
	\end{equation} 
	In particular,  
	$
	\|	 \M_2	\|_{L^2 L^\infty}
	\leq 
	C \mu^{-1 }	
	\Vu^2 
	$
	thanks to the resolvent estimate \eqref{resolvent estimate 1}. 
	
	\vspace{2mm}
	
Fix now $\sigma\in \Sigma(n)$. 	We define inductively 
	a sequence $(\M_{\ell})_{\ell=2}^n$ as follows.
	Namely, for $\ell =2 $ we set $\M_2$ as above.
	Assume that $\M_\ell$ for $\ell \geq 2$ has been defined.
	We now define  in  terms of the notation of Lemma \ref{lemma RV}
	\begin{equation}
		\M_{\ell+1} =  \begin{cases}
			(\M_\ell)_+  &  , \quad \sigma(\ell+1) = +1 \\
			(\M_\ell)_-   &  , \quad \sigma(\ell+1) =  - 1 \\
		\end{cases} 
	\end{equation}
	In particular, the construction of the function $\M_n$  is so that  
	\begin{equation}
		\prod_{i=1}^n	(\R  \V^{\sigma (i) })	 (\vp \otimes \Omega) = 
		\Phi_{\M_n}  \ . 
	\end{equation} 
	Further, with the kernels defined in this way, we realize thanks 
	to estimates 
	\eqref{M+ 1} and \eqref{M- 1}
	that 
	$	 \|	 \M_{\ell  }	\|_{L^2 L^\infty}
	\leq   
	f(\sigma, \ell , \mu )    \|	 \M_{\ell -1   }\|_{L^2 L^\infty}$
	where we denote 
	\begin{equation}
		\label{f}		f(\sigma, \ell , \mu ) : = 	 
\frac{C}{\mu^{1/2}}
		\begin{cases} 
			\Vu
			\qquad &  \t{if} \quad  \sigma(\ell ) = + 1 
			\\
m^{-1 }
			\|	  	 V	\|_{L^2}  
			\qquad &   \t{if} \quad  \sigma(\ell ) = -1 
			\\
		\end{cases}
	\end{equation}
	for all $  \ell   \geq  2$. 
	Here,   $C>0$ is chosen as the maximum constant between the 
	constants that appear in \eqref{M+ 1} and \eqref{M- 1}. 
	This estimate can be iterated from $\ell = n $ to $\ell = 3 $ and we obtain 
	\begin{equation}
		\|	 \M_{ n  }	\|_{L^2 L^\infty}
		\leq   
		\prod_{ r = 3}^{ n}		 f(\sigma,  r , \mu )    \|	 \M_{ 2   }\|_{L^2 L^\infty}   \ . 
	\end{equation}
	Finally, we notice that $\Phi_{\M_n}$
	is a state of $  M  :  = n_+  - n_-$ particles.
	Thus, 
	it suffices now to  use  the number estimate 
	$\|	 \Phi_{\M_n }	\| \leq 	 \sqrt{M  ! }	\|	 \M_n \|_{L^2 L^\infty}$,
	the initial bound  
	$
	\|	 \M_2	\|_{L^2 L^\infty}
	\leq 
	C \mu^{-1 }	
	\Vu^2
	$, 
	and the      formula \eqref{f}. 
\end{proof}

\appendix
\section{Effective dynamics for massless fields: Revisited}
\label{section mild}
In this section, we consider massless models
and  iterate the algorithm  in the proof of Theorem \ref{thm 1} one more time. 
Our goal is two-fold.  First, we show 
for the Nelson model 
that it is possible 
to improve the time scale given by Theorem \ref{thm 1} 
by a logarithmic factor.
Secondly, 
we study   massless models
whose form factors
are less singular
than the Nelson model.

\vspace{1mm}

The following family of models
serves as an archetype  for both situations
\begin{equation}
	V_a(k) = |k|^{-a } \1 (|k |\leq \Lambda) \label{Va}  \ , 
\end{equation}
where $a \in [0,1/2]$. 
It will be convenient 
to introduce  the following scale  of norms, quantifying infrared behaviour
\begin{equation}
	\n	V	\n_{s ,\mu}  : = 		 \|	  \omega_\mu^{-s }	V	\|_{L^2}  \  , \qquad s>0 \ . 
\end{equation}
Note that our old norm fits into this scale as follows $\|	V	\|_{\mu} = \| V	\|_{1, \mu}$. 
The reader should keep in mind  that the following estimates hold for
the models  \eqref{Va}
\begin{align}
\label{Va bounds}
 & 	\|	V_a	\|_{1,\mu} \leq C (\log \mu)^{1/2} 
 & &   \t{and}
  & &   	\|	 V_a\|_{2,\mu}  \leq C \mu 	
  &   &    \t{ for } a = 1/2	\\ 
	& \|	V_a	\|_{1,\mu} \leq C 		 
 &  & 	\t{and}
	 & &    		 \|	V_a	 \|_{2,\mu} \leq C \mu^{\frac{1}{2}+a} 
  & 	   & 	\t{ for } a \in [0,1/2) \ . 
\end{align}

In our next theorem, we study the dynamics of massless fields and obtain an approximation that is valid 
over slightly longer time scales, in comparison to the results of Theorem \ref{thm 1}. In order to keep the next statement to a reasonable length, 
we have chosen to   
consider   only the form factors  given by   \eqref{Va}, although our results apply 
to more general  interactions 
if one wishes to keep track of the norms $\|V\|_{s,\mu}$.

\begin{theorem}
	\label{thm 3}
Let $\omega(k)=|k|$  and	   $ V = V_a$ as in \eqref{Va} with $ a \in [0,1/2]$.
Let $\Psi_0$ satisfy Condition \ref{condition 1}.  
Then, 	there exists a constant $C>0$ 
	such that for all $t \in \RE $
	and   $\mu>0$
	large enough 
	\begin{align}
& 	\|	 \Psi (t)		-		\Psi_\eff (t) 	\|_\H 
\leq 
C 
\bigg(
\frac{\log\mu}{\mu}
\bigg)^{	1/2 }
(1 + |t|) 	 & & \t{ for } a = 1/2	\ , 	\\
& 
	\|	 \Psi (t)		-		\Psi_\eff (t) 	\|_\H 
\leq  
	 \frac{C}{\mu^{1-a}}
(1 + |t|) 
& & \t{ for } a  \in [0,1/2)	\ .  
	\end{align}
\end{theorem}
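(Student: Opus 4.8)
The idea is to run the integration-by-parts expansion from Section \ref{section prelim} one order deeper than in the proof of Theorem \ref{thm 1}. Recall from \eqref{step1} that, after one application of the formula and using that the choice of $g$ in Definition \ref{def g1} kills the term $\P_\Omega\V\R\V\Psi_0+g(p)\Psi_0$, one is left with
\begin{equation}
e^{itH}\big(\Psi(t)-\Psi_\eff(t)\big)=\B(t)\R\V\Psi_0+\I(t)\Q_\Omega\V\R\V\Psi_0.
\end{equation}
In Theorem \ref{thm 1} the second term was bounded crudely via $\|\I(t)\|\le|t|$, producing the factor $|t|\,\n V\n_\mu^2/\mu^{1/2}$. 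The plan is instead to apply Proposition \ref{prop integration by parts} again to the two-particle state $\Psi_2:=\Q_\Omega\V\R\V\Psi_0=\Q_\Omega\V^+\R\V^+\Psi_0$. Since $\Psi_2$ has trivial one-particle sector, $\P_\Omega\V\R\Psi_2=0$, so the integration-by-parts formula gives
\begin{equation}
\I(t)\Psi_2=\B(t)\R\Psi_2+\I(t)\Q_\Omega\V\R\Psi_2,
\end{equation}
and hence
\begin{equation}
\|\Psi(t)-\Psi_\eff(t)\|_\H\le\|\B(t)\R\V^+\Psi_0\|+2\|\R\V^+\R\V^+\Psi_0\|+\|\I(t)\Q_\Omega\V\R\V^+\R\V^+\Psi_0\|.
\end{equation}
The first two terms are exactly the ones estimated in Propositions \ref{prop B} and \ref{prop Q} (the $|t|$-free part), bounded by $C\mu^{-1/2}\n V\n_\mu+C\mu^{-1}\n V\n_\mu^2$. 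The new, $t$-dependent term is the one to work on.

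For the new term, I would use the observation \eqref{obs} from Remark \ref{remark obs}: since $\Q_\Omega\V\R\V^+\R\V^+\Psi_0$ lies in the $n$-particle sectors with $n\le 3$, we get
\begin{equation}
\|\I(t)\Q_\Omega\V\R\V^+\R\V^+\Psi_0\|_\H\le C(1+|t|\,\mu^{1/2}\|V\|_{L^2})\,\|\R\V\R\V^+\R\V^+\Psi_0\|_\H.
\end{equation}
Expanding $\V=\V^++\V^-$ in the outermost factor, the $\V^+$ piece produces $\R\V^+\R\V^+\R\V^+\Psi_0$, which by the argument in Section \ref{subsection example} (iterated once more) is bounded by $C\mu^{-3/2}\n V\n_\mu^3$ — note this again extracts a full $\mu^{-1/2}$ per resolvent. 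The $\V^-$ piece produces one contraction, i.e. a state of the form $\R R_p(k_1)\int V(k)\psi(k_1,k)\,dk$, which one estimates using the \emph{bad} resolvent bound \eqref{eq:bad:res:estimate}: here $\omega(k)=|k|$, $m=0$, but using $g(p)\ge C_1>0$ (Remark \ref{remark bounds g}, Remark \ref{remark massless}) the denominator is still $\ge C(\mu|k|+1)$, so one pays a factor $\|V\|_{L^2}\cdot\|\omega_\mu^{-1}V\|_{L^2}\cdot\mu^{-1/2}$ rather than $\|V\|_{L^2}$ alone — in terms of the scale $\n\cdot\n_{s,\mu}$, this is $\mu^{-1/2}\|V\|_{2,\mu}\,\|V\|_{L^2}$ after the Cauchy–Schwarz step, since the contraction forces a factor $\omega_\mu^{-2}$ in one variable. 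Collecting, the new term is bounded by $C(1+|t|)\big(\mu^{-3/2}\n V\n_\mu^2\n V\n_{2,\mu}+\mu^{-1}\n V\n_{2,\mu}^2\n V\n_{1,\mu}+\dots\big)$ up to harmless $\|V\|_{L^2}$ factors; specializing via \eqref{Va bounds} to $V=V_a$ and simplifying gives, for $a=1/2$, a bound of order $(1+|t|)(\log\mu/\mu)^{1/2}$ after noting $\mu^{-3/2}\cdot(\log\mu)\cdot\mu=\mu^{-1/2}\log\mu$ and $\mu^{-1}\cdot\mu^2\cdot(\log\mu)^{1/2}$ is too big unless one keeps the full resolvent structure; and for $a<1/2$ a bound of order $(1+|t|)\mu^{-(1-a)}$, since $\|V_a\|_{2,\mu}\le C\mu^{1/2+a}$ so $\mu^{-3/2}\mu^{1/2+a}=\mu^{-(1-a)}$.

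The main obstacle is the careful bookkeeping of the contracted terms: after expanding $\V^+$ and $\V^-$ through \emph{two} layers, one must show that in every arising term the total power of $\mu^{-1/2}$ extracted, weighted against the appropriate norm $\n V\n_{s,\mu}$ of the form factor (which grows like $\mu^{1/2+a}$ for the bad $s=2$ norm but only logarithmically or is bounded for $s=1$), yields the claimed rate. Concretely, one must verify that contractions never cost more than one \emph{bad} resolvent bound per contraction and that the resulting $\omega_\mu^{-2}$ singularity is integrable against $|V_a|^2$ uniformly — which is exactly the content of the bounds \eqref{Va bounds}, valid precisely because $a\le 1/2$. I would organize this as a short lemma listing the relevant norm bounds (mirroring Lemma \ref{lemma resolvent} and the $\M$-calculus of Section \ref{subsection proof}, but now in the massless case with the lower bound $g(p)\ge C_1$ replacing the mass gap), and then assemble the three pieces as above; invoking Theorem \ref{thm 1}'s triangle-inequality structure, the proof of Theorem \ref{thm 3} follows.
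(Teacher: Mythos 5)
Your overall strategy is the paper's: iterate the integration-by-parts formula beyond the proof of Theorem \ref{thm 1}, estimate the uncontracted strings with good resolvent bounds and the contracted ones with bad resolvent bounds (valid in the massless case because $g(p)\ge C_1>0$), and then specialize via \eqref{Va bounds}. But there is a genuine gap at the decisive step, namely the treatment of the contracted one-particle term. After your second integration by parts you handle $\I(t)\Q_\Omega \V\R\V^+\R\V^+\Psi_0$ by invoking Remark \ref{remark obs}, which internally uses the crude bound $\|\V(\calN+1)^{-1/2}\|\le \mu^{1/2}\|V\|_{L^2}$ and thus multiplies the norm of $\R\V^{\pm}\R\V^+\R\V^+\Psi_0$ by $|t|\,\mu^{1/2}\|V\|_{L^2}$. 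For the $\V^+$ branch this is harmless, but for the contracted branch one has (cf. Proposition \ref{prop E}(iii)) $\|\R\V^-\R\V^+\R\V^+\Psi_0\|\le C\mu^{-3/2}\n V\n_{2,\mu}\n V\n_{1/2,\mu}^2$, so your scheme yields a contribution of order $|t|\,\mu^{-1}\n V\n_{2,\mu}\n V\n_{1/2,\mu}^2$, which for $a=1/2$ is of order $|t|$ (no gain over Theorem \ref{thm 1}) and for $a<1/2$ is of order $|t|\,\mu^{a-1/2}$, not the claimed $|t|\,\mu^{a-1}$. Your own "collecting" estimate betrays this: the term $\mu^{-1}\n V\n_{2,\mu}^2\n V\n_{1,\mu}$ you list is of size $\mu(\log\mu)^{1/2}$ at $a=1/2$, which you acknowledge is "too big", and even your favourable term gives $\mu^{-1/2}\log\mu$, not $(\log\mu/\mu)^{1/2}$.

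What is missing is a \emph{third} integration by parts applied specifically to the one-particle state $\I(t)\V^-\R\V^+\R\V^+\Psi_0$, rather than the blanket use of Remark \ref{remark obs}. This is exactly what the paper does: expanding once more and splitting $\V=\V^++\V^-$ produces the five terms of Proposition \ref{prop E}. The point is that the dangerous contracted state then reappears either with an additional \emph{good} outer factor $\R\V^+$, giving term (iv) $\le C\mu^{-2}\n V\n_{1,\mu}\n V\n_{2,\mu}\n V\n_{1/2,\mu}^2$ so that multiplication by $\mu^{1/2}|t|$ yields $|t|\mu^{-1/2}(\log\mu)^{1/2}$ at $a=1/2$ (resp. $|t|\mu^{a-1}$ for $a<1/2$), or as a vacuum-sector state (term (v)), to which the integration-by-parts formula no longer applies but which is only multiplied by $|t|$ (no $\mu^{1/2}\|V\|_{L^2}$), giving $|t|\mu^{-1}\n V\n_{1,\mu}^2\n V\n_{1/2,\mu}^2$. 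Without this extra step and the corresponding separate estimates (i)--(v), the claimed rates are not reached, so as written your argument does not close.
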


\begin{remark} 
Let us now comment on the consequences that Theorem \ref{thm 3} has for the observed time scales. 
\begin{enumerate}[leftmargin=*]
		\item 
	For $a = \frac{1}{2}$ (the massless Nelson model)
 we have an effective approximation
for time scales
 $$\, 		 | 	t | \,  \ll  		\Big(  \frac{\mu}{\log\mu }	\Big)^{1/2}$$
which  improves the result of  Theorem \ref{thm 1}
by a factor $(\log \mu )^\frac{1}{2}$. 
\item 
	For $ 0 \leq  a <  \frac{1}{2}$ 
 we have
an effective approximation for time scales 
$$\, 		 | 	t | \,  \ll \mu^{1-a  }\   $$
	which improves the result of Theorem \ref{thm 1} by a factor $\mu^{1/2-a}. $
	\item 
The reader may wonder 
if it possible to extend the massless approximation to even longer time scales.
Currently, our methods do not allow 
for such an extension unless
one introduces a suitable infrared cut-off on the interaction potential $V$. In particular, we cannot prove a result similarly to Theorem \ref{thm 2} for the massless case. 
\end{enumerate} 
\end{remark}

\begin{proof}
	Similarly as in the proof of
	Theorem \ref{thm 1}, 
	we consider  	
	\begin{equation} 
		\|	 \Psi(t)		-		\Psi_\eff (t)	 \| 
		\leq  
		\|    \B(t) \R \V \Psi_0		\|
		+ 
		\|	 \I(t)	\Q_\Omega \V \R\V \Psi_0 	\|   
		\leq 
		\frac{ C \n	V \n_\mu 		  }{\mu^{1/2}}  
		+ 
		\|	 \I(t)	\Q_\Omega \V \R\V \Psi_0 	\|    , 
		\label{eq 3}
	\end{equation}
	where we have employed  Proposition \ref{prop B}
	to estimate the first boundary term. 
	Our next goal is to give a more precise estimate of the error term
	\begin{equation}
		\label{E}
		\E (t)  \equiv 		\I(t)	 \Q_\Omega \V \R \V \Psi_0   = 	\I(t)	  \V^+ \R \V^+ \Psi_0   \  , 
	\end{equation}
	where in the second line we have 
	kept only the non-zero contributions. 
	Making use of the integration by parts formula  \eqref{int by part formula} we find 
	\begin{align}
		\nonumber
		\E (t)   
		& = 	\I(t)	  \V^+ \R \V^+ \Psi_0  		\\
		\nonumber
		& =   \Big(   \B(t) \R   + \I(t) \V \R  \Big)   \V^+ \R \V^+ \Psi_0  	 \\ 
		\nonumber
		& = 
		\B(t) \R   \V^+ \R \V^+ \Psi_0 
		+ \I(t)  \V \R    \V^+ \R \V^+ \Psi_0  	  \\ 
		& = 
		\B(t) \R   \V^+ \R \V^+ \Psi_0 
		+ \I(t)  \V^+ \R    \V^+ \R \V^+ \Psi_0 
		+ \I(t)  \V^- \R    \V^+ \R \V^+ \Psi_0  \ . 
		\label{E eq 1}
	\end{align}
	Observe that the last term of the right hand side of \eqref{E eq 1} contains an operator $\V^-$. 
	This has the effect of introducing contractions between creation and annihilation operator, 
	which lead to worse infrared behaviour. 
	However, these are still  one-particle states  and 
	may be integrated by parts one more time. Namely, we find
	\begin{align}
		\nonumber
		\E (t)   
		& = 
		\B(t) \R   \V^+ \R \V^+ \Psi_0 
		+ \I(t)  \V^+ \R    \V^+ \R \V^+ \Psi_0 
		+ \B(t) \R   \V^- \R    \V^+ \R \V^+ \Psi_0   \\ 
		\nonumber
		& \quad + 
		\I(t) \V \R  \V^- \R    \V^+ \R \V^+ \Psi_0 \\ 
		\nonumber
		& = 
		\B(t) \R   \V^+ \R \V^+ \Psi_0 
		+ \I(t)  \V^+ \R    \V^+ \R \V^+ \Psi_0 
		+ \B(t) \R   \V^- \R    \V^+ \R \V^+ \Psi_0   \\ 
		&  \quad + 
		\I(t) \V^+ \R  \V^- \R    \V^+ \R \V^+ \Psi_0 	+	
		\I(t) \V^- \R  \V^- \R    \V^+ \R \V^+ \Psi_0  
		\label{E eq 2}
	\end{align}
	where in the last line we decomposed $\V = \V^+ + \V^-$ . 
	Next, we
	make use of the observation form Remark \ref{remark obs}, 
	as well as the bounds $ \|	\B(t )\| \leq 2 $
	and $\|	\I(t)\| \leq |t|$
	to find  that  
	\begin{align}
\nonumber 
		\|	\E(t) 	\| 
	& 	\leq 
		C 
		\|			 \R \V^+ \R \V^+ \Psi_0	\|
		+  C (1 + \, 	 \mu^{1/2}	 | 	t |  ) 	 		\|		 \R \V^+ 	 \R \V^+ \R \V^+ \Psi_0	\|
		 + C 		\|		 \R \V^- 	 \R \V^+ \R \V^+ \Psi_0	\|	\\
& 		 +  C  (1 + \, 			 \mu^{1/2}  | 	t |  )	\|	 \R \V^+ 	 \R \V^-  	 \R \V^+ \R \V^+ \Psi_0	\|
		 + C \, 		 | 	t |  \,  \|	  \V^- 	 \R \V^-  	 \R \V^+ \R \V^+ \Psi_0	\| 
	\end{align}
	for some constant $C>0$.
		All the terms in the expansion for $\E(t)$ are separately estimated 
	in Proposition \ref{prop E}.
	The proof of the theorem is finished once we 
	gather these estimates back in  \eqref{eq 3}
	and use  \eqref{Va bounds} to collect the leading order terms
	in the two different regimes for $ a \in[0,1/2]$. 
\end{proof}

\begin{proposition}\label{prop E}
	There is  a constant $C>0$ such that 	the following five estimates hold. 
	\begin{enumerate}[label=(\roman*),leftmargin=*]
		\item  
		$\|	\R   \V^+ \R \V^+ \Psi_0\|_\H 
		\leq 
		C\mu^{-1  } 
		\n V \n^2_{1,\mu }  
		$ 
		
		\item  
		$	\|	\R \V^+ \R    \V^+ \R \V^+ \Psi_0 				\|_\H 
		\leq 
		C \mu^{-3/2 }  	\n V \n_{1,\mu}^3  $

		\item 	
		$	\|	  \R   \V^- \R    \V^+ \R \V^+ \Psi_0	\|_\H 		
		\leq 
		C \mu^{-3/2 }
		\n V \n_{2 ,\mu}  			
		\n V \n_{1/2 ,\mu}^2 			  
		$

		\item 
		$\|	\R \V^+ \R  \V^- \R    \V^+ \R \V^+ \Psi_0 		\|_\H 	 
		\leq
		C \mu^{-2  }  
		\n V \n_{1,\mu }
	\n V \n_{2 ,\mu}  			
	\n V \n_{1/2 ,\mu}^2 	
		$

		\item 
		$	\|	 \V^- \R  \V^- \R    \V^+ \R \V^+ \Psi_0  		\|_\H 
		\leq
		C  \mu^{-1  }  
	\n V \n_{1 ,\mu}^2  
	\n V \n_{1/2 ,\mu}^2   \ . 
		$
	\end{enumerate}	
\end{proposition}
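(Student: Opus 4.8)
The plan is to handle all five states inside the $\M$-function formalism of Section~\ref{subsection proof}. Each of them is, up to a phase and (for (v)) a final application of $\V^-$ that lands in $\mathrm{Span}(\Omega)$, of the form $\Phi_\M$ with $\M\in L^2L^\infty(\RE^{3n}\times\overline{B}_{\ve\mu})$, so by the number estimate \eqref{M bound} it suffices to control $\|\M\|_{L^2L^\infty}$. The rules by which $\R\V^\pm$ transforms $\Phi_\M$ are exactly those of Lemma~\ref{lemma RV}, a purely algebraic identity not involving the mass. The only input special to the massless case is that, by Remark~\ref{remark massless} together with the lower bound $g(p)\ge C_1>0$ (Remark~\ref{remark bounds g}), the good resolvent bound $|R_p(\bk_n)|\le C/(\mu\,\omega_\mu(k_i))$ of Lemma~\ref{lemma resolvent} is still available, whereas there is no analogue of the bad bound $C/(nm\mu)$. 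Hence whenever a contraction occurs, every infrared weight has to be produced by pairing a form factor with a power of $\omega_\mu^{-1}$ under an integral, and tracking which of $\n V\n_{1/2,\mu}$, $\n V\n_{1,\mu}$, $\n V\n_{2,\mu}$ this generates is the crux.

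Estimates (i) and (ii) involve only creation operators, hence no contractions. Starting from the one-boson coefficient $\M_1(k_1,p)=\mu^{1/2}V(k_1)R_p(k_1)$, which satisfies $\|\M_1\|_{L^2L^\infty}\le C\mu^{-1/2}\n V\n_{1,\mu}$, we iterate the bound \eqref{M+ 1} (it uses only the good resolvent estimate and so remains valid here): each $\R\V^+$ multiplies the $L^2L^\infty$-norm by $C\mu^{-1/2}\n V\n_{1,\mu}$ (recall $\n V\n_\mu=\n V\n_{1,\mu}$). Two steps and \eqref{M bound} with $n=2$ give (i); three steps and $n=3$ give (ii).

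For (iii), (iv) and (v) the single $\V^-$ produces exactly one contraction, and one first forms the two-boson coefficient of $\R\V^+\R\V^+\Psi_0$, namely $\M_2(\bk_2,p)=\mu\,V(k_1)V(k_2)\,R_p(k_1,k_2)R_p(k_1)$ (up to the harmless momentum shifts produced when commuting the phases $e^{-ikX}$ past the resolvents, which keep $p$ in a ball $|p|\le\ve'\mu$ with $\ve'<1/2$ since $\supp V$ is compact, so the uniform bounds of Lemma~\ref{lemma resolvent} still apply). Contracting one boson via Lemma~\ref{lemma RV}(2) yields the one-boson coefficient of $\R\V^-\R\V^+\R\V^+\Psi_0=\Phi_{\M_3}$ as a sum of two contraction terms,
\begin{equation*}
\M_3(k_1,p)=\mu^{3/2}\,V(k_1)\,R_p(k_1)\Big[\int_{\RE^3}V(k_2)^2\,R_p(k_1,k_2)R_p(k_2)\,\d k_2+R_p(k_1)\int_{\RE^3}V(k_2)^2\,R_p(k_1,k_2)\,\d k_2\Big].
\end{equation*}
Routing the good bound $|R_p(\bk_n)|\le C/(\mu\,\omega_\mu(k_i))$ so that one factor $\omega_\mu(k_1)^{-1}$ is pulled out of the $k_2$-integral while the weight $\omega_\mu(k_2)^{-1}$ is left inside against $V(k_2)^2$ (which is precisely what yields $\n V\n_{1/2,\mu}^2$), one finds that the bracket is $\le C\mu^{-2}\,\omega_\mu(k_1)^{-1}\,\n V\n_{1/2,\mu}^2$ uniformly, hence
\begin{equation*}
\sup_{|p|\le\ve\mu}|\M_3(k_1,p)|\le \frac{C}{\mu^{3/2}}\,\frac{|V(k_1)|}{\omega_\mu(k_1)^2}\,\n V\n_{1/2,\mu}^2 .
\end{equation*}
From this single estimate the three bounds follow. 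For (iii), $\|\Phi_{\M_3}\|_\H=\|\M_3\|_{L^2L^\infty}\le C\mu^{-3/2}\|\omega_\mu^{-2}V\|_{L^2}\,\n V\n_{1/2,\mu}^2=C\mu^{-3/2}\n V\n_{2,\mu}\n V\n_{1/2,\mu}^2$. For (iv), one further $\R\V^+$ and \eqref{M+ 1} multiply $\|\M_3\|_{L^2L^\infty}$ by $C\mu^{-1/2}\n V\n_{1,\mu}$, and \eqref{M bound} with $n=2$ gives the claim. For (v), $\V^-\Phi_{\M_3}$ is the vacuum state $\mu^{1/2}\big(\int_{\RE^3}V(k_1)\,\M_3(k_1,p)\,\d k_1\big)\vp\otimes\Omega$, and $\mu^{1/2}\int_{\RE^3}|V(k_1)|\sup_{|p|\le\ve\mu}|\M_3(k_1,p)|\,\d k_1\le C\mu^{-1}\n V\n_{1/2,\mu}^2\int_{\RE^3}V(k_1)^2\omega_\mu(k_1)^{-2}\,\d k_1=C\mu^{-1}\n V\n_{1,\mu}^2\n V\n_{1/2,\mu}^2$; here the extra $V(k_1)$ carried by $\V^-$ pairs with the $V(k_1)\omega_\mu(k_1)^{-2}$ already present in $\M_3$, which is why the surviving norm is $\n V\n_{1,\mu}^2$ rather than $\n V\n_{2,\mu}\|V\|_{L^2}$.

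The step I expect to be the main obstacle is exactly this massless infrared bookkeeping of the contraction terms: with no mass to furnish a uniform gap, one must channel the finitely many available powers of $\omega_\mu^{-1}$ so that the integral over the contracted momentum converges with weight $\n V\n_{1/2,\mu}$, while the powers left on the uncontracted momentum combine with its form factor into $\n V\n_{2,\mu}$ when that momentum survives and into $\n V\n_{1,\mu}^2$ when it, too, is contracted out. The remainder --- commuting $b_k,b_k^*,e^{\pm ikX}$ past $\R$, the $\sqrt{n!}$ counting of \eqref{M bound}, and verifying that the phase-induced momentum shifts keep $p$ in a slightly enlarged ball --- is routine.
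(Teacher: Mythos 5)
Your proposal is correct and follows essentially the same route as the paper: the same $L^2L^\infty$ coefficient formalism of Section \ref{subsection proof}, the same explicit contraction kernel (your $\M_3$ coincides with the paper's $\M_1$ in \eqref{M1}), and the same routing of the good resolvent bound $|R_p(\bk_n)|\le C/(\mu\,\omega_\mu(k_i))$ with a judicious choice of the index $i$ to produce the norms $\n V\n_{1/2,\mu}$, $\n V\n_{1,\mu}$, $\n V\n_{2,\mu}$. The only cosmetic differences are that the paper obtains (i)--(ii) by directly citing the $n$-body estimate \eqref{n body}, and that your appeal to compact support of $V$ to control momentum shifts is unnecessary, since the resolvent functions $R_p(\bk_n)$ already incorporate the shift $(p-\sum_i k_i)^2-p^2$ and are only ever evaluated on the spectral subspace $\1(|p|\le\ve\mu)L_X^2$ fixed by Condition \ref{condition 1}.
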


In the following proof we  will make use of   the   
estimates in Section \ref{section resolvent estim}
that were developed for massive fields,
 but apply  here  as well as long as no mass terms appear. 
 See Remark \ref{remark massless}. 
We also use the notation 
for resolvents $R_p(k)$, $R_p(k_1, k_2)$ etc.
introduced   in  Section \ref{subsection resolvent}.

\begin{proof}[Proof of Proposition \ref{prop E}]
		\textit{(i)} 
 Apply   \eqref{n body} with $n=2$.

	\noindent 	\textit{(ii)} 
	 Apply   \eqref{n body} with $n=3 $.  
	
\noindent 	\textit{(iii)}
	The state can be calculated explicitly using standard  commutation relations
	as well as the contraction of the field operators
 $		b_{k_3} b_{k_2}^*  b_{k_1}^* \Omega = 
\delta (k_3 - k_2)
b_{k_1}^* \Omega 
+ 
\delta (k_3 - k_1)
b_{k_2}^* \Omega  $.   We find that (here $\int = \int_{\RE^{6}} \d k_1 \d k_2$) 
	\begin{align} 
		\label{V-++}
		\R \V^-  	\R \V^+ \R \V^+ \Psi_0 				
		& \,  =  \, 
				\mu^{3/2}
		\int 		   V(k_1)	 | 	  V(k_2)  |^2
		e^{-i  k_1 X }
		R_p (k_1)^2 R_p ( k_1, k_2)	   	\vp \otimes  
		b_{k_1}^*	\Omega     
		\\ 
		& + 
				\mu^{3/2}
		\int 		 |   V(k_1) |^2	   	  V(k_2)  
		e^{-i  k_2 X }
		R_p (k_1)	R_p (k_2) R_p ( k_1, k_2)	  	\vp \otimes  
		b_{k_2}^*	\Omega   
	  \, 	=:		\,  \Phi_{\M_{1}}
		\nonumber
	\end{align}
Here  
	we have written 
 the one-particle state	$	\R \V^-  	\R \V^+ \R \V^+ \Psi_0 				$
 in terms of $\Phi_{\M_{1}}$ (see Definition \ref{def M state})
 with coefficients
 \begin{align}
\nonumber 
 	\M_{1}    ( k_1,p)   \ : =   \ 
&  	\mu^{3/2 }V(k_1)R_p (k_1)^2 
 	\int |V(k_2)|^2 R_p(k_1,k_2) \d k_2 \\ 
&  	+
 		\mu^{3/2 }
 	 	V(k_1)R_p (k_1) 
 	\int |V(k_2)|^2  R_p(k_2) R_p(k_1,k_2)   \d k_2  \ . \label{M1}  
 \end{align}
	Finally, we   borrow  estimates from Section \ref{section resolvent estim}.
	An application 
	of \eqref{M bound} and \eqref{resolvent estimate 1} imply that 
	\begin{equation}
		\|		\R \V^-  	\R \V^+ \R \V^+ \Psi_0 				\|
		\leq  \|	 	\M_{1}	\|_{L^2 L^\infty}
		\leq 	      
C \mu^{-3 /2 }
		\|	 \omega_\mu^{-2}	  V \|_{L^2}
		\|	   \omega_\mu^{-\frac{1}{2}}		  V \|_{L^2}^2    \ . 
		\label{M-++}
	\end{equation}

	\noindent 	\textit{(iv)}
In the notation of \textit{(iii)}  and Lemma \ref{lemma RV} we note that 
$ \R  \V^+ \R  \V^- \R    \V^+ \R \V^+ \Psi_0 	 = 	 \R \V^+ \Phi_{\M_{1}}	= \Phi_{ \M_1^+	} $. 
Consequently, 
\begin{equation}
			\|	  \R  \V^+ \R  \V^- \R    \V^+ \R \V^+ \Psi_0	\|	
 =   
 \|	 \Phi_{	 \M_1^+	} 	\|
 \leq 
 \|		 \M_1^+  \|_{L^2 L^\infty }
			\leq 
C \mu^{-1/2  }
	\n V \n_{1,\mu}
 \|	 {\M_{1}}		\|_{L^2 L^\infty} \ . 
\end{equation}
It suffices to combine the previous estimate with \eqref{M-++}
to finish the proof.

	\vspace{2mm}
	\noindent 	\textit{(v)}  
	We use the representation \eqref{V-++}
	and   act on it with  $\V^- = \int   V(k) e^{ikX}b_k \d k $  to find
	\begin{equation}
		 \V^- \R  \V^- \R    \V^+ \R \V^+ \Psi_0    = 
		 \mu^{\frac{1}{2 }}
		 \bigg(	 \int_{\RE^3} V(k) \M_1( k,p) \d k 	\bigg) \vp \otimes \Omega \ . 
	\end{equation}
Next, we note that for all $|p| \leq \ve \mu$ 
we may estimate thanks to the explicit expression \eqref{M1} 
and the resolvent estimates \eqref{resolvent estimate 1}: 
\begin{equation}
\bigg\|
	\int_{\RE^3} 
	\mu^{1/2}
 V(k) \M_1 ( k,p) \d k 
\bigg\| 
 \leq 
 \frac{C}{\mu }
	\n V \n_{1,\mu}^2
	\n V \n_{1/2,\mu}^2 \ . 
\end{equation}
The proof is then finished once we combine  the last two  displayed estimates. 
\end{proof}

\vspace{3mm}

\noindent \textbf{Acknowledgments}. 
E.C. is deeply grateful to Robert Seiringer for his  hospitality at ISTA, without which this project  would not have been possible. E.C. is   thankful    to Thomas Chen for valuable comments and for pointing out useful  references. 
E.C  gratefully acknowledges support from the Provost’s Graduate Excellence Fellowship at The University of Texas at Austin and from the NSF grant DMS- 2009549, and the NSF grant DMS-2009800 through T. Chen.

\vspace{3mm}

\noindent \textbf{Data availability.} This manuscript has no associated data.
\vspace{3mm}

\noindent 
\textbf{Conflict of interest.} 
The authors state  that there is no conflict of interest.

\end{document}